\theoremstyle{plain} %remark,
\newtheorem{theorem}{Theorem}[section]
\newtheorem{lemma}[theorem]{Lemma}
\newtheorem{corollary}[theorem]{Corollary}
\newtheorem{proposition}[theorem]{Proposition}
\newtheorem{remark}[theorem]{Remark}
\tikzstyle{filled vertex}  = [{circle,blue,draw,fill=black!50,inner sep=1pt}]  % regular vertex
\newcommand{\codecomment}[1]{\textbackslash\!\!\textbackslash {\em #1}}
\newcommand{\lp}[1]{\ensuremath{{\mathtt{lp}(#1)}}}
\newcommand{\rp}[1]{\ensuremath{{\mathtt{rp}(#1)}}}
\title{Recognizing (Unit) Interval Graphs by \\Zigzag Graph Searches}
\author{Yixin Cao\thanks{Department of Computing, Hong Kong Polytechnic University, Hong Kong, China. \href{mailto:yixin.cao@polyu.edu.hk} {\tt yixin.cao@polyu.edu.hk}. 
    {Supported in part by the Hong Kong Research Grants Council (RGC) under grants 15201317 and 15226116, and the National Natural Science Foundation of China (NSFC) under grant 61972330.}
  }
}
\date{}
\begin{document}
\maketitle

\begin{abstract}
  Corneil, Olariu, and Stewart [SODA 1998; SIAM Journal on Discrete Mathematics 2009] presented a recognition algorithm for interval graphs by six graph searches.  Li and Wu [Discrete Mathematics \& Theoretical Computer Science 2014] simplified it to only four.  The great simplicity of the latter algorithm is however eclipsed by the complicated and long proofs.  The main purpose of this paper is to present a new and significantly short proof for Li and Wu's algorithm, as well as a simpler implementation. We also give a self-contained simpler interpretation of the recognition algorithm of Corneil [Discrete Applied Mathematics 2004] for unit interval graphs, based on three sweeps of graph searches.
Moreover, we show that two sweeps are already sufficient.  Toward the proofs of the main results, we make several new structural observations that might be of independent interests.
  % keywords: (unit) interval graph, chordal graph; clique path; module; end vertex, lexicographical breadth-first search
\end{abstract}

\thispagestyle{empty}
\setcounter{page}{0}

\newpage
\section{Introduction}\label{sec:intro}

If we take the 26 monarchy rulers of China, France, and United Kingdom (England and Scotland before 1707) during the period between 1661 and 1900, mark their reigns in the timeline, and make a graph to indicate the intersection relationships between their reigns, we end with the graph in Figure~\ref{fig:example}.  
Formally, a graph is an \emph{interval graph} if its vertices can be assigned to intervals on the real line such that there is an edge between two vertices if and only if their corresponding intervals intersect.  As in the opening example, interval graphs can be used to represent, among others, relations of a temporal nature.  Although it is very easy to draw a graph out of a set of intervals, the other direction, reconstructing an interval representation from a given interval graph, is a challenging task.  Indeed, it is already very nontrivial to decide whether a graph is an interval graph or not.
For example, it is not that obvious why the graph in Figure~\ref{fig:example} is an interval graph from the graph itself, without the background information.

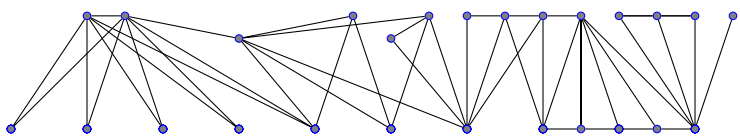
\begin{figure}[h]
  \centering\small
    \begin{tikzpicture}[every node/.style={filled vertex}, yscale=1.5]
      \foreach[count=\i] \j in {2.5, 5.5, 6.5, 7.5, 8.5, 9, 10, 10.5} 
        \node (c\i) at (\j, 2) {};
      \foreach[count=\i] \j/\y in {2/2, 4/1.8, 6/1.8, 7./2, 8/2, 8.5/1, 9.5/1, 9.5/2} {
        \node (f\i) at (\j, \y) {};
      \foreach[count=\i] \j in {1, ..., 10} 
        \node (e\j) at ({\i }, 1) {};
      }
      \foreach \i in {1, ..., 5} 
      \draw (c1) -- (e\i) -- (f1); % 
      \draw (e5) -- (c2) -- (e6);
      \draw (e6) -- (c3) -- (e7);
      \draw (e5) -- (f2) -- (e6) (f2) -- (e7);

      \draw (c4) -- (e8) -- (c5);
      \draw (f4) -- (e7) -- (f5) (c4) -- (e7) -- (f3);
      \draw (e9) -- (c5) -- (f7);  % Daoguang
      
      \draw (c5) -- (e10) -- (c6) (c7) -- (e10) -- (c8) (f7) -- (e10) -- (f8);  % Victoria
      \draw (c6) -- (f8) -- (c7);  % Victoria
      \draw (f5) -- (e8) -- (f6); % George IV      
      \draw (f6) -- (e9) -- (f7); % William IV      
      \draw (f1) -- (c1) -- (f2) -- (c2)  (f2) -- (c3) -- (f3); %
      \draw (f4) -- (c4) -- (f5) -- (c5); %
      \draw (f6) -- (c5) -- (f6)  (c6) -- (f8) -- (c7); % 
      % \draw (v3)-- (v4)-- (v6)-- (v7) (v6)-- (w)--(v4) (v0) -- (w);
    \end{tikzpicture}
    \caption{The intersection graph of the reigns of 26 rulers.  Each vertex represents a monarchy ruler, and an edge indicates the overlap of the reigns of the two rulers involved.}
  \label{fig:example}
\end{figure}

Let us briefly relate the history of recognition algorithms for interval graphs.  Throughout we use $n$ to denote the number of vertices in the input graph.
An interval graph has at most $n$ maximal cliques, and Fulkerson and Gross~\cite{fulkerson-65-interval-graphs} observed that they can be arranged in a linear manner, called a \emph{clique path}, such that every vertex is contained in a consecutive set of them.
For example, the graph in Figure~\ref{fig:bfs-example} has five maximal cliques, which correspond to the five integers in the representation: In particular, the $i$th clique comprises vertices whose intervals containing point $i$.
All the maximal cliques of an interval graph can be found in linear time \cite{rose-76-vertex-elimination}.  (We know the answer is ``no'' if more than $n$ maximal cliques were found.)  Earlier recognition algorithms for interval graphs tried to orient the maximal cliques.  A na{\"i}ve implementation would take $O(n^3)$ time.
Booth and Lueker \cite{booth-76-pq-tree} invented a complicated data structure to capture the ordering of the maximal cliques, and showed how to use this data structure to recognize interval graphs in linear time.
Their approach was simplified by Korte and M{\"o}hring \cite{korte-89-recognizing-interval-graphs} and Hsu and McConnell~\cite{hsu-03-pc-trees}.
Another natural approach is to start from a clique tree of the input graph, defined similarly as a clique path and can be constructed in linear time \cite{rose-76-vertex-elimination}, and try to transform it into a path.
Both Hsu and Ma~\cite{hsu-99-recognizing-interval-graphs} and Habib et al.~\cite{habib-00-LBFS-and-partition-refinement} took this approach.

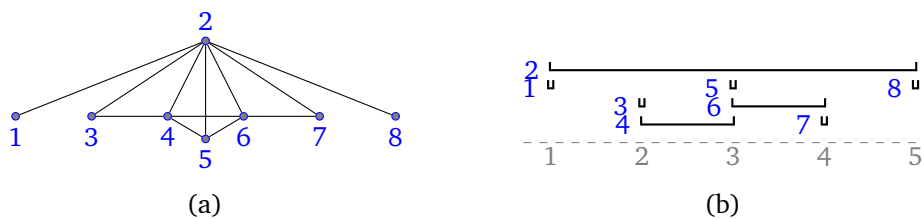
\begin{figure}[h]
  \centering\small
  \begin{subfigure}[b]{.35\linewidth}
    \centering
    \begin{tikzpicture}[every node/.style={filled vertex}, scale=1]
      \node ["$2$"] (v0) at (3.5, 2) {};
      \foreach[count=\i] \j in {1, 3, 4, 6, 7, 8} {
        \node ["$\j$" below] (v\j) at (\i, 1) {};
        \draw (v\j)-- (v0);
      }
      \node["$5$" below] (w) at (3.5, 0.7) {};
      \draw (v3)-- (v4)-- (v6)-- (v7) (v6)-- (w)--(v4) (v0) -- (w);
    \end{tikzpicture}
    \caption{}
  \end{subfigure}  
  \quad
  \begin{subfigure}[b]{.45\linewidth}
    \centering
    \begin{tikzpicture}[scale=1.2]
      \begin{scope}[every path/.style={{|[right]}-{|[left]}}]
        \foreach \x in {1, ..., 5} \node[gray] at (\x-1, -.15) {$\x$};
        \foreach[count=\i] \l/\r/\y/\c in {-.02/.02/3/, 0/4/4/,
          0.98/1.02/2/, 1/2/1/, 1.98/2.02/3/, 2/3/2/,
          2.98/3.02/1/, 3.98/4.02/3/}{
          \draw[\c, thick] (\l-.02, \y/5) node[left, blue] {$\i$} to (\r+.02, \y/5);
        }
      \end{scope}
      \draw[dashed, gray] (-.3, 0) -- (4.2, 0);
    \end{tikzpicture}
    \caption{}
  \end{subfigure}  
  \caption{(a) An interval graph and (b) its interval representation.}
  \label{fig:bfs-example}
\end{figure}

All the aforementioned algorithms are indirect, in the sense that they first check whether the input graph is chordal---a graph is \emph{chordal} if every induced cycle in it is a triangle---and continue only when the answer is yes.
Since any induced cycle longer than three cannot have an interval representation, all interval graphs are chordal.
Rose et al.~\cite{rose-76-vertex-elimination} has developed
a linear-time algorithm to decide whether a graph is chordal, and if yes, a clique tree can be built in the same time.
These maximal cliques are the starting point of aforementioned algorithms.
Since the structures of interval graphs are simpler than those of chordal graphs, one may consider the indirect approaches counterintuitive in a sense.

The algorithm of Rose et al.~\cite{rose-76-vertex-elimination} is based on a refined version of breadth-first search, called lexicographical breadth-first search (\textsc{lbfs} for short).
The output of \textsc{lbfs} is an ordering of the vertices, which satisfies certain property if and only if the graph is chordal.
Interestingly, the algorithms of Hsu and Ma~\cite{hsu-99-recognizing-interval-graphs} and Habib et al.~\cite{habib-00-LBFS-and-partition-refinement} heavily rely on \textsc{lbfs}.
Since a linear ordering of vertices is more perspicuous in interval graphs than in chordal graphs, it seems natural to also take the ordering approach to design a direct recognition algorithm for interval graphs.
Given an interval representation for an interval graph $G$, if we sort all the vertices by the left endpoints of their intervals in the representation, then for any three vertices in order, the adjacency of the first and the third vertices would force the adjacency of the first two vertices; see, e.g., the ordering $\langle 1, 2, \ldots, 8\rangle$ for the graph in Figure~\ref{fig:bfs-example}.
It turns out that
a graph $G$ is an interval graph if and only if $G$ admit an ordering with the this property \cite{ramalingam-88-domination-interval-graphs}, hence called an \emph{interval ordering}.

Since every interval ordering is an \textsc{lbfs} ordering, it seems very natural to use \textsc{lbfs} to produce an interval ordering of the input graph.  On the other hand, an \textsc{lbfs} ordering of an interval graph can be completely different from an interval ordering.  A necessary condition for an \textsc{lbfs} ordering to be an interval ordering is that it needs to start from certain vertices.
Simon~\cite{simon-91-interval} was probably the first who tried this approach.
After an initial \textsc{lbfs}, he conducted multiple sweeps of a variation of \textsc{lbfs}, called \textsc{lbfs}$^{+}$, each of which uses the outcome of the previous one to help break ties; in particular, it starts from the last vertex of the previous sweep.  Although fatally flawed, Simon's algorithm inspired fruitful exploration in this direction.  (According to \cite{corneil-09-lbfs-strucuture-and-interval-recognition}, Ma also made similar attempts.)  The real breakthrough was made by Corneil et al.~\cite{corneil-09-lbfs-strucuture-and-interval-recognition}, who observed that \textsc{lbfs}  and \textsc{lbfs}$^{+}$ may not be sufficient, and proposed one more variation of \textsc{lbfs}.  The development of this algorithm was not smooth: The four-sweep algorithm claimed in the conference version \cite{corneil-98-interval-recognition}, titled \textit{The ultimate interval graph recognition algorithm?}, turned out to be over-optimistic.  After significant reworking, they managed to make an algorithm that conducts six sweeps of \textsc{lbfs} and variations, of which the last sweep is very complicated.

Li and Wu~\cite{li-14-lbfs-interval-recognition} greatly simplified the algorithm of Corneil et al.~\cite{corneil-09-lbfs-strucuture-and-interval-recognition}.  They devised yet another variation of \textsc{lbfs} to catch the key concept of Corneil et al.~\cite{corneil-09-lbfs-strucuture-and-interval-recognition}.  Although their algorithm is greatly simple, the argument is excruciatingly complicated.  Even experts in interval graphs sometimes find difficulty in following the sequence of 32 propositions, with almost no explanation in between.  The main purpose of this paper is thus to give a short proof for this algorithm, with only cosmetic modifications to the algorithm itself.

An important subclass of interval graphs are unit interval graphs, those having interval representations in which all the intervals have the same length.  (Image a remote planet, where the president of each nation serves a six-year term, with no possibility of re-election.  Then the tenures of all the presidents form a unit interval graph.)  Unit interval graphs are also well studied, and there have been several recognition algorithms proposed in literature.  We also include a self-contained presentation of the recognition algorithm of Corneil~\cite{corneil-04-recognize-uig} for unit interval graphs, and our purpose is twofold.  First, we present both algorithms with a similar core idea, and thus the algorithm for unit interval graphs can be a warm-up of the main algorithm we are going to present.\footnote{This sentence is anachronistic.  The algorithm in~\cite{corneil-04-recognize-uig} was actually inspired by that for interval graphs~\cite{corneil-98-interval-recognition}.}   An oversimplistic summary of the algorithms for unit interval graphs and interval graphs is that they work by breaking, respectively, fake twins and fake modules.
Although the full details of the main result of this paper can be understood with Section~\ref{sec:uig} skipped, we strongly suggest the reader to take this detour.  Second, compared to the algorithm that is simple enough for a freshman, the analysis of the original paper \cite{corneil-04-recognize-uig} heavily relies on previous work.  Thus, a self-contained presentation might be worthwhile for its own.
(We have to, nevertheless, resist this attempt to give a self-contained presentation for Li and Wu's algorithm, because it would make the paper too long.  Since procedure \textsc{lbfs}$^{+}$ is well studied and well understood, we will cite results on them from literature, while proving everything else in the paper.)

During the presentation of the algorithms, we prove several new lemmas; as far as I can check, they have not been explicitly stated in literature.  Although none of the new lemmas can be called major, they do simplify and streamline a lot of important results.  For example, our characterizations of all (unit) interval orderings imply a simple two-sweep recognize algorithm for unit interval graphs.  We also prove the uniqueness of unit interval ordering of unit interval graphs with no true twins, discovered by Deng et al.~\cite{deng-96-proper-interval-and-cag}, and the uniqueness of clique paths of interval graphs with no nontrivial modules, discovered by Hsu~\cite{hsu-95-recognition-cag}.
We hope our new observations can shed more light on the algorithms and on (unit) interval graphs in general.

As a final remark, the success of conducting multiple runs of \textsc{lbfs} to recognize (unit) interval graphs had inspired a series of algorithms based on a similar approach.  We refer the interested reader to the survey of Corneil~\cite{corneil-04-survey-lbfs}.

\section{Preliminaries}\label{sec:lbfs}

All graphs discussed in this paper are undirected and simple.  The vertex set and edge set of a graph $G$ are denoted by, respectively, $V(G)$ and $E(G)$.
For a subset $U\subseteq V(G)$, denote by $G[U]$ the subgraph of $G$ induced by $U$, and by $G - U$ the subgraph $G[V(G)\setminus U]$.
The \emph{neighborhood} of a vertex $v$, denoted by $N(v)$, comprises vertices adjacent to $v$, i.e., $N(v) = \{ u \mid uv \in E(G) \}$, and the \emph{closed neighborhood} of $v$ is $N[v] = N(v) \cup \{ v \}$.
The \emph{closed neighborhood} and the \emph{neighborhood} of a set $X\subseteq V(G)$ of vertices are defined as $N[X] = \bigcup_{v \in X} N[v]$ and $N(X) =  N[X] \setminus X$, respectively.  A \emph{clique}\index{clique} is a set of pairwise adjacent vertices, and a clique is maximal if it is not proper subset of another clique.  A graph $G$ is \emph{complete} if $V(G)$ is a clique.  We say that a vertex $v$ is \emph{simplicial} if $N[v]$ is a clique; such a clique is necessarily maximal. 

A set of intervals representing an interval graph $G$ is called an \emph{interval representation} for $G$, where the interval for vertex $v$ is $I(v)$.  In this paper, all intervals are closed.  An example is given in Figure~\ref{fig:bfs-example}.  (Most authors would also require every interval to have a positive length.  However, it is more convenient for us to allow zero-length intervals.)
If for each of $n$ (not necessarily distinct) left endpoints of the $n$ intervals, we take the set of vertices whose intervals contain this point, then we end with $n$ cliques.  We leave it to the reader to verify that they include all the maximal cliques of $G$.  If we list the distinct maximal cliques from left to right, sorted by the endpoints that we use to define these cliques, then we can see that for any $v\in V(G)$, the maximal cliques containing $v$ appear consecutively.  We say that such a linear arrangement of maximal cliques a \emph{clique path} of $G$.  On the other hand, given a clique path $\langle K_{1}, K_{2}, \ldots, K_{\ell}\rangle$ for an interval graph $G$ with $\ell$ maximal cliques, for each vertex $v$ we can define an interval $[\lp{v}, \rp{v}]$, where $\lp{v}$ and $\rp{v}$ are the indices of the first and, respectively, last maximal cliques containing $v$.  One may easily see that they define an interval representation for $G$; see, e.g., Figure~\ref{fig:bfs-example}.  Therefore, a graph $G$ is an interval graph if and only if $G$ has a clique path.
As a consequence of this construction, every interval graph admits an interval representation in which all the endpoints are integers between $1$ and $n$.
One may note that an interval $I(v)$ obtained in this way has length zero if and only if $v$ is simplicial.

Yet another characterization of interval graphs is through vertex orderings.
An \emph{ordering} $\sigma$ of the vertex set of a graph $G$ is a bijection from $V(G) \rightarrow \{1, \ldots, n\}$.
If we scan the intervals in a given interval representation by their left endpoints, in non-decreasing order, we obtain an ordering of the vertices.  Without loss of generality, let it be $\langle v_1$, $v_2$, $\ldots$, $v_n\rangle$.  If for some $i$ and $k$ with $1\le i < k\le n$, vertices $v_i$ and $v_k$ are adjacent, then the right endpoint of $I(v_i)$ is larger than the left endpoint of $I(v_k)$.  By the ordering, for every $j$ with $i < j < k$, the left endpoint of $I(v_j)$ is contained in $I(v_i)$.  In other words, for any triple of integers $i, j, k$ with $1\le i <j <k\le n$, if $v_i v_k\in E(G)$ then $v_i v_j \in E(G)$ as well.
An ordering of $V(G)$ having this property is called an \emph{interval ordering} of $G$ \cite{ramalingam-88-domination-interval-graphs}.
On the other hand, it is also easy to derive an interval representation from an interval ordering $\sigma$, by setting $I(v_i) = \left[ i, j \right]$, where $v_j$ is the last neighbor of $v_i$ in $\sigma$.  Therefore, interval representations, clique paths, and interval orderings (equipped with the index of the last neighbor of each vertex) are essentially equivalent, and they can be transformed into each other without explicitly building the graph.  The direct transformation between a clique path and an interval ordering, to be introduced in Section~\ref{sec:algorithm}, will be used to argue the correctness of the algorithm.
However, one should be careful that both interval representations and clique paths have reflection symmetry, while the reversal of an interval ordering is not an interval ordering in general.  The interval ordering corresponding to the mirror image of an interval representation is the ordering of right endpoints, in non-increasing order; e.g., $\langle 2, 8, 6, 7, 4, 5, 3, 1\rangle$ for the graph in Figure~\ref{fig:bfs-example}.

Given an ordering $\sigma$ of $V(G)$, we can verify whether $\sigma$ is an interval ordering in linear time as follows.  First, we renumber the vertices such that $\sigma(v_i) = i$.  Second, sort the adjacency list for each vertex in the decreasing order.\footnote{As a standard algorithm exercise, this can be done without calling any sorting algorithm.  For example, we create $n$ new linked lists, each for a vertex, and all initially empty.  Then for $i = 1, \ldots, n$, we scan the neighbors of $v_i$, and add $i$ to the lists corresponding to neighbors of $v_i$.  After all the $n$ vertices are scanned, the $n$ lists contain the information we desire.  Recall that the insertion to a linked list is done at the front.}
For example, if $G$ is the graph in Figure~\ref{fig:bfs-example}, and $\sigma = \langle 1, 2, \ldots, 8\rangle$, then the lists are $[2]$; $[8, 7, \ldots, 3]$; $[4, 2]$; $[6, 5, 3, 2]$; $[6, 4, 2]$; $[7, 5, 4, 2]$; $[6, 2]$; and $[2]$.
To finish the task, it suffices to check for all $i = 1, \ldots, n - 1$, the $i$th list starts from $[f(i), f(i)-1, \ldots, i + 1]$, where $f(i)$ is the first number in the list; i.e., $v_{f(i)}$ is the last neighbor of $v_{i}$ in $\sigma$.
If $G$ is not an interval graph, then no ordering of $V(G)$ can be an interval ordering.  Thus, any ordering of $V(G)$ will fail our test.
For our purpose, therefore, it suffices to develop a procedure that finds an interval ordering if given an interval graph.  In our presentation we will focus on the behavior of the procedure on an interval graph.

\medskip

Breadth-first search (\textsc{bfs}) is arguably the simplest graph algorithm, and a backbone version, for a connected graph, can be described as follows.
\begin{itemize}
\item Initialize a queue whose only element is a starting vertex of the graph.
  \vspace*{-4pt}
\item While the queue is nonempty, dequeue a vertex $v$ from the queue, and enqueue all the neighbors of $v$ that have not already been visited (i.e., enqueued by earlier steps).
\end{itemize}
In \textsc{bfs}, there is no specific order on visiting the unvisited neighbors of $v$.  In the very extreme case, the starting vertex is \emph{universal}, i.e., adjacent to all other vertices, and the other vertices can be visited in any order.
(The standard output of \textsc{bfs} is a tree representing the discovery relation, in which the order of siblings is immaterial.)
Since we look for an ordering of the vertices, we may want a more refined control on the algorithm.
Whenever there is a set $S$ of vertices to be visited, we may conduct a \textsc{bfs} of the subgraph $G[S]$ and visit them in order.  If we impose this restriction recursively, then we end with \emph{lexicographic breadth-first search} (\textsc{lbfs}).
To see the difference between \textsc{bfs} and \textsc{lbfs}, let us consider the graph in Figure~\ref{fig:bfs-example}. In a \textsc{bfs} from vertex $1$, vertices $3$ through $8$ can be in any order, where no \textsc{lbfs} of the graph can end with vertex $4$, $5$, or $6$.  It is worth noting that an \textsc{lbfs} can still have ties, which we break arbitrarily.

The name \textsc{lbfs} comes from the following more standard description in Figure~\ref{fig:alg-lbfs}.\footnote{Our definition of lexicographic ordering is slightly unnatural.  As explained in the next paragraph, the original purpose of using \textsc{lbfs} was to find a perfect elimination ordering for a chordal graph.  In that setting, the labels in an \textsc{lbfs} are assigned from $n$ to $1$, instead of $1$ to $n$ as ours.  Also note that some authors present \textsc{lbfs} with a specific starting vertex, similar as \textsc{bfs} above.  We omit it because it is not necessary.}
Let $L_1$ and $L_2$ be two different subsets of $\{1, 2, \ldots, n\}$. 
We say that $L_1$ is \emph{lexicographically larger} than $L_2$ if the minimum element in $(L_1\setminus L_2) \cup (L_2\setminus L_1)$ belongs to $L_1$; e.g., $\{1, 2\}$ is lexicographically larger than both $\{1\}$ and $\{1, 3, 4\}$.  

\begin{figure}[h!]
  \centering 
  \begin{tikzpicture}
    \path (0,0) node[text width=.7\textwidth, inner sep=10pt] (a) {
      \begin{minipage}[t!]{\textwidth}

        \begin{tabbing}
          AAA\=Aaa\=aaa\=Aaa\=MMMMMAAAAAAAAAAAA\=A \kill
          1.\> {\bf for each} $v\in V(G)$ {\bf do}
          \\
          1.1.\>\>$\mathrm{label}(v) \leftarrow \emptyset$;
          \\
          2.\> {\bf for} $i = 1, \ldots, n$ {\bf do}
          \\
          2.1.\>\> $S\leftarrow$ unvisited vertices with the lexicographically largest label;
          \\
          2.2.\>\> $v\leftarrow$ any vertex in $S$;
          \\
          2.3.\>\> $\sigma(v)\leftarrow i$;
          \\
          2.4.\>\> \textbf{for each} unvisited neighbor of $v$ \textbf{do} 
          \\
          \>\>\> add $i$ to $\mathrm{label}(v)$;
          \\
          3.\> \textbf{return} $\sigma$.
        \end{tabbing}
      \end{minipage}
    };
    \draw[draw=gray!60] (a.north west) -- (a.north east) (a.south west) -- (a.south east);
  \end{tikzpicture}
  \caption{The procedure \textsc{lbfs}.}
  \label{fig:alg-lbfs}
\end{figure}

\begin{proposition}\label{prop:interval-is-lbfs}
  Let $G$ be an interval graph.  Any interval ordering $\sigma$ of $G$ is an \textsc{lbfs} ordering of $G$.
\end{proposition}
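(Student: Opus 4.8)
The plan is to run the procedure \textsc{lbfs} on $G$ and to argue, by induction on the step index $i$, that the vertex $v_i := \sigma^{-1}(i)$ is always an admissible choice in line~2.2. Concretely, I fix the interval ordering $\sigma = \langle v_1, v_2, \ldots, v_n\rangle$ and show that if \textsc{lbfs} has already visited $v_1, \ldots, v_{i-1}$ in exactly this order, then $v_i$ attains the lexicographically largest label among the still-unvisited vertices. Since line~2.2 permits selecting \emph{any} vertex realizing the largest label (ties are broken arbitrarily), \textsc{lbfs} may then legitimately pick $v_i$, and the induction carries through to produce $\sigma$ in full.

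First I would pin down the labels under the induction hypothesis. A vertex $w$ that is still unvisited at step $i$ was also unvisited at every earlier step $j < i$, so it received the index $j$ precisely when $v_j$ was visited and $w \in N(v_j)$. Hence its label is exactly $\{\, j < i : v_j w \in E(G)\,\}$, the set of positions of its already-visited neighbors. In particular $\mathrm{label}(v_i) = \{\, j < i : v_j v_i \in E(G)\,\}$, and $\mathrm{label}(v_k) = \{\, j < i : v_j v_k \in E(G)\,\}$ for each $k > i$.

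The crux is the lexicographic comparison of these sets. Suppose, for contradiction, that some later vertex $v_k$ with $k > i$ carries a lexicographically larger label than $v_i$. Let $m$ be the smallest element of the symmetric difference of the two labels; by the definition of \emph{lexicographically larger}, $m$ lies in $\mathrm{label}(v_k)$ but not in $\mathrm{label}(v_i)$, so $v_m v_k \in E(G)$ while $v_m v_i \notin E(G)$, and of course $m < i$. But then $m < i < k$ and $v_m v_k \in E(G)$, so the defining property of an interval ordering, applied to the triple $v_m, v_i, v_k$, forces $v_m v_i \in E(G)$, contradicting $v_m v_i \notin E(G)$. Thus no later vertex beats $v_i$, so $v_i$ does attain the lexicographically largest label and the induction step closes.

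The step I expect to demand the most care is the bookkeeping of the labels: verifying that a vertex unvisited at step $i$ was unvisited at all earlier steps, so that its label is genuinely the set of positions of its earlier neighbors, and carefully translating \emph{lexicographically larger label} into the existence of the discriminating index $m$ with the stated membership. Once these routine points are settled, a single application of the interval-ordering property does all the real work. It is worth noting that essentially the same computation shows, as a byproduct, that the ``bad'' configuration underlying the standard four-point characterization of \textsc{lbfs} orderings—an earlier vertex adjacent to a later one but not to an intermediate one—simply cannot occur in an interval ordering, which is the structural reason the claim holds.
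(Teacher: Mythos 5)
Your proof is correct and follows essentially the same route as the paper: both arguments reduce to the observation that, in an interval ordering, the set of already-visited neighbors of a later vertex is contained in that of an earlier unvisited vertex, so the earlier vertex's label is lexicographically no smaller. The paper states this containment directly while you unpack it via the minimal discriminating index, but the substance is identical.
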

\begin{proof}
  We may assume without loss of generality that $\sigma(v_i) = i$ for all $i = 1, \ldots, n$.
  For any three numbers $i$, $p$, and $q$ with $i< p < q$, if $v_{i}v_{q}\in E(G)$, then $v_{i}v_{p}\in E(G)$ as well.  Therefore, $\{v_1, \ldots, v_{p-1}\}\cap N(v_q)\subseteq \{v_1, \ldots, v_{p-1}\}\cap N(v_p)$, and
after visiting $\{v_1, \ldots, v_{p-1}\}$, the label of $v_p$ is no smaller than that of $v_q$ for any $q > p$.
\end{proof}

Rose et al.~\cite{rose-76-vertex-elimination} devised \textsc{lbfs} for the purpose of recognizing chordal graphs.  They showed that the last vertex of an \textsc{lbfs} ordering $\sigma$ of a chordal graph $G$ is always simplicial.  In an \textsc{lbfs}, the decision on which vertex to visit next solely depends on the visited ones, and thus for $i = 1, \ldots, n$, the sub-ordering of the first $i$ vertices in $\sigma$ is an \textsc{lbfs} ordering of the subgraph induced by these vertices.  
Therefore, if we remove vertices in the reversal order of the \textsc{lbfs} ordering of a chordal graph, then every vertex is simplicial at the moment it is removed.  Such an ordering of removing vertices is called a \emph{perfect elimination ordering}, and it exists if and only if the graph is chordal \cite{dirac-61-chordal-graphs}.  
We use $u <_\sigma v$ to denote that $\sigma(u) < \sigma(v)$.
A \emph{sub-ordering} of an ordering $\sigma$ restricted to $S\subseteq V(G)$, denoted by $\sigma|_{S}$, is the ordering from $S \rightarrow \{1, \ldots, |S|\}$ such that $u<_{\sigma|_{S}} v$ if and only if $u<_{\sigma} v$ for all $u, v \in S$.
Note that a simplicial vertex $v$ of a graph $G$ is simplicial in $G[U]$ for any $U$ containing $v$.
\begin{theorem}[The perfect elimination theorem \cite{rose-76-vertex-elimination}]
  \label{thm:peo}
  Let $G$ be a chordal graph, and let $\sigma$ be an \textsc{lbfs} ordering of $G$.  For any subset $S\subseteq V(G)$, the last vertex of $\sigma|_{S}$ is simplicial in $G[S]$.
\end{theorem}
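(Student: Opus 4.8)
The plan is to reduce the statement for an arbitrary subset $S$ to the prefix case, which is exactly the basic result of Rose et al.\ already recalled above: the last vertex of an \textsc{lbfs} ordering of a chordal graph is simplicial. The guiding observation is that simpliciality behaves hereditarily, as noted just before the theorem---if $v$ is simplicial in a graph $H$, then $v$ is simplicial in $H[U]$ for every $U$ containing $v$. Hence it suffices to exhibit one induced subgraph that contains $S$ and in which $v$ is already known to be simplicial.

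First I would let $v$ denote the last vertex of $\sigma|_{S}$, i.e.\ the $<_\sigma$-largest vertex of $S$, and set $P = \{u \in V(G) : \sigma(u) \le \sigma(v)\}$, the prefix of $\sigma$ ending at $v$. Because $v$ is the $<_\sigma$-maximum of $S$, every vertex of $S$ precedes or equals $v$, so $S \subseteq P$. Next I would invoke the prefix-closure of \textsc{lbfs}: since the choice of the $i$th visited vertex depends only on the already-visited vertices, $\sigma|_{P}$ is itself an \textsc{lbfs} ordering of $G[P]$. As an induced subgraph of a chordal graph, $G[P]$ is chordal, and $v$ is the last vertex of $\sigma|_{P}$; therefore Rose et al.'s result applies and $v$ is simplicial in $G[P]$, that is, $N[v] \cap P$ is a clique.

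Finally, since $S \subseteq P$, the set $N[v] \cap S$ is a subset of the clique $N[v] \cap P$ and is therefore itself a clique; equivalently, applying the hereditary observation with $H = G[P]$ and $U = S$, the vertex $v$ remains simplicial in $(G[P])[S] = G[S]$. This is precisely the assertion that the last vertex of $\sigma|_{S}$ is simplicial in $G[S]$, which completes the argument.

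I do not expect a genuine obstacle here: the only real step is recognizing that nothing about $S$ beyond its $<_\sigma$-maximum matters, so that the arbitrary-subset claim collapses onto the prefix case already granted. The one subtlety worth stating explicitly is why $\sigma|_{P}$ inherits the \textsc{lbfs} property, which is exactly where prefix-closure of \textsc{lbfs} is used; everything else is set containment together with the downward closure of cliques.
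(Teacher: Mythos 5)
Your argument is correct and is essentially the one the paper itself intends: the paper cites the theorem from Rose et al.\ but, in the surrounding text, explicitly supplies all three ingredients you use---the basic fact that the last vertex of an \textsc{lbfs} ordering of a chordal graph is simplicial, the prefix-closure of \textsc{lbfs}, and the remark that a simplicial vertex of $G$ remains simplicial in $G[U]$ for any $U$ containing it. Your reduction of the arbitrary-subset case to the prefix case via $P = \{u : \sigma(u) \le \sigma(v)\}$ is exactly the standard route and has no gaps.
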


We say that a vertex $v$ of a graph $G$ is an (\textsc{lbfs}) \emph{end vertex} of $G$ if there exists an \textsc{lbfs} ordering $\sigma$ of $G$ such that $\sigma(v) = n$.  The key observation in \cite{rose-76-vertex-elimination} can be restated as that all end vertices of a chordal graph $G$ are simplicial in $G$.  The other direction of the perfect elimination theorem is not true in general; e.g., vertex $5$ in Figure~\ref{fig:bfs-example} is simplicial but cannot be the last vertex of any \textsc{lbfs} ordering of the graph.
The original success of \textsc{lbfs} thus crucially hinges on the end vertices.
On an interval graph $G$, 
a stronger property of end vertices was observed by Simon~\cite[Lemma 16]{simon-91-interval}.  This property is also implicit in \cite{corneil-09-lbfs-strucuture-and-interval-recognition}, and another proof for it can be found in \cite[Corollary 4.23]{li-14-lbfs-interval-recognition}.

\begin{theorem}[The end vertex theorem \cite{simon-91-interval, li-14-lbfs-interval-recognition}]
  \label{lem:end-vertex}
  Let  $G$ be an interval graph.  For any end vertex $v$ of $G$, there exists a clique path such that $v$ is a simplicial vertex of the first maximal clique.
\end{theorem}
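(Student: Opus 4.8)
The plan is to combine the perfect elimination theorem (Theorem~\ref{thm:peo}) with the structure of clique paths. By Theorem~\ref{thm:peo}, any end vertex $v$ of an interval graph $G$ is simplicial in $G$, so $N[v]$ is a maximal clique $K$. The goal is therefore to produce a clique path in which this particular $K$ appears as an endpoint, say $K_1$, with $v$ belonging only to $K_1$.

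\begin{proof}[Proof proposal]
Let $v$ be an end vertex of $G$, witnessed by an \textsc{lbfs} ordering $\sigma$ with $\sigma(v)=n$. Since $G$ is chordal and $v$ is the last vertex of $\sigma$, Theorem~\ref{thm:peo} (applied with $S=V(G)$) tells us that $v$ is simplicial, so $K\coloneqq N[v]$ is a maximal clique. Because $v$ is simplicial, $K$ is the unique maximal clique containing $v$; hence in \emph{any} clique path $v$ occupies a single position, and our only task is to show that some clique path can be arranged so that this position is an end. I would start from an arbitrary clique path $P=\langle K_1,\dots,K_\ell\rangle$ of $G$ (which exists because $G$ is an interval graph) and argue that $K$ can be slid to one end without destroying the consecutiveness of the other vertices.

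The key structural point is that $v$ is adjacent only to vertices of $K$, so deleting $v$ (equivalently, deleting its column) and the clique $K$ from the picture leaves the rest of $P$ essentially intact: every vertex other than $v$ still appears in a consecutive block of $\langle K_1,\dots,K_\ell\rangle$. Suppose $K=K_i$ with $1<i<\ell$, so $K$ sits in the interior. Every vertex $u\in K\setminus\{v\}=N(v)$ must also lie in a maximal clique adjacent to $K_i$ (since $K_i$ is maximal, no vertex can be confined to $K_i$ alone unless it is $v$); I would use this to show that $N(v)$ splits into those vertices extending to the left of $K_i$ and those extending to the right, and that we may reroute the path so that $K_i$ becomes, say, $K_1$. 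The cleanest way to carry this out is to invoke the end vertex property of $\sigma$ directly: because $\sigma$ is an \textsc{lbfs} ordering ending at $v$, the \emph{reversal-type} ordering it induces corresponds to building $G$ up from $v$, and one can read off a clique path from the labels of $\sigma$, placing $K$ first.

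\textbf{The main obstacle} I anticipate is precisely this last step: showing that the \textsc{lbfs} structure forces $K=N[v]$ to be an \emph{extreme} clique rather than merely a clique that can be permuted into extreme position by hand. A purely combinatorial ``slide $K_i$ to the end'' argument risks breaking the consecutive-ones property for vertices that straddle position $i$, so the real content is to exploit the fact that $\sigma$ being an \textsc{lbfs} ordering with $\sigma(v)=n$ constrains how the neighbors of $v$ were labelled during the search. Concretely, I would show that among all vertices, those in $N(v)$ received their labels last, which forces $K$ to appear at one end of some valid clique path. I would formalize this by an induction on $\ell$ (or on $|V(G)|$): remove $v$, obtain by the induction hypothesis a clique path of $G-v$ with the clique containing the ``final-but-one'' structure at an end, and then re-attach $K$ as a new terminal clique, checking that the resulting arrangement remains a clique path of $G$.
\end{proof}
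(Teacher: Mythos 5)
First, a caveat on the comparison: the paper does not prove this theorem for general interval graphs at all---it is imported from Simon and from Li and Wu, and is re-derived from scratch only in the special case of unit interval graphs (Lemma~\ref{lem:uig-snapshot}(i)), where claw-freeness does the heavy lifting. So your proposal has to stand on its own, and as written it does not: the ``main obstacle'' you flag in your final paragraph is the entire content of the theorem, and none of the three routes you sketch around it goes through.

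Concretely. (1)~The claim that the vertices of $N(v)$ ``received their labels last'' is false: for the graph of Figure~\ref{fig:bfs-example}, the interval ordering $\langle 1,2,\ldots,8\rangle$ is an \textsc{lbfs} ordering (Proposition~\ref{prop:interval-is-lbfs}) ending at the end vertex $8$, yet $N(8)=\{2\}$ is visited second. (2)~The ``slide $K$ to an end'' idea cannot work even in principle, because by Theorem~\ref{thm:prime} a prime interval graph has a \emph{unique} clique path up to reversal; there is nothing to slide, and for such graphs the theorem asserts the genuinely nontrivial fact that an \textsc{lbfs} can only terminate in a simplicial vertex of the first or last clique of that fixed path. (Vertex $5$ of Figure~\ref{fig:bfs-example} shows simpliciality alone does not suffice: its maximal clique is interior and it is not an end vertex, so some property of the \emph{search}, not of the vertex, must be exploited.) (3)~The closing induction gives, for $G-v$, a clique path whose first clique contains the end vertex of $\sigma$ restricted to $V(G)\setminus\{v\}$, but nothing forces $N(v)$ to lie inside an \emph{extreme} clique of that path, which is exactly what you need in order to append $N[v]$ as a terminal clique. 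The analogous induction in the paper's proof of Theorem~\ref{thm:ordering-path} succeeds only because an interval ordering guarantees $N(v_n)\subseteq N[v_{n-1}]$; a mere \textsc{lbfs} ordering gives no such containment. A correct proof has to extract from the \textsc{lbfs} mechanics a structural prohibition on end vertices---for instance that an end vertex cannot be the degree-two vertex of a bull (Lemma~\ref{lem:bull}), and more generally cannot be an ``interior'' simplicial vertex---and that is where the real work in Simon's and Li--Wu's arguments lies.
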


The converse of Theorem~\ref{lem:end-vertex} is also true, and it is implied by the following characterization of interval orderings.
Let $G$ be an interval graph with $\ell$ maximal cliques, and let $\langle K_1, K_2, \ldots, K_{\ell}\rangle$ be a clique path of $G$.  We say an ordering $\sigma$ of $V(G)$ is \emph{consistent with} the clique path if $u<_{\sigma} v$ for any pair of vertices $u$ and $v$ with $\lp{u} < \lp{v}$; in other words, $\sigma$ can be represented as $\langle K_1, K_2\setminus K_1, \ldots, K_{\ell}\setminus K_{\ell - 1} \rangle$, where vertices in each set are in any order.
\begin{theorem}\label{thm:ordering-path}
  Let $G$ be an interval graph.  An ordering $\sigma$ of $V(G)$ is an interval ordering if and only if $\sigma$ is consistent with some clique path of $G$.
\end{theorem}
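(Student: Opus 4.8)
The plan is to prove the two directions separately, in each case translating between the order condition on $\sigma$ and the interval structure $[\lp{\cdot},\rp{\cdot}]$ carried by the clique path $\langle K_1,\ldots,K_\ell\rangle$. I will rely throughout on the reformulation that $\sigma$ being \emph{consistent} with a clique path is the same as saying that $\lp{\cdot}$ is non-decreasing along $\sigma$: the defining implication $\lp{u}<\lp{v}\Rightarrow u<_\sigma v$ is, since $\sigma$ is a total order on distinct vertices, exactly the contrapositive of $u<_\sigma v\Rightarrow \lp{u}\le\lp{v}$. I will also freely use the correspondence recorded in Section~\ref{sec:lbfs} that each vertex occupies a consecutive range of maximal cliques $[\lp{\cdot},\rp{\cdot}]$ and that two vertices are adjacent precisely when these ranges meet.

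For the ``if'' direction, assume $\sigma$ is consistent with a clique path and take three vertices with $u<_\sigma v<_\sigma w$ and $uw\in E(G)$; I must show $uv\in E(G)$. Consistency gives $\lp{u}\le\lp{v}\le\lp{w}$. Since $uw\in E(G)$, the ranges $[\lp{u},\rp{u}]$ and $[\lp{w},\rp{w}]$ intersect, and because $\lp{u}\le\lp{w}$ this forces $\lp{w}\le\rp{u}$. Chaining the inequalities yields $\lp{u}\le\lp{v}\le\lp{w}\le\rp{u}$, so the index $\lp{v}$ lies inside $u$'s range; hence the clique $K_{\lp{v}}$ contains both $u$ and $v$, giving $uv\in E(G)$. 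This is a short order-theoretic computation and I expect no real difficulty.

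For the ``only if'' direction I would invoke the equivalences of Section~\ref{sec:lbfs}: an interval ordering $\sigma=\langle v_1,\ldots,v_n\rangle$ produces an interval representation with $I(v_i)=[i,f(i)]$, where $v_{f(i)}$ is the last neighbor of $v_i$ (and $f(i)=i$ if there is none). The useful features are that this genuinely represents $G$ and that its left endpoints are exactly $1,\ldots,n$, so that sorting vertices by left endpoint recovers $\sigma$ itself. Feeding this representation into the clique-path construction of Section~\ref{sec:lbfs} yields a clique path, and it then suffices to check that $\lp{\cdot}$ is non-decreasing along $\sigma$.

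The heart of the matter, and the step I expect to be the main obstacle, is this monotonicity claim. I would prove the slightly more general statement that for \emph{any} interval representation the left-endpoint order refines the $\lp{\cdot}$ order. Fix $u,v$ whose left endpoints satisfy that of $u$ being at most that of $v$, and let $K_{\lp{v}}$ be the leftmost maximal clique containing $v$; in the construction it is the set of intervals covering some defining point $q$, and since $v\in K_{\lp{v}}$ we have $q$ inside $I(v)$, so $q$ is at least the left endpoint of $v$ and hence at least that of $u$. If $u\in K_{\lp{v}}$, then $K_{\lp{v}}$ is a maximal clique containing $u$ and $\lp{u}\le\lp{v}$. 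Otherwise $u\notin K_{\lp{v}}$ while the left endpoint of $u$ is at most $q$; as membership in this clique means covering $q$, the absence of $u$ can only be because the right endpoint of $I(u)$ is strictly below $q$. Then every maximal clique containing $u$ sits at a point strictly left of $q$, whence $\rp{u}<\lp{v}$ and in particular $\lp{u}\le\rp{u}<\lp{v}$. Either way $\lp{u}\le\lp{v}$, which is what is needed. The care required is entirely in the case $u\notin K_{\lp{v}}$ and in justifying that ``defining point strictly to the left'' translates into a strictly smaller clique-path index, which I will pin down using the consecutiveness of each vertex's maximal cliques established in Section~\ref{sec:lbfs}.
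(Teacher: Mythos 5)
Your proof is correct, but it follows a genuinely different route from the paper's. The paper proves necessity by induction on $n$: it first observes (via Proposition~\ref{prop:interval-is-lbfs} and the perfect elimination theorem, Theorem~\ref{thm:peo}) that the last vertex $v_n$ of an interval ordering is simplicial, takes a clique path of $G-\{v_n\}$ consistent with the truncated ordering by induction, and then extends it by either enlarging the last clique $K$ to $N[v_n]$ (when $v_n$ has a true twin) or appending $N[v_n]$ after $K$ (otherwise). You instead argue non-inductively: you build the explicit representation $I(v_i)=[i,f(i)]$ from $\sigma$, feed it into the representation-to-clique-path construction of Section~\ref{sec:lbfs}, and prove a monotonicity lemma stating that for any interval representation the left-endpoint order refines the $\lp{\cdot}$ order; your case analysis on whether $u\in K_{\lp{v}}$ is sound, including the strict inequality $\rp{u}<\lp{v}$ in the second case (the only point needing care, which you correctly flag, is that all defining points of maximal cliques containing $u$ lie in $I(u)$ and hence strictly left of any defining point of $K_{\lp{v}}$). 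Your route avoids the \textsc{lbfs} machinery entirely and yields a slightly more general fact of independent interest, at the cost of leaning on the two constructions that Section~\ref{sec:lbfs} leaves to the reader (that the point-by-point construction really gives a clique path, and that $[i,f(i)]$ really represents $G$); the paper's induction is shorter given that Theorem~\ref{thm:peo} is already available and makes the incremental growth of the clique path explicit, which is closer in spirit to the algorithmic material that follows. You also supply an explicit argument for the ``if'' direction, which the paper dismisses as obvious; your chain $\lp{u}\le\lp{v}\le\lp{w}\le\rp{u}$ is exactly the right computation.
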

\begin{proof}
  The sufficiency is obvious, and hence we focus on necessity.
  We prove it by induction on $n$.  It is vacuously true for the base case, when $G$ contains a single vertex.  Now suppose that the claim is true for all graphs of order $n-1$.
  We may assume without loss of generality that $\sigma(v_i) = i$ for $i = 1, \ldots, n$.
  By Proposition~\ref{prop:interval-is-lbfs}, $\sigma$ is an \textsc{lbfs} ordering of $G$, and hence by the perfect elimination theorem (Theorem~\ref{thm:peo}), $v_{n}$ is simplicial in $G$, and $v_{n-1}$ is simplicial in $G - \{v_n\}$.
  By the induction hypothesis, there is a clique path of $G - \{v_n\}$ that is consistent with the ordering $v_1, \ldots, v_{n-1}$; denote by $K$ the last maximal clique of this clique path.
  
  Suppose first that there is a true twin of $v_n$ (i.e., another simplicial vertex of $G$ in $N[v_{n}]$).
  In this case, the simplicial vertex $v_{n-1}$ in $G - \{v_n\}$ is also simplicial in $G$.  
  We argue that $v_{n-1}\in N(v_{n})$: Otherwise, there exists $i < n - 1$ such that $v_i$ is a true twin of $v_{n}$, but then the triple $\{i, n-1, n\}$ contradicts the definition of interval orderings.
  Therefore,  $N[v_{n}] = K\cup \{v_n\}$, and replacing $K$ by $N[v_{n}]$ gives the desired clique path for $G$.

  Now that $v_{n}$ is the only simplicial vertex in $N[v_{n}]$, every maximal clique of $G - \{v_n\}$ is a maximal clique of $G$.
  By the definition of interval orderings, every vertex in $N(v_{n})$ is adjacent to $v_{n-1}$.  Thus, adding $N[v_{n}]$ after $K$ gives the desired clique path for $G$.
\end{proof}

Theorem~\ref{lem:end-vertex} has the following corollary, which has been observed by Gimbel~\cite{gimbel-88-end-vertices}.  What concerned Gimbel are the end intervals; an \emph{end interval} is an interval with the smallest left endpoint or the largest right endpoint in an interval representation.
Although by the end vertex theorem, every end vertex can be represented by an end interval, the other direction does not hold in general;
e.g., neither of vertices $2$ and $3$ in Figure~\ref{fig:bull} is an end vertex but tey can have end intervals.  Built on
Lekkerkerker and Boland~\cite{lekkerkerker-62-interval-graphs}, Gimbel gave a complete characterization of end intervals, of which Lemma~\ref{lem:bull} is one part.
\begin{lemma}[\cite{gimbel-88-end-vertices}]
  \label{lem:bull}
  Let $G$ be an interval graph.  An end vertex of $G$ cannot be the degree-2 vertex of a bull (vertex $5$ in Figure~\ref{fig:bull}).
\end{lemma}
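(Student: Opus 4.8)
The plan is to argue by contradiction using the end vertex theorem (Theorem~\ref{lem:end-vertex}). Suppose the degree-$2$ vertex of some bull---call it $w$ (vertex $5$ in the figure), adjacent precisely to the two degree-$3$ vertices $x$ and $y$ of the bull, where $x$ carries a pendant neighbor $x'$ and $y$ carries a pendant neighbor $y'$---were an end vertex of $G$. Then Theorem~\ref{lem:end-vertex} supplies a clique path $\langle K_1, \ldots, K_\ell\rangle$ in which $w$ is a simplicial vertex of the first maximal clique $K_1$. The first step is to translate this into the interval endpoints $\lp{\cdot}$ and $\rp{\cdot}$ read off from the clique path: since a simplicial vertex occupies exactly one maximal clique (its interval has length zero), we obtain $\lp{w}=\rp{w}=1$, and since $N[w]\subseteq K_1$ we obtain $x,y\in K_1$, that is, $\lp{x}=\lp{y}=1$.

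The second step exploits the asymmetry of the two horns. The pendant $x'$ is adjacent to $x$ but not to $y$ (nor to $w$). Non-adjacency to $w$ already forces $\lp{x'}>1$, and non-adjacency to $y$ together with $\lp{y}=1$ forces the intervals $[\lp{x'},\rp{x'}]$ and $[1,\rp{y}]$ to be disjoint, hence $\lp{x'}>\rp{y}$; meanwhile adjacency to $x$ gives $\lp{x'}\le\rp{x}$. Chaining these yields $\rp{y}<\rp{x}$. Running the identical argument for the pendant $y'$, which is adjacent to $y$ but not to $x$, yields the mirror inequality $\rp{x}<\rp{y}$.

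These two inequalities contradict each other, which completes the proof. I expect the only real subtlety to lie in the first step: one must extract from ``$w$ is a simplicial vertex of the first maximal clique'' both that $w$ sits at the single point $1$ and that the two degree-$3$ vertices share this same left endpoint $\lp{x}=\lp{y}=1$. Once $x$ and $y$ are pinned to a common left endpoint, the asymmetric attachment of the two horns is exactly what breaks the symmetry and forces the contradictory ordering of $\rp{x}$ and $\rp{y}$; the remaining endpoint manipulations are routine and use only the equivalence between adjacency and overlap of the index intervals.
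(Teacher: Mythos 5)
Your proof is correct. The paper itself gives no proof of Lemma~\ref{lem:bull}---it cites Gimbel~\cite{gimbel-88-end-vertices} and merely remarks that the lemma is part of a characterization that follows from the end vertex theorem---and your argument supplies exactly that missing derivation: Theorem~\ref{lem:end-vertex} pins the simplicial end vertex $w$ and its two bull-neighbours $x,y$ to the first clique so that $\lp{w}=\rp{w}=\lp{x}=\lp{y}=1$, and the two pendant vertices then force the contradictory strict inequalities $\rp{y}<\rp{x}$ and $\rp{x}<\rp{y}$. The one point worth stating explicitly is that the bull is an \emph{induced} subgraph of $G$ (as it is wherever the paper invokes this lemma), since your step $\rp{y}<\lp{x'}\le\rp{x}$ uses the non-adjacency of $x'$ to both $w$ and $y$ in $G$ itself, not merely within the bull.
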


\begin{figure}[h]
  \centering
  \begin{tikzpicture}[every node/.style={filled vertex}, scale=1.5]
    \foreach \i in {1, ..., 4} {
      \node ["$\i$" below] (v\i) at (\i, 0) {};
    }
    \node["$5$"] (u) at (2.5, .7) {};
    \draw (v1)-- (v2)-- (v3)-- (v4) (v2)-- (u)--(v3);
  \end{tikzpicture}
  \caption{Bull.}
  \label{fig:bull}
\end{figure}

By Theorem~\ref{lem:end-vertex}, an \textsc{lbfs} of the bull, disregarding where it starts, ends with a degree-1 vertex.  Let us use the bull to motivate our first variation of \textsc{lbfs}.   Intuitively, to use \textsc{lbfs} to recognize an interval graph, it makes more sense to start from one end vertex, i.e., $1$ or $4$ in the bull.  However, an \textsc{lbfs} starting from vertex $1$ cannot distinguish vertices $3$ and $5$, while an \textsc{lbfs} starting from vertex $4$ can.
If we conduct \textsc{lbfs} twice, from $1$ and $4$ respectively, then they may produce complementary information on the graph, which can be used to get a full picture of the graph.
Formally, for an \textsc{lbfs} ordering $\sigma$ of $G$ and $v\in V(G)$, we use $N_{\sigma}(v)$ to denote the set of neighbors of $v$ that are earlier than $v$ in $\sigma$, i.e.,
\[
  N_{\sigma}(v) = \{u\in N(v)\mid u<_{\sigma} v\}.
\]
The \emph{$v$-snapshot} in $\sigma$, denoted by $S_{\sigma}(v)$, comprises the vertices that are adjacent to all vertices in $N_{\sigma}(v)$ and that are visited not earlier than $v$.  Note that $S_{\sigma}(v)$ is precisely $S$ defined in step~2.1 in the $(\sigma(v))$th iteration.  There are precisely $n$ snapshots in an \textsc{lbfs}, of which the first is always $V(G)$ itself.  Other snapshots, which are different from $V(G)$, are called \emph{proper snapshots}.  The label of every vertex in $S_{\sigma}(v)$ is $N_{\sigma}(v)$ at the moment $v$ is visited, and thus we use $N_{\sigma}(S_{\sigma}(v))$ interchangeably with $N_{\sigma}(v)$.
For an \textsc{lbfs} of the bull starting from vertex $1$, the third snapshot is $\{3, 5\}$.  Not able  to distinguish them from this side, we check the other direction---in any \textsc{lbfs} starting from vertex $4$, vertex $3$ is visited before vertex $5$---and thus we probably should visit vertex $5$ first.

Instead of running \textsc{lbfs} twice from different ends, Simon~\cite{simon-91-interval} proposed a new procedure.  Apart from the graph $G$, the procedure \textsc{lbfs}$^{+}(G, \sigma)$ takes an \textsc{lbfs} ordering of $G$ as input, and it replaces step~2.2 of procedure \textsc{lbfs} (Figure~\ref{fig:alg-lbfs}) by

\begin{itemize}
\item[2.2.] $\quad v\leftarrow$ the last vertex of $\sigma|_{S}$;
\end{itemize}
We usually use $\sigma^{+}$ to denote the output of $\textsc{lbfs}^{+}(G, \sigma)$.  Unlike the \textsc{lbfs} itself, which may output different ordering dependent on the vertex selection in step~2.2, an \textsc{lbfs}$^{+}$ ordering is unique for any given \textsc{lbfs} ordering $\sigma$.  For example, if $\sigma = \langle 1, 2, \ldots, 8\rangle$ for the graph in Figure~\ref{fig:bfs-example}, then $\sigma^{+}$ has to be $\langle 8, 2, 7, 6, 5, 4, 3, 1\rangle$.  Indeed, for a complete graph, any ordering $\sigma$ is a valid \textsc{lbfs} ordering, while $\sigma^{+}$ has to be the full reversal of $\sigma$ because every snapshot of $\sigma^{+}$ consists of all the unvisited vertices.

For any graph $G$ and any \textsc{lbfs} ordering $\sigma$ of $G$, the first snapshot of $\sigma^{+}$ is $V(G)$, and thus its first vertex is the last vertex of $ \sigma|_{V(G)}=\sigma$, i.e., the end vertex of $\sigma$.  
Corneil et al.~\cite{corneil-09-lbfs-strucuture-and-interval-recognition, corneil-10-end-vertices-lbfs} characterized end vertices of \textsc{lbfs}$^{+}$ orderings of an interval graph.

\begin{lemma}[The flipping lemma \cite{corneil-09-lbfs-strucuture-and-interval-recognition, corneil-10-end-vertices-lbfs}]
  \label{lem:flipping}
  Let $G$ be an interval graph.  A vertex $z$ is an end vertex if and only if for any \textsc{lbfs} ordering $\sigma$ of $G$ with $\sigma(z) = 1$, the ordering $\textsc{lbfs}^{+}(G, \sigma)$ ends with $z$.
\end{lemma}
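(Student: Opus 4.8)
The plan is to treat the two directions separately, spending essentially all the effort on the forward implication.

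The reverse direction is immediate. Since every vertex carries the empty label at the outset of an \textsc{lbfs}, we may always select $z$ in the first iteration, so there does exist an \textsc{lbfs} ordering $\sigma$ with $\sigma(z)=1$. For any such $\sigma$, the ordering $\textsc{lbfs}^{+}(G,\sigma)$ is itself an \textsc{lbfs} ordering; by hypothesis it ends with $z$, and hence $z$ is, by definition, an end vertex.

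For the forward direction I would fix $\sigma$ with $\sigma(z)=1$ and write $\sigma^{+}=\textsc{lbfs}^{+}(G,\sigma)$. The first step is a tie-breaking observation: because $z$ is the \emph{earliest} vertex of $\sigma$, in step~2.2 of $\sigma^{+}$ the vertex $z$ is selected from a snapshot $S$ only when $S=\{z\}$. Consequently, $z$ is the last vertex of $\sigma^{+}$ if and only if the lexicographically largest snapshot in $\sigma^{+}$ never shrinks to the singleton $\{z\}$ while other vertices are still unvisited. The second step brings in the structure of $z$ as an end vertex: by the end vertex theorem (Theorem~\ref{lem:end-vertex}) there is a clique path $\langle K_1,\dots,K_\ell\rangle$ with $z$ simplicial in $K_1$, so $N[z]=K_1$ and $z$ lies in no other clique. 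Now suppose, for contradiction, that at some iteration the lexicographically largest snapshot equals $\{z\}$ while a nonempty set $U$ of other vertices remains unvisited. Then $N_{\sigma^{+}}(z)$ is strictly lexicographically larger than $N_{\sigma^{+}}(v)$ for every $v\in U$, so there is a vertex $a$, visited before $z$, with $a\in N(z)\setminus N(v)$ for some $v\in U$, and $a$ can be taken as the earliest vertex witnessing this comparison. From $a\in N(z)=K_1$ we get $a\in K_1$, and from $a\notin N(v)$ we get $v\notin K_1$, so $v$ is a non-neighbor of $z$ lying strictly to the right in the clique path, with $\lp{v}\ge 2$. I would then track the run of $\sigma^{+}$ to produce a second neighbor $a'$ of $z$ together with private neighbors of $a$ and $a'$ lying on opposite sides of $z$, assembling a bull whose degree-$2$ vertex is $z$; this contradicts Lemma~\ref{lem:bull}.

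The main obstacle is precisely this last extraction. The tie-breaking observation and the clique-path setup are routine, but certifying the adjacencies and non-adjacencies that the bull demands requires a careful case analysis of the snapshots of $\sigma^{+}$, combining the minimality of $a$ with the \textsc{lbfs}$^{+}$ selection rule and the perfect elimination theorem (Theorem~\ref{thm:peo}) to control where the unvisited vertices of $U$ sit relative to the clique path. In particular I expect to separate the case $\lp{v}=2$, where a common vertex of $K_1$ and $K_2$ immediately supplies the second shoulder $a'$, from the case $\lp{v}\ge 3$, where the second shoulder and the two feet must be recovered along a shortest $a$--$v$ path. I anticipate that this bookkeeping, rather than any single conceptual difficulty, is where the real work lies.
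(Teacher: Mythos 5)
Your reverse direction is correct and complete. The forward direction, however, contains a genuine gap: everything after the extraction of the witness $a$ and the vertex $v$ is a plan rather than a proof, and that is exactly where the entire difficulty of the lemma lives. This is why the present paper does not prove the statement at all but imports it from \cite{corneil-09-lbfs-strucuture-and-interval-recognition,corneil-10-end-vertices-lbfs}, where the argument is a multi-page induction on the nested snapshot structure of $\sigma^{+}$, not a short forbidden-subgraph extraction. What you have established so far --- the tie-breaking observation, $N[z]=K_1$ via Theorem~\ref{lem:end-vertex}, and the existence of a visited $a\in K_1\cap N(z)\setminus N(v)$ with $v\notin K_1$ --- is correct, but it is little more than a restatement of the assumption that $z$'s label strictly dominates $v$'s; no contradiction has yet been derived.

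More seriously, the specific target you aim for cannot be reached as described. A bull with $z$ as its degree-$2$ vertex requires two neighbors $a,a'$ of $z$, each with a private neighbor with respect to the other. But in the clique path supplied by the end vertex theorem every neighbor $u$ of $z$ lies in $K_1$ and so is represented by an interval $[1,\rp{u}]$; these intervals are pairwise nested, so the closed neighborhoods of the neighbors of $z$ are linearly ordered by inclusion, and at most one of any two of them can have a private neighbor with respect to the other. (Nor is there anything ``on the other side of $z$'': $z$ occupies the leftmost clique.) Hence the five required adjacencies and non-adjacencies can never be simultaneously certified; the bull is an unreachable intermediate target that does no work, and if the reductio can be completed at all, the contradiction must surface earlier and in a different form --- for instance by using the perfect elimination theorem (Theorem~\ref{thm:peo}) and the structure of $\sigma$ as an \textsc{lbfs} from $z$ to show that $v$'s label could not in fact be lexicographically smaller than $z$'s while $v$ remains unvisited. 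As written, the key implication of the lemma remains unproven.
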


One may check that \textsc{lbfs}$^{+}$ works perfectly for the bull (which is a unit interval graph; see Section~\ref{sec:uig}).  However, it fails fatally for the graph in Figure~\ref{fig:bfs-example}: Consider the \textsc{lbfs} orderings $\sigma = \langle 1, 2, 5, 4, 6, 3, 7, 8\rangle$ for the graph in Figure~\ref{fig:bfs-example},  and then $\sigma^{+} = \langle 8, 2, 7, 6, 4, 5, 3, 1\rangle$, neither of which can distinguish vertices $4$ and $5$ correctly.

A subset $M$ of vertices forms a \emph{module} of $G$ if for any pair of vertices $u,v \in M$, a vertex $x \not\in M$ is adjacent to $u$ if and only if it is adjacent to $v$ as well; e.g., $\{3, 4, \ldots, 7\}$ of the graph in Figure~\ref{fig:bfs-example}.
 The set $V(G)$ and all singleton vertex sets are modules, called \emph{trivial}.   A graph on four or more vertices is \emph{prime} if it contains only trivial modules.
  The following observation of Hsu~\cite{hsu-95-recognition-cag} is behind Hsu and Ma's recognition algorithms for interval graphs~\cite{hsu-99-recognizing-interval-graphs}.

\begin{theorem}[\cite{hsu-95-recognition-cag}]
  \label{thm:prime}
  A prime interval graph has a unique clique path, up to full reversal.
\end{theorem}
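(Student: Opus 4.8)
The plan is to prove the contrapositive: every interval graph on at least four vertices that admits two clique paths which are not reverses of each other must contain a nontrivial module. Both clique paths are linear arrangements of the \emph{same} family $\mathcal K=\{K_1,\dots,K_\ell\}$ of maximal cliques (the maximal cliques are an invariant of $G$), so ``two clique paths up to full reversal'' means two orderings $P$ and $Q$ of $\mathcal K$ with $Q\neq P$ and $Q\neq\bar P$, where $\bar P$ denotes the reversal. I may assume $G$ is connected, since a prime graph on at least two vertices is connected (otherwise a connected component different from the whole graph, or a pair of vertices in an edgeless graph, would be a nontrivial module); connectivity guarantees that every separator $K_i\cap K_{i+1}$ along a clique path is nonempty.

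The clean core is a local construction. For a clique path $P=\langle K_1,\dots,K_\ell\rangle$ I call a $P$-interval $[a,b]$ with $b>a$ a \emph{reversible block} if reversing the subsequence $\langle K_a,\dots,K_b\rangle$ while fixing the rest of $P$ again yields a clique path. Writing each support $\{i : v\in K_i\}$ as the $P$-interval $[\lp{v},\rp{v}]$, a short case analysis on how a support meets the block shows that $[a,b]$ is reversible exactly when $K_{a-1}\cap K_a\subseteq K_b$ and $K_b\cap K_{b+1}\subseteq K_a$ (the conditions being vacuous when $a=1$ or $b=\ell$): the only supports put at risk by the reversal are those crossing precisely one boundary of the block, and this criterion forbids them. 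I then claim that the set $M=\{v : [\lp{v},\rp{v}]\subseteq[a,b]\}$ of vertices internal to a \emph{proper} reversible block is a nontrivial module. Indeed, any vertex whose support lies entirely outside $[a,b]$ is adjacent to no vertex of $M$, whereas any boundary-crossing vertex has, by the criterion, a support containing all of $[a,b]$ and hence meeting every support inside the block; so it is adjacent to all of $M$. Thus all vertices of $M$ share the same neighborhood outside $M$. The criterion further forces $K_a\setminus K_{a+1}$ and $K_b\setminus K_{b-1}$ (both nonempty, as adjacent maximal cliques are incomparable) to consist of vertices with supports $\{a\}$ and $\{b\}$, so $\lvert M\rvert\ge2$; and properness together with connectivity supplies a boundary-crossing vertex outside $M$, so $M\neq V(G)$. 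Hence $M$ is a nontrivial module.

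It remains to extract a \emph{proper} reversible block of $P$ from the bare existence of a second clique path $Q\notin\{P,\bar P\}$. My plan is to examine the \emph{common intervals} of $P$ and $Q$, that is, the $P$-intervals that also occupy consecutive positions in $Q$, and to pick a smallest such interval $[a,b]$ with $b>a$ on which the $Q$-order disagrees with the $P$-order. Minimality should force $Q$ to list $\langle K_a,\dots,K_b\rangle$ in exactly the reverse of the $P$-order; since $Q$ is itself a clique path in which this block is contiguous and reversed, the arrangement obtained from $P$ by reversing $[a,b]$ inherits consecutiveness of all supports, so $[a,b]$ satisfies the reversibility criterion, and it is proper because $Q\neq\bar P$ means the disagreement does not span the whole path. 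Feeding this block into the construction above produces the required nontrivial module and completes the contrapositive.

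The step I expect to be genuinely delicate is this last extraction: showing that a minimal disagreeing common interval is \emph{reversed} rather than merely permuted, and carefully transferring the validity of the arrangement from $Q$ to the block-reversed copy of $P$. This is, in effect, a hands-on instance of the fact that all consecutive arrangements of a set family are generated by reversals of a laminar system of blocks (the content underlying PQ-trees); the difficulty is to establish exactly the sliver of that theory needed here without developing the full machinery, most plausibly by an induction on $\ell$ or on the number of common intervals of $P$ and $Q$.
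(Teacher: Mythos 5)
Your reduction of the theorem to finding a \emph{proper reversible block} is sound and is a genuinely different route from the paper's. The reversibility criterion $K_{a-1}\cap K_a\subseteq K_b$ and $K_b\cap K_{b+1}\subseteq K_a$ is correct, and your verification that the vertices with support inside such a block form a nontrivial module goes through: every boundary-crossing vertex is forced to cover all of $[a,b]$, the sets $K_a\setminus K_{a+1}$ and $K_b\setminus K_{b-1}$ supply two simplicial vertices with supports $\{a\}$ and $\{b\}$, and properness supplies a vertex outside the block. The paper takes a more direct route: its Lemma~\ref{lem:modules} manufactures a nontrivial module straight from the positions of $K_1$ and $K_\ell$ in the second clique path, with no detour through block reversals.

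The gap is exactly where you feared, and both sub-claims of the extraction step are false. Take the claw with maximal cliques $A=\{c,x\}$, $B=\{c,y\}$, $C=\{c,z\}$; every ordering of these three cliques is a clique path. For $P=\langle A,B,C\rangle$ and $Q=\langle C,A,B\rangle$, the pair $\{A,B\}$ is a common interval but \emph{agrees} with the $P$-order, and $\{B,C\}$ is not consecutive in $Q$; hence the unique disagreeing common interval of length greater than one is the whole path, which is neither proper nor reversed---$Q$ permutes it cyclically. Since you are arguing the contrapositive, your extraction must work for \emph{every} interval graph admitting two non-reverse clique paths, and this example defeats it even though proper reversible blocks (e.g., $[1,2]$) do exist there. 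What you actually need---that a second clique path forces the existence of \emph{some} proper reversible block of length at least two---is precisely the P-node versus Q-node dichotomy of PQ-tree theory that you were hoping not to develop: a minimal disagreeing common interval can correspond to a P-node whose children are permuted arbitrarily rather than reversed. To close the gap you would either have to treat that case separately (if a common interval is permuted non-trivially but not reversed, exhibit a module directly from the permutation), or abandon the block-reversal detour in favor of an argument in the style of Lemma~\ref{lem:modules}.
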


Two adjacent vertices that form a module are called \emph{true twins}; note that $u$ and $v$ are twins if and only if $N[u] = N[v]$.
One can slightly strengthen Theorem~\ref{thm:prime} by weakening its condition to allow true twins.
\begin{corollary}\label{cor:prime}
  Let $G$ be an interval graph.  If every nontrivial module of $G$ is a set of true twins, then $G$ has a unique clique path, up to full reversal.
\end{corollary}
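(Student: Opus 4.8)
The plan is to reduce the statement to the prime case, which is Theorem~\ref{thm:prime}, by contracting true twins. The relation ``$u$ and $v$ are true twins'' (equivalently $N[u]=N[v]$) is reflexive, symmetric, and transitive, so it partitions $V(G)$ into classes $T_1,\dots,T_k$; each $T_i$ is a clique and, its members sharing a common closed neighborhood, a module of $G$. Let $H$ be the quotient graph on vertex set $\{T_1,\dots,T_k\}$, with $T_iT_j\in E(H)$ whenever some (equivalently every) pair of vertices, one from each class, is adjacent in $G$. Then $H$ is isomorphic to the subgraph of $G$ induced by one representative per class, so $H$ is an interval graph; and by the maximality of the classes $H$ has no true twins (two representatives that were true twins in $H$ would force their two classes to merge in $G$).

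The key step is to show that $H$ has \emph{no} nontrivial module, so that $H$ is either prime or tiny. Here I would invoke the standard correspondence for a partition into modules (an easy direct check using that the $T_i$ are modules): a union $\bigcup_{T\in Q}T$ of parts is a module of $G$ if and only if $Q\subseteq\{T_1,\dots,T_k\}$ is a module of $H$. Granting this, suppose $H$ had a nontrivial module $Q$. Its preimage $M=\bigcup_{T\in Q}T$ is then a module of $G$ with $M\ne V(G)$ (since $Q\ne V(H)$) and $|M|\ge 2$; moreover $M$ is the union of at least two distinct classes, hence meets vertices with different closed neighborhoods and so is not a set of true twins. This contradicts the hypothesis, so every module of $H$ is trivial. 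If $\lvert V(H)\rvert\ge 4$ this means $H$ is prime, and Theorem~\ref{thm:prime} gives it a clique path unique up to full reversal. The remaining quotients are handled directly: a short parity count shows every graph on exactly three vertices has a nontrivial module, so $\lvert V(H)\rvert=3$ cannot occur; $\lvert V(H)\rvert=1$ forces $G$ to be complete (a single maximal clique), and $\lvert V(H)\rvert=2$ together with the absence of true twins forces the two vertices to be nonadjacent, so $G$ is a disjoint union of two cliques. In each of these small cases $H$, and hence (by the next paragraph) $G$, manifestly has a clique path unique up to full reversal.

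Finally I would transfer uniqueness from $H$ back to $G$. Because each class is a clique whose members have a common closed neighborhood, every maximal clique $K$ of $G$ is a union of whole classes: if $v\in K$ and $w$ is a true twin of $v$, then $w$ is adjacent to all of $K$, so $w\in K$ by maximality. Thus collapsing and expanding classes is a bijection between the maximal cliques of $G$ and those of $H$, under which a vertex of $G$ lies in a maximal clique exactly when its class lies in the image clique. Consequently a linear arrangement of the maximal cliques of $G$ meets the consecutiveness condition if and only if the corresponding arrangement does in $H$, so the clique paths of $G$ and of $H$ are in a bijection under which full reversal corresponds on the two sides. Hence $G$ has a unique clique path up to full reversal precisely when $H$ does, which the previous paragraph established.

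I expect the main obstacle to be the module-lifting correspondence together with the conclusion that $H$ carries only trivial modules: one must check carefully that a nontrivial module of $H$ genuinely pulls back to a \emph{nontrivial} module of $G$ that spans several classes (rather than a single class or all of $V(G)$), and one must dispose cleanly of the degenerate small quotients, where Theorem~\ref{thm:prime} does not directly apply.
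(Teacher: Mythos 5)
Your proof is correct and follows essentially the same route as the paper's: both reduce to Theorem~\ref{thm:prime} by collapsing true twins and observing that this operation induces a bijection between clique paths (the paper deletes one twin at a time, you form the whole quotient at once). Your version just makes explicit the details the paper leaves implicit, namely that the reduced graph has only trivial modules and the disposal of the quotients on at most three vertices.
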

\begin{proof}
  Let $u$ and $v$ be true twins.  Then a set $K$ of vertices is a maximal clique of $G$ if and only if $K\setminus \{u\}$ is a maximal clique of $G - \{u\}$.  Moreover, $\langle K_1, K_2, \ldots, K_{\ell}\rangle$ is a clique path of $G$ if and only if $\left\langle K_1\setminus \{u\}, K_2\setminus \{u\}, \ldots, K_{\ell}\setminus \{u\}\right\rangle$ is a clique path of $G-\{u\}$.
\end{proof}

\section{An appetizer: unit interval graphs}\label{sec:uig}

In a \emph{unit interval representation}, every interval has the same length, and a graph is a \emph{unit interval graph} if it has a unit interval representation.   For example, the bull is a unit interval graph, while the graph in Figure~\ref{fig:bfs-example} is not.
An interval representation is \emph{proper} if no interval in the representation properly contains another, and \emph{proper interval graphs} are defined accordingly.
A unit interval representation is  necessarily proper, but the other way does not hold true in general.  A nontrivial observation of Roberts~\cite{roberts-69-indifference-graphs} states that
these two subclasses of interval graphs actually coincide.
This section gives a self-contained presentation of a linear-time algorithm that uses \textsc{lbfs}  and \textsc{lbfs}$^{+}$ to recognize proper interval graphs.  Every statement toward the main result will be derived from scratch, though not necessarily by formal proofs.  (Since we are not proving the equivalence between unit interval graphs and proper interval graphs, strictly speaking, it is not self-contained for recognizing unit interval graphs.)

An ordering $\sigma$ of $V(G)$ is an \emph{umbrella ordering} if for any triple of vertices $u, v, w$ of $G$ with $u <_{\sigma} v <_{\sigma} w$, vertices $u$ and $w$ are adjacent if and only if they are both adjacent to $v$.
Given a proper interval representation, the left endpoints of all vertices, from the smallest to the largest, with ties broken arbitrarily, induce an ordering of $V(G)$.  One can obtain the same ordering by considering all the right endpoints.
On the one hand, it is trivial to see that this ordering is an umbrella ordering.  On the other hand, from an umbrella ordering $\sigma$ of a graph $G$, we can construct a proper interval representation by setting
\begin{equation}
  \label{eq:ui-model}
  \tag{PI}
  I(v) = \left[ \sigma(v), \sigma(u) + \frac{\sigma(v)}{n} \right],  
\end{equation}
where $u$ is the last vertex in $\sigma|_{N[v]}$.  We rely on the reader to verify that the resulting representation is indeed proper.  Therefore, a graph $G$ is a unit interval graph if and only if it has an umbrella ordering \cite{looges-93-greedy-algorithms-uig}.
From \eqref{eq:ui-model} one can also see that a unit interval graph has a proper representation in which all the endpoints are integers between $1$ and $n^2$.  On the other hand, in general, a unit representation with only integral endpoints has to use very large integers, and this suggests the difficulty of building unit representations.
The following fact is immediate from the definition of umbrella orderings, and we can also see it through the representation $\mathcal{I}$ derived with \eqref{eq:ui-model} from an umbrella ordering: The reversal of $\sigma$ is precisely the right endpoints of the intervals in $\mathcal{I}$ in decreasing order.
\begin{proposition}[Folklore]\label{prop:ui-ordering-reversal}
  Let $G$ be a unit interval graph.  An ordering $\sigma$ of $V(G)$ is an umbrella ordering of $G$ if and only if the reversal of $\sigma$ is an umbrella ordering of $G$.
\end{proposition}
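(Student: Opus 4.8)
The plan is to prove this directly from the definition of umbrella ordering, exploiting the fact that the defining condition is symmetric in the two \emph{outer} vertices of each triple while singling out the \emph{middle} vertex, a role that reversal leaves untouched. Write $\sigma'$ for the reversal of $\sigma$, so that $x <_{\sigma'} y$ if and only if $y <_{\sigma} x$. Since reversal is an involution, it suffices to prove one direction; so I would assume $\sigma$ is an umbrella ordering and show that $\sigma'$ is one as well, and the converse then follows by applying the same argument to $\sigma'$ (whose reversal is $\sigma$).

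First I would fix an arbitrary triple $a <_{\sigma'} b <_{\sigma'} c$ and unwind the reversal: this is equivalent to $c <_{\sigma} b <_{\sigma} a$, and crucially the middle vertex is $b$ in both orderings. Next I would apply the umbrella property of $\sigma$ to the triple $(c, b, a)$ (first vertex $c$, middle $b$, last $a$), which yields that $c$ and $a$ are adjacent if and only if they are both adjacent to $b$. Because the adjacency relation is symmetric (so ``$a$ and $c$ adjacent'' is the same as ``$c$ and $a$ adjacent'', and ``both adjacent to $b$'' does not depend on the order in which $a$ and $c$ are listed), this is precisely the statement that $a$ and $c$ are adjacent if and only if they are both adjacent to $b$, which is exactly the umbrella condition required of the triple $a <_{\sigma'} b <_{\sigma'} c$. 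As the triple was arbitrary, $\sigma'$ is an umbrella ordering.

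There is essentially no hard step here: the entire content is the observation that the umbrella condition is invariant under swapping the two endpoints of a triple, and that reversal maps triples to triples bijectively while preserving the middle element. The only point demanding care is the bookkeeping of which vertex plays the middle role after reversal, together with the use of the symmetry of $E(G)$; once these are in place the equivalence ``$\Longleftrightarrow$'' in the definition transfers verbatim. For readers who prefer a geometric view, I would remark that the same fact is visible from the representation constructed via~\eqref{eq:ui-model}: the reversal of $\sigma$ is exactly the ordering of the intervals of that representation by right endpoints in decreasing order, which by the symmetry of a proper representation is again an umbrella ordering; but the combinatorial argument above is shorter and self-contained.
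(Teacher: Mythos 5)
Your proof is correct and matches the paper's treatment: the paper gives no formal proof, stating only that the fact is immediate from the definition (the symmetry of the umbrella condition in the two outer vertices) and also visible via the representation \eqref{eq:ui-model}, which are exactly the two routes you describe. Your careful unwinding of the reversal and the role of the middle vertex is a faithful elaboration of what the paper leaves implicit.
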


We have mentioned in Section~\ref{sec:lbfs} how to verify a given ordering $\sigma$ of $V(G)$ is an interval ordering of the graph $G$.  Conducting this test twice, once for $\sigma$, and the other for the reversal of $\sigma$, would verify whether $\sigma$ is an umbrella ordering: The answer is ``yes'' if and only it both $\sigma$ and $\sigma^{+}$ pass the test.  Indeed, the test on $\sigma$ verifies $u w\in E(G)\Rightarrow u v \in E(G)$, and the other verifies $u w\in E(G)\Rightarrow v w\in E(G)$, both for all $u <_{\sigma} v <_{\sigma} w$.\footnote{Corneil et al.~\cite{corneil-95-recognition-unit-interval} presented another way to verify umbrella orderings, which needs to collect information from the \textsc{lbfs} procedure.  The one given here is divorced from the construction of the ordering, hence conceptually simpler, though it is inferior in terms of performance.
  Another benefit is that our verification procedure better reveals the connection between umbrella orderings and interval orderings.}
In the rest we will be focused on unit interval graphs.

The claw in Figure~\ref{fig:claw} is an interval graph but not a unit interval graph.  We rely on the reader to check that there cannot be a way of arranging a proper interval representation for the four vertices in the claw.
The non-existence of claws forces unit interval graphs to have very simple clique paths.  Recall that for a vertex $v$, we use $\lp{v}$ and $\rp{v}$ to denote the indices of the first and, respectively, last maximal cliques containing $v$ in a clique path.

\begin{figure}[h]
  \centering\small
  \begin{tikzpicture}[every node/.style={filled vertex}, scale=1]
      \node (v0) at (2, 2) {};
      \foreach[count=\i] \j in {1, 5, 8} {
        \node (v\j) at (\i, 1) {};
        \draw (v\j)-- (v0);
      }
    \end{tikzpicture}
  \caption{Claw}
  \label{fig:claw}
\end{figure}

\begin{proposition}\label{prop:claw-free}
  Let $G$ be an unit interval graph, and let $\langle K_1, K_2, \ldots, K_{\ell}\rangle$ be a clique path of $G$.
  \begin{enumerate}[(i)]
  \item If $K_p$, $1\le p\le \ell$, contains a simplicial vertex, then for every $v\in K_p$, at least one of $\lp{v}$ and $\rp{v}$ is $p$.
  \item If $G$ is connected, then for any $p = 2, \ldots, \ell$, there is a vertex $v$ with $\lp{v} < \rp{v} = p$.
  \end{enumerate}
\end{proposition}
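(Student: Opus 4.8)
The plan is to exploit the single structural feature that separates unit interval graphs from general interval graphs, namely that they contain no induced claw (the claw is not a unit interval graph, and the class is hereditary). Throughout I will use the translation supplied by the clique path: two vertices $u,v$ are adjacent if and only if they lie in a common maximal clique, equivalently if and only if the integer intervals $[\lp u,\rp u]$ and $[\lp v,\rp v]$ meet. I will also repeatedly use that distinct maximal cliques are incomparable, so $K_{i}\setminus K_{j}\neq\emptyset$ whenever $i\neq j$, and that a vertex lying in $K_{p-1}\setminus K_p$ has $\rp{\cdot}=p-1$ while a vertex in $K_{p+1}\setminus K_p$ has $\lp{\cdot}=p+1$; two such vertices have disjoint interval ranges and are therefore non-adjacent. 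Both parts are trivial unless $2\le p\le \ell-1$ (for $p=1$ every $v\in K_p$ has $\lp v=1=p$, for $p=\ell$ every $v\in K_\ell$ has $\rp v=\ell=p$, and in (ii) the case $p=\ell$ follows at once from the non-empty overlap below), so I may assume both $K_{p-1}$ and $K_{p+1}$ exist.

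For (i), suppose toward contradiction that some $v\in K_p$ has $\lp v<p<\rp v$, so that $v\in K_{p-1}\cap K_p\cap K_{p+1}$. Let $s\in K_p$ be the given simplicial vertex; since $N[s]$ is a maximal clique containing $s\in K_p$ and $K_p$ is maximal, $N[s]=K_p$. Choosing $a\in K_{p-1}\setminus K_p$ and $b\in K_{p+1}\setminus K_p$, I get $a,b\notin N[s]=K_p$, so $s\not\sim a$ and $s\not\sim b$; also $a\not\sim b$ by the range observation; yet $v$ is adjacent to each of $s,a,b$ (sharing $K_p$, $K_{p-1}$, $K_{p+1}$ respectively). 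Hence $\{v;s,a,b\}$ induces a claw, a contradiction, and at least one of $\lp v,\rp v$ equals $p$.

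For (ii), I first note that connectedness forces $K_{p-1}\cap K_p\neq\emptyset$ for every $p$: if it were empty, no vertex could straddle index $p-1/p$ (the cliques of a vertex are consecutive), and since every edge lies in some $K_i$, the graph would split between $\bigcup_{i<p}K_i$ and $\bigcup_{i\ge p}K_i$. Now a vertex $v$ satisfies $\lp v<\rp v=p$ exactly when $v\in(K_{p-1}\cap K_p)\setminus K_{p+1}$, so it suffices to rule out $K_{p-1}\cap K_p\subseteq K_{p+1}$. Assuming this containment, every vertex of $K_p\setminus K_{p+1}$ (a non-empty set, as $K_p\not\subseteq K_{p+1}$) lies outside $K_{p-1}$ as well, hence is private to $K_p$ and simplicial with closed neighbourhood $K_p$. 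Pick such a $v$, pick $w\in K_{p-1}\cap K_p$ (non-empty, and $w\in K_{p+1}$ by the containment), and pick $a\in K_{p-1}\setminus K_p$, $b\in K_{p+1}\setminus K_p$ as before. Then $w$ is adjacent to $v$ (both in $K_p$), to $a$ (both in $K_{p-1}$) and to $b$ (both in $K_{p+1}$), while $v\not\sim a$, $v\not\sim b$ (neither is in $N[v]=K_p$) and $a\not\sim b$; so $\{w;v,a,b\}$ induces a claw, the desired contradiction.

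The main obstacle is the second part, and specifically locating the third prong of the claw: unlike in (i), no simplicial vertex is handed to us. The crux is the observation that the negated conclusion $K_{p-1}\cap K_p\subseteq K_{p+1}$ forces the vertices of $K_p\setminus K_{p+1}$ to be private to $K_p$, manufacturing exactly the simplicial vertex needed to complete the claw. The remaining care is bookkeeping: checking that the four chosen vertices are genuinely distinct and that all the required (non-)adjacencies hold, both of which reduce to the range/overlap criterion noted at the outset.
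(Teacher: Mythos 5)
Your proof is correct and follows essentially the same route as the paper: both parts rest on claw-freeness of unit interval graphs, and your claw $\{w;v,a,b\}$ in part (ii) is structurally the same configuration the paper uses. The only (cosmetic) difference is that the paper derives (ii) by first locating a vertex $x$ with $\rp{x}=p$ and then invoking part (i), whereas you inline the claw construction under the contrapositive hypothesis $K_{p-1}\cap K_p\subseteq K_{p+1}$.
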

\begin{proof}
  Suppose for contradiction to assertion (i) that $\lp{v} < p$ and $\rp{v} > p$.  Since both $K_{p-1}$ and $K_{p+1}$ are maximal cliques of $G$, there must be a vertex $x\in K_{p-1}\setminus K_{p}$, and a vertex $z\in K_{p+1}\setminus K_{p}$.  By assumption, there is a simplicial vertex $y$ with $\lp{y} = \rp{y} = p$.  But then $\{v, x, y, z\}$ induces a claw, which is impossible.

  For assertion (ii), since $K_{p}$ is a maximal clique, there exists $x$ with $\rp{x} = p$.  If $\lp{x} < p$, then we are done.  Now that $\lp{x} = p$, then $x$ is simplicial.  By assertion (i), every vertex in $K_{p-1}\cap K_{p}$, which is nonempty because $G$ is connected, is disjoint from $K_{p+1}$.  Therefore, there always exists some vertex $v$ with $\lp{v} < \rp{v} = p$.
\end{proof}

Let $\langle K_1, K_2, \ldots, K_{\ell}\rangle$ be a clique path of a connected unit interval graph $G$.
From Proposition~\ref{prop:claw-free}(i)
we can conclude that any simplicial vertex $v$ in $K_i$ with $1< i < \ell$ is the degree-two vertex of a bull.  By Lemma~\ref{lem:bull}, $v$ cannot be end vertex, and by the perfect elimination theorem (Theorem~\ref{thm:peo}), all the end vertices of $G$ are in $K_{1}$ and $K_{\ell}$.  Therefore, by Theorem~\ref{thm:ordering-path}, the two ends of any clique path of $G$ must be $K_{1}$ and $K_{\ell}$.  The same argument applies to the connected unit interval subgraph induced by $\bigcup_{i=2}^{\ell - 1}$, with clique path $\langle K_2, \ldots, K_{\ell-1}\rangle$.  It is nontrivial but one can show that $\langle K_{\ell}, K_2, K_3, \ldots, K_{\ell-1}, K_{1}\rangle$ is not a clique path of $G$.  We can continue this argument to conclude that a connected unit interval graph has a unique clique path.  Yet another way to derive this fact is through Corollary~\ref{cor:prime} and the simple structure of a unit interval graph that has a universal vertex.  If a connected unit interval graph $G$ contains a non-clique module $M$, then $N(M)$ comprises universal vertices, which has no impact on the arrangement of clique paths, and $G[M]$ either is a connected unit interval graph in which every nontrivial module is a set of true twins, or consists of two disjoint cliques with no edges in between.  For the sake of completeness, we give a direct and simple proof.
\begin{theorem}[\cite{panda-09-bicompatible-elimination-ordering}]
  \label{thm:unique-clique-path}
  A connected unit interval graph has a unique clique path, up to full reversal.
\end{theorem}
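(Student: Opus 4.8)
The plan is to argue by induction on the number $\ell$ of maximal cliques of $G$. Fix one clique path $\langle K_1, K_2, \ldots, K_{\ell}\rangle$; I will show that every clique path of $G$ equals this one or its reversal. The base case $\ell = 1$ is immediate, as a complete graph has a single maximal clique.

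First I would pin down the two \emph{end cliques} of an arbitrary clique path $\langle L_1, L_2, \ldots, L_{\ell}\rangle$. Since $L_1$ is maximal, $L_1\setminus L_2\neq\emptyset$, and any $v\in L_1\setminus L_2$ lies (by consecutiveness) in exactly one maximal clique, hence is simplicial with $N[v]=L_1$. Reversing this clique path and reading off, via Theorem~\ref{thm:ordering-path}, an interval ordering consistent with it that ends at $v$ produces an \textsc{lbfs} ordering (Proposition~\ref{prop:interval-is-lbfs}) with $v$ last; thus $v$ is an end vertex. This is exactly the converse of the end vertex theorem (Theorem~\ref{lem:end-vertex}) noted in the text. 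Now $L_1$, being maximal, equals some $K_j$; if $1<j<\ell$ then, choosing $a\in K_{j-1}\cap K_j$, $b\in K_j\cap K_{j+1}$, and private vertices $x\in K_{j-1}\setminus K_j$, $y\in K_{j+1}\setminus K_j$, the dichotomy of Proposition~\ref{prop:claw-free}(i) (which forces $\rp{a}=j$ and $\lp{b}=j$) makes $\{v,a,b,x,y\}$ induce a bull with $v$ as its degree-2 vertex, contradicting Lemma~\ref{lem:bull}. Hence $L_1\in\{K_1,K_\ell\}$, and symmetrically $L_\ell\in\{K_1,K_\ell\}$, so the two ends of \emph{every} clique path are $K_1$ and $K_\ell$.

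Next comes the peeling step. After possibly reversing, assume $L_1=K_1$, and set $A=K_1\setminus K_2$; every vertex of $A$ is simplicial with closed neighborhood $K_1$, so $A$ is disjoint from $K_2,\ldots,K_\ell$, and $G'=G-A$ is a connected unit interval graph whose maximal cliques are exactly $K_2,\ldots,K_\ell$, with clique path $\langle K_2,\ldots,K_\ell\rangle$. The key bookkeeping is that in any clique path $Q=\langle K_1, L_2, \ldots, L_\ell\rangle$ of $G$, each vertex of $K_1\cap K_2$ belongs to at least two maximal cliques and therefore also to $L_2$; consequently deleting $A$ converts $Q$ into a valid clique path $\langle L_2,\ldots,L_\ell\rangle$ of $G'$. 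By the induction hypothesis this is either $\langle K_2,\ldots,K_\ell\rangle$ or its reversal.

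Finally I would discard the reversed branch: if $\langle L_2,\ldots,L_\ell\rangle=\langle K_\ell,\ldots,K_2\rangle$ and $\ell\ge 3$, then $Q$ would end with $K_2\notin\{K_1,K_\ell\}$, contradicting the endpoint claim established in the first step (for $\ell=2$ both branches coincide with $\langle K_1,K_2\rangle$). Hence $Q=\langle K_1,K_2,\ldots,K_\ell\rangle$, which together with the freedom to reverse $Q$ at the outset yields uniqueness up to full reversal. I expect the main obstacle to be the first step: verifying carefully that the five vertices $\{v,a,b,x,y\}$ are distinct and induce exactly a bull---each non-edge being checked through the $\lp{\cdot}/\rp{\cdot}$ dichotomy---and correctly invoking the converse of the end vertex theorem. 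The peeling itself is routine once the observation $K_1\cap K_2\subseteq L_2$ is in hand.
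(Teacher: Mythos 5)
Your proof is correct, but it follows a genuinely different route from the one the paper actually writes down. The paper's proof is a direct minimal-counterexample argument: given a second clique path $\mathcal{K}'$, it extracts a shortest contiguous block $\langle K_p,\ldots,K_q\rangle$ of the first path that appears out of order in $\mathcal{K}'$, shows by minimality that in $\mathcal{K}'$ this block must read $K_p,\ldots,K_{r-1},K_q,K_{q-1},\ldots,K_r$, and then contradicts Proposition~\ref{prop:claw-free}(i) using a vertex of $K_{r-1}\cap K_q$ (which lies in $K_r$ by consecutiveness) together with a simplicial vertex of $K_r$; it needs nothing beyond that proposition and the definition of clique paths. What you propose is essentially the induction-plus-peeling argument that the paper sketches in the paragraph \emph{preceding} the theorem and then explicitly sets aside (``It is nontrivial but one can show that $\langle K_\ell, K_2,\ldots,K_{\ell-1},K_1\rangle$ is not a clique path''), opting instead for the ``direct and simple proof.'' Your version carries that sketch to completion, and it does resolve the flagged subtlety cleanly: the troublesome orderings with correct end cliques but scrambled interior are killed by peeling a single end clique, applying the induction hypothesis to $G-A$, and using the end-clique claim to discard the reversed branch. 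The details all check out --- the bull $\{v,a,b,x,y\}$ with edges $va, vb, ab, ax, by$ is correctly forced by the $\lp{\cdot}/\rp{\cdot}$ dichotomy of Proposition~\ref{prop:claw-free}(i), the inclusion $K_1\cap K_2\subseteq L_2$ is exactly what makes the peeling legitimate, and $G-A$ is indeed connected with maximal cliques $K_2,\ldots,K_\ell$. The trade-off is that your route leans on more machinery: the converse of the end vertex theorem (via Theorem~\ref{thm:ordering-path} and Proposition~\ref{prop:interval-is-lbfs}) and Gimbel's Lemma~\ref{lem:bull}, whereas the paper's argument is self-contained modulo Proposition~\ref{prop:claw-free}(i) and is a few lines shorter; on the other hand, your first step isolates the reusable fact that the two end cliques of every clique path are $K_1$ and $K_\ell$, which the paper also wants for its surrounding discussion.
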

\begin{proof}
  Let $G$ be a connected unit interval graph with $\ell$ maximal cliques, and let $\langle K_1, K_2, \ldots, K_{\ell}\rangle$ be a clique path of $G$, denoted by $\mathcal{K}$.  Suppose to the contradiction of the theorem that there is another clique path $\mathcal{K}'$ of $G$ that is neither $\mathcal{K}$ nor its reversal.  We can find a minimal subsequence $\langle K_{p}, \ldots, K_{q}\rangle$, $1\le p \le q \le \ell$ such that they appear neither in this order or its reversal in $\mathcal{K}'$.
  Let $G'$ be the subgraphs induced by $\bigcup_{i=p}^{q}K_{i}$, and let $\mathcal{K}''$ denote the sequence of these cliques as they appear in $\mathcal{K}'$.
  It is easy to use the definition of clique paths to verify that both $\langle K_{p}, \ldots, K_{q}\rangle$ and $\mathcal{K}''$ are clique paths of $G'$. 
  By the minimality, one of $K_{p}$ and $K_{q}$ is at the end of $\mathcal{K}''$.  We may assume without loss of generality that $K_{p}$ is at one end of $\mathcal{K}''$, then by the minimality, $\mathcal{K}''$ has to be
\[
  K_{p}, K_{p+1}, \ldots, K_{r-1}, K_{q}, K_{q-1}, \ldots, K_{r+1}, K_{r}
  \]
  for some $r$ with $p < r< q$.  Since $G$ is connected, there is a vertex $v$ in $K_{r-1}\cap K_{q}$, and by the definition of clique paths, $v$ is also in $K_{r}$.  On the other hand, since $K_{r}$ is the last clique in $\mathcal{K}''$, it contains a simplicial vertex of $G'$.  We have thus a contradiction to proposition~\ref{prop:claw-free}(i).
\end{proof}

It should not be surprising that we can transform proper interval representations, clique paths, and umbrella orderings of a unit interval graph to each other. 
Since there is no special requirement of clique paths of a unit interval graph, transforming proper interval representation to a clique path is the same as a general interval representation, while transforming an umbrella ordering $\langle v_1, \ldots, v_{n}\rangle$ to a clique path can be done as follows.  For each vertex $v_i$ with the last neighbor $v_j$, the vertex set $\{v_{i}, \ldots, v_{j}\}$ is a clique, and keeping the maximal ones in the original order gives a clique path.
On the other hand, the transformations from a clique path to the other two are more subtle.  
Let $G$ be a connected unit interval graph.  We may assume without loss of generality that $G$ is not complete, and let $\langle K_1, K_2, \ldots, K_{\ell}\rangle$ be a clique path of $G$.
To derive a proper interval model, we need to set $I(v) = [\lp{v} - x, \rp{v} + y]$, where $x$ and $y$, which are required to ensure that the model is proper, can be calculated in a similar manner as the fractional number in \eqref{eq:ui-model}.
In an umbrella ordering, a pair of true twins can appear in an arbitrary order.
For a pair of vertices $u$ and $v$ that are not true twins,
 $u<_{\sigma} v$ if and only if 

 \begin{align}
  \lp{u} &< \lp{v} \text{ or } \notag
  \\
  \lp{u} &= \lp{v} \text{ and } \rp{u} < \rp{v}.
  \tag{UO} \label{eq:ui-ordering}
\end{align}
Theorem~\ref{thm:unique-clique-path} has the following corollary.

\begin{corollary}\label{cor:ui-ordering}
  Let $G$ be a connected unit interval graph, and let $\sigma_1$ and $\sigma_2$ be umbrella orderings of $G$.  Then $\sigma_2$ can be obtained from $\sigma_1$ or its reversal by re-ordering true twins.
\end{corollary}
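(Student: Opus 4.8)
The plan is to pass from orderings to clique paths, where the rigidity we need is already available. An umbrella ordering is in particular an interval ordering (the forward direction of the umbrella condition, $uw\in E(G)\Rightarrow uv\in E(G)$ for $u<_\sigma v<_\sigma w$, is exactly the defining property of interval orderings), so by Theorem~\ref{thm:ordering-path} each of $\sigma_1$ and $\sigma_2$ is consistent with some clique path of $G$. By Theorem~\ref{thm:unique-clique-path} there is, up to full reversal, a single clique path $\mathcal{K}=\langle K_1,\dots,K_\ell\rangle$ of $G$; hence the clique path witnessing $\sigma_1$ (resp.\ $\sigma_2$) is either $\mathcal{K}$ or its reversal $\mathcal{K}^{\mathrm{rev}}$. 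The whole argument then reduces to two facts: how tightly a fixed clique path determines a consistent umbrella ordering, and how reversal of the ordering interacts with reversal of the clique path.

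For the first fact I would use the characterization \eqref{eq:ui-ordering}. Once a clique path is fixed, the numbers $\lp{v}$ and $\rp{v}$ are determined, two vertices are true twins precisely when they share the pair $(\lp{v},\rp{v})$, and \eqref{eq:ui-ordering} forces the relative order of any two non-twins. Consequently an umbrella ordering consistent with a given clique path is unique up to permuting true twins. In particular, if $\sigma_2$ is consistent with the same clique path $\mathcal{K}$ as $\sigma_1$, then $\sigma_1$ and $\sigma_2$ agree on every pair of non-twins, so $\sigma_2$ arises from $\sigma_1$ by re-ordering true twins, which is the desired conclusion.

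The remaining case, where $\sigma_2$ is consistent with $\mathcal{K}^{\mathrm{rev}}$, is where the one genuinely delicate point lies: I must check that the reversal $\bar\sigma_1$ of $\sigma_1$ is itself consistent with $\mathcal{K}^{\mathrm{rev}}$. By Proposition~\ref{prop:ui-ordering-reversal}, $\bar\sigma_1$ is again an umbrella ordering, hence consistent with $\mathcal{K}$ or with $\mathcal{K}^{\mathrm{rev}}$. To exclude the former (the case where $G$ is complete is trivial, since then all vertices are pairwise true twins and every ordering differs from every other only by re-ordering twins), let $a$ be the first and $b$ the last vertex of $\sigma_1$; consistency with $\mathcal{K}$ gives $\lp{a}=1$ and $\lp{b}=\ell\ge 2$, so $a$ and $b$ are non-twins with $a<_{\sigma_1}b$ but $b<_{\bar\sigma_1}a$. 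If $\bar\sigma_1$ were consistent with $\mathcal{K}$, then \eqref{eq:ui-ordering} would force $a<_{\bar\sigma_1}b$, a contradiction. Hence $\bar\sigma_1$ is consistent with $\mathcal{K}^{\mathrm{rev}}$, and applying the uniqueness of the first fact to $\bar\sigma_1$ and $\sigma_2$, both consistent with $\mathcal{K}^{\mathrm{rev}}$, yields that $\sigma_2$ is obtained from $\bar\sigma_1$ by re-ordering true twins.

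I expect the real content to sit in \eqref{eq:ui-ordering} rather than in the clique-path bookkeeping above, so if it were not already in hand I would establish it first. The only nonobvious half is that, among non-twins $u,v$ with $\lp{u}=\lp{v}=p$ and $\rp{u}<\rp{v}$, any umbrella ordering puts $u$ before $v$. I would argue by contradiction: picking any $w$ in $K_{\rp{u}+1}\setminus K_{\rp{u}}$ (nonempty because consecutive maximal cliques of a clique path are distinct) gives a vertex adjacent to $v$ but not to $u$ and satisfying $\lp{w}>p$, so $w$ comes after both $u$ and $v$; were $v<_\sigma u$, the triple $v<_\sigma u<_\sigma w$ with $vw\in E(G)$ and $uw\notin E(G)$ would violate the umbrella condition. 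This step needs only the maximality of the cliques together with the umbrella property, not claw-freeness. Granting \eqref{eq:ui-ordering} and Theorem~\ref{thm:unique-clique-path}, the corollary follows with only the reversal case demanding care.
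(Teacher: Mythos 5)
Your proof is correct and follows essentially the same route the paper intends: the corollary is derived from the uniqueness of the clique path (Theorem~\ref{thm:unique-clique-path}) together with the characterization \eqref{eq:ui-ordering}, which pins down the relative order of every non-twin pair once a clique path is fixed. Your extra care with the reversal case and your direct verification of the nontrivial half of \eqref{eq:ui-ordering} (which the paper asserts without proof) are both sound.
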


Corollary~\ref{cor:ui-ordering} implies the famous characterization of Deng et al.~\cite{deng-96-proper-interval-and-cag}, which states that a unit interval graph without true twins has a unique umbrella ordering, up to full reversal.
To find an umbrella ordering of $G$, it suffices to use characterization \eqref{eq:ui-ordering} to decide the order of each pair of vertices.
It is not difficult to see that if an \textsc{lbfs} starts from a simplicial vertex in $K_1$, then it visits the maximal cliques of $G$ on the clique path one by one, and hence is able to tell whether $ \lp{u} < \lp{v}$.  On the other hand, an \textsc{lbfs} from a simplicial vertex in $K_{\ell}$ is able to tell whether $ \rp{u} < \rp{v}$.  Combining them we are able to recognize unit interval graphs.
Recall that an end vertex is the last vertex of some \textsc{lbfs} ordering of $G$, and note that the first assertion of the following theorem is the end vertex theorem (Theorem~\ref{lem:end-vertex}), and the proof given here works only for unit interval graphs.

\begin{lemma}\label{lem:uig-snapshot}
  Let $G$ be a connected unit interval graph with $\ell$ maximal cliques.  Let $\langle K_1, K_2, \ldots, K_{\ell}\rangle$ be a clique path, and $\sigma$ an \textsc{lbfs} ordering of $G$.
  \begin{enumerate}[(i)]
  \item    The last vertex of $\sigma$ is a simplicial vertex in $K_{1}$ or $K_{\ell}$.
  \item If $\sigma$ starts from an end vertex in $K_1$, then for $i=1, \ldots, \ell-1$, vertices in $K_{i}$ are visited before those in $K_{i+1} \setminus K_{i}$.
  \item If $\sigma$ starts from an end vertex in $K_1$, then for each proper snapshot $S\not\subseteq K_1$, there is $p\in \{2, \ldots, \ell\}$ such that $S\subseteq K_{p}\setminus K_{p-1}$.
  \end{enumerate}
\end{lemma}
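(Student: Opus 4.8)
The plan is to prove the three assertions in order, letting each build on the previous, and to isolate a single structural fact about the clique path that drives both (ii) and (iii). Throughout I write $V_i = K_i \setminus K_{i-1} = \{v : \lp{v} = i\}$ (with $K_0 = \emptyset$), so that the sets $V_i$ partition $V(G)$ and $K_1 = V_1$. For assertion~(i), the last vertex $v$ of $\sigma$ is by definition an end vertex, hence simplicial by the perfect elimination theorem (Theorem~\ref{thm:peo}) applied with $S = V(G)$; say $\lp{v} = \rp{v} = i$, so $N[v] = K_i$. I would show $i \in \{1, \ell\}$ by contradiction: if $1 < i < \ell$, I would exhibit an induced bull whose degree-two vertex is $v$ and invoke Lemma~\ref{lem:bull}. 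Connectivity gives $a \in K_{i-1}\cap K_i$ and $b \in K_{i+1}\cap K_i$, and maximality gives $a' \in K_{i-1}\setminus K_i$ and $b' \in K_{i+1}\setminus K_i$; Proposition~\ref{prop:claw-free}(i) (applicable since $K_i$ contains the simplicial vertex $v$) forces $\rp{a} = i = \lp{b}$, while $\rp{a'} = i-1$ and $\lp{b'} = i+1$. A routine check of the adjacencies shows that $\{a', a, v, b, b'\}$ induces a bull with $v$ of degree two, the desired contradiction.

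The engine for (ii) and (iii) is the following claim, which I would prove first: for every $i$ with $1 \le i \le \ell-2$ there is a vertex $z \in K_i \cap K_{i+1}$ with $\rp{z} = i+1$ (equivalently $z \notin K_{i+2}$). Suppose not, so $K_i \cap K_{i+1} \subseteq K_{i+2}$. Choosing $c \in K_{i+1}\setminus K_{i+2}$ (nonempty as $K_{i+1}\neq K_{i+2}$ are distinct maximal cliques), the assumption forces $c \notin K_i$, whence $\lp{c} = \rp{c} = i+1$ and $c$ is simplicial. Proposition~\ref{prop:claw-free}(i) then makes every vertex $w$ of $K_{i+1}$ satisfy $\lp{w}=i+1$ or $\rp{w}=i+1$; but any $z \in K_i\cap K_{i+1}$ (nonempty by connectivity) has $\lp{z} \le i$, forcing $\rp{z} = i+1$ and contradicting $K_i\cap K_{i+1}\subseteq K_{i+2}$. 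With this claim, I would prove (ii) in the stronger form that $\sigma$ visits the blocks $V_1, V_2, \ldots, V_\ell$ each entirely before the next (this implies (ii) since $K_i \subseteq V_1\cup\cdots\cup V_i$ and $K_{i+1}\setminus K_i = V_{i+1}$). The base case uses that the start vertex $s$ is simplicial with $N[s] = K_1 = V_1$: only $V_1$-vertices ever receive the label $1$, so all of $V_1$ precedes everything else. For the inductive step, once $V_1,\ldots,V_i$ have been visited, the vertex $z$ from the claim lies in $K_i\cap K_{i+1}$ (so it is already visited) and is adjacent to every vertex of $V_{i+1}$ but to no vertex of any $V_j$ with $j \ge i+2$ (as $\rp{z} = i+1 < j$); hence, at every stage before $V_{i+1}$ is exhausted, $z \in N_{\sigma}(w)\setminus N_{\sigma}(w')$ for all unvisited $w \in V_{i+1}$ and $w' \in V_j$, $j\ge i+2$, and one checks $N_{\sigma}(w') \subseteq N_{\sigma}(w)$. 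Thus $w$ strictly outranks $w'$ lexicographically, and the search cannot leave $V_{i+1}$ until it is empty.

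For (iii) I would analyze the snapshot $S = S_{\sigma}(v)$ according to the block of $v$. If $v \in V_1$ then $v$ is adjacent to $s$ (unless $v = s$, which gives the improper snapshot $V(G)$), so the maximum label contains $1$, forcing $S \subseteq N(s) = K_1\setminus\{s\}\subseteq K_1$; such an $S$ never meets the hypothesis $S \not\subseteq K_1$. If $v \in V_q$ with $q \ge 2$, then by (ii) the visited set is $V_1\cup\cdots\cup V_{q-1}\cup T$ with $T\subseteq V_q$, so $N_{\sigma}(v) = (K_{q-1}\cap K_q)\cup T$. Every unvisited vertex of $V_q$ is adjacent to all of $N_{\sigma}(v)$ and hence lies in $S$, while the vertex $z$ from the claim (with $i = q-1$) belongs to $K_{q-1}\cap K_q \subseteq N_{\sigma}(v)$ and is non-adjacent to every vertex of $V_j$, $j\ge q+1$, excluding all of them from $S$. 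Therefore $S \subseteq V_q = K_q\setminus K_{q-1}$, giving $p = q \in \{2,\ldots,\ell\}$.

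The main obstacle is the tie-breaking inherent in (ii) and (iii): an \textsc{lbfs} may select any vertex of maximum label, so one must rule out that a vertex of a later block shares the maximum label with the current block and is chosen prematurely. Everything hinges on the structural claim, which supplies a \emph{separating} vertex $z \in K_i\cap K_{i+1}$ whose right endpoint is exactly $i+1$; the delicate point is that the failure of the claim would push a simplicial vertex into $K_{i+1}$ and thereby violate claw-freeness through Proposition~\ref{prop:claw-free}(i). The remaining work is the bookkeeping showing $N_{\sigma}(w')\subseteq N_{\sigma}(w)$, so that this single vertex $z$ already yields a \emph{strict} lexicographic comparison and no tie can arise.
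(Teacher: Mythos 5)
Your proof is correct, and for assertions (ii) and (iii) it runs on the same engine as the paper's: the existence, for each $p\ge 2$, of an already-visited vertex $z$ with $\lp{z}<\rp{z}=p$ that is adjacent to all of $K_p$ but to nothing further right, which is exactly Proposition~\ref{prop:claw-free}(ii) (your ``structural claim'' re-derives it by essentially the same claw/simplicial argument, so you could simply cite it). The organizational difference is that the paper proves a single ordering statement by minimal counterexample --- for the start vertex $v$ and any $x,y$ with $\rp{v}<\lp{x}<\lp{y}$ one has $x<_{\sigma}y$ --- stated for an \emph{arbitrary} start vertex, and then gets (i), (ii), and (iii) as corollaries (for (i) it combines this with the symmetric statement for right endpoints and a swap trick for universal start vertices); you instead run an explicit block-by-block induction, which is more transparent for the end-vertex-start case but does not yield (i). For (i) you therefore take a genuinely different route: perfect elimination puts the last vertex in a single clique $K_i$, and $1<i<\ell$ is excluded by exhibiting an induced bull with that vertex in the degree-two position, contradicting Lemma~\ref{lem:bull}. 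This is legitimate (and is in fact the argument the paper itself sketches in the prose preceding Theorem~\ref{thm:unique-clique-path}); its cost is a dependence on the cited end-interval result of Gimbel, whereas the paper's version of (i) is self-contained modulo Theorem~\ref{thm:peo}. Your bull construction checks out: the five vertices are pairwise distinct, and Proposition~\ref{prop:claw-free}(i) pins down the endpoints so that the only edges among $\{a',a,v,b,b'\}$ are $a'a$, $ab$, $bb'$, $av$, $bv$.
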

\begin{proof}
  All the assertions hold trivially or vacuously when $G$ is complete.  Henceforth we assume that $G$ is not complete, hence $\ell > 1$.
  Let $v$ be the first vertex of $\sigma$.
  We first show that for every pair of vertices $x$ and $y$, if $\rp{v} < \lp{x} < \lp{y}$, then $x<_{\sigma} y$.
  Suppose for contradiction that $\rp{v} < \lp{x} < \lp{y}$ but $y<_{\sigma} x$, and let $x,y$ be chosen in way that (A1) $\lp{x}$ is the smallest among all such pairs; and (A2) $y$ is the first in $\sigma$ for the fixed $x$.  By assumption (A2), $\lp{u} \le \lp{x}$ for every vertex $u\in N_{\sigma}(y)\setminus N(v)$.  Thus, if a vertex in $N_{\sigma}(y)$ is adjacent to $y$, then by the definition of clique paths, it has to be adjacent to $x$ as well.
  By Proposition~\ref{prop:claw-free}(ii), there exists some vertex $w$ with $\lp{w} < \rp{w} = \lp{x}$, then $w\in N(x) \setminus N(y)$.  Note that $w<_{\sigma} y$: It follows from the procedure \textsc{lbfs} if $w$ is adjacent to $v$, and by assumption (A1) otherwise.  In summary, the label of $y$ is a proper subset of that of $x$ when it is visited, which is impossible.
  By Proposition~\ref{prop:ui-ordering-reversal} and a symmetric argument as above, we can conclude that for every pair of vertices $x$ and $y$, if $\rp{y} < \rp{x} < \lp{v}$, then $x<_{\sigma} y$.

  (i) We may assume that $v$ is not a universal vertex.
  Otherwise we may produce another ordering $\sigma'$ from $\sigma$ by exchanging the first non-universal vertex in $\sigma$ and $v$.
  It is easy to verify that $\sigma$ is an \textsc{lbfs} ordering of $G$ if and only if $\sigma'$ is.  Note that the last vertex of $\sigma'$ is the last vertex of $\sigma$.
  By procedure \textsc{lbfs}, the last vertex $z$ of $\sigma$ is not adjacent to $v$.  Suppose without loss of generality that $\lp{z} > \rp{v}$.  Then by the argument above, we must have $\lp{z} = \ell$; in other words, $z$ is a simplicial vertex in $K_{\ell}$.  A symmetric argument concludes that $z$ is a simplicial vertex in $K_{1}$ if $\rp{z} < \lp{v}$.

  (ii) For $i = 1$, it is because all vertices in $K_1$ are adjacent to the first vertex in $\sigma$ while vertices in $K_2\setminus K_1$ are not.  For $i \ge 2$, the statement follows from the argument above.

  (iii) Suppose that $w$ is the first vertex in $\sigma|_{S}$, i.e., $S=S_{\sigma}(w)$, and we show that $p = \lp{w}$ is the required index.
  By Proposition~\ref{prop:claw-free}(ii), there exists a vertex $u$ with $\lp{u} < \rp{u} = \lp{w}$.  By (ii),  $u<_{\sigma}w$, and $S$ is disjoint from $K_1, \ldots, K_{p-1}$.  Thus, $S\subseteq N(u)\setminus \bigcup_{i=1}^{p-1}K_{i}\subseteq K_{p}\setminus K_{p-1}$.
\end{proof}

We are now ready to prove the main theorem of this section.
The key observation is that any pair of vertices that are not true twins can be distinguished by an \textsc{lbfs} from one end, while an \textsc{lbfs}$^{+}$ ordering combines information from both.  For example, vertices $3$ and $5$ in Figure~\ref{fig:bull} cannot be distinguished by an \textsc{lbfs} from vertex $1$, but can be distinguished by any \textsc{lbfs} from vertex $4$.  It is the other way for vertices $2$ and $5$.

\begin{theorem}\label{thm:uig}
    Let $G$ be a unit interval graph, and let $\sigma$ be an \textsc{lbfs} ordering of $G$.  If $\sigma$ starts from an end vertex of $G$, then $\textsc{lbfs}^{+}(G, \sigma)$ is an umbrella ordering of $G$.
\end{theorem}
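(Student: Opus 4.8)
The plan is to show that $\sigma^{+} := \textsc{lbfs}^{+}(G,\sigma)$ is consistent with the reversed clique path, and that its reversal is consistent with the original clique path; by Theorem~\ref{thm:ordering-path} this makes both $\sigma^{+}$ and its reversal interval orderings, which is exactly the criterion (stated in Section~\ref{sec:uig}) for $\sigma^{+}$ to be an umbrella ordering. Since \textsc{lbfs} and \textsc{lbfs}$^{+}$ visit each connected component consecutively and umbrella orderings concatenate freely over components, I first reduce to the case that $G$ is connected, the complete case being trivial since every ordering of a clique is an umbrella ordering. Let $\langle K_1, \ldots, K_\ell\rangle$ be the clique path, which is unique up to reversal by Theorem~\ref{thm:unique-clique-path}. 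By the end vertex theorem (Theorem~\ref{lem:end-vertex}) together with this uniqueness, I may assume the first vertex of $\sigma$ is a simplicial vertex of $K_1$, so that by (the proof of) Lemma~\ref{lem:uig-snapshot} the ordering $\sigma$ is consistent with the clique path; that is, $\lp{u} < \lp{v}$ implies $u <_{\sigma} v$.

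Next I identify where $\sigma^{+}$ begins. Its first vertex is the last vertex of $\sigma$, which by Lemma~\ref{lem:uig-snapshot}(i) is simplicial in $K_1$ or $K_\ell$; since $\sigma$ is sorted by \lp{} and $\ell>1$, it must lie in $K_\ell$. Applying Lemma~\ref{lem:uig-snapshot} to $\sigma^{+}$ and the reversed clique path $\langle K_\ell, \ldots, K_1\rangle$ then shows that $\sigma^{+}$ is consistent with the reversed path: $\rp{u} > \rp{v}$ implies $u <_{\sigma^{+}} v$. By Theorem~\ref{thm:ordering-path}, $\sigma^{+}$ is already an interval ordering, settling one of the two required directions. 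In particular the vertices sharing a common value $\rp{}=r$, namely the set $R_r = K_r\setminus K_{r+1}$, form a consecutive block of $\sigma^{+}$.

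The remaining and harder direction is to show that the reversal of $\sigma^{+}$ is consistent with the original clique path, i.e.\ $\lp{u} < \lp{v}$ implies $v <_{\sigma^{+}} u$. I split by the value of \rp{}. When $\rp{u} < \rp{v}$ this is immediate from the previous paragraph; the case $\rp{u} > \rp{v}$ (together with $\lp{u} < \lp{v}$) I will rule out, since it would mean $[\lp{v},\rp{v}]$ is strictly contained in $[\lp{u},\rp{u}]$, and a short argument then produces a claw $\{u,v,x,z\}$ centered at $u$, with $x \in K_{\lp{v}-1}\setminus K_{\lp{v}}$ and $z \in K_{\rp{v}+1}\setminus K_{\rp{v}}$, contradicting that $G$ is claw-free. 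The crucial case is $\rp{u} = \rp{v}$, where $u$ and $v$ lie in the same block $R_r$: here I must show that $\sigma^{+}$ orders $R_r$ by \lp{} decreasing. This is precisely where the tie-breaking power of \textsc{lbfs}$^{+}$ enters, and I expect it to be the main obstacle.

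To handle the block $R_r$, I will observe that $R_r$ is a clique and that, at the moment $\sigma^{+}$ begins visiting it, every vertex of $R_r$ has exactly the same set of already-visited neighbors, namely $K_r\cap K_{r+1}$: a vertex with $\rp{}>r$ is adjacent to some, hence to all, members of $R_r$ precisely when it lies in $K_r\cap K_{r+1}$. Consequently all unvisited vertices of $R_r$ carry identical labels throughout the processing of the block, so their relative order in $\sigma^{+}$ is decided purely by the \textsc{lbfs}$^{+}$ rule, which reverses their order in $\sigma$. Since $\sigma$ sorts $R_r$ by \lp{} increasing, $\sigma^{+}$ sorts it by \lp{} decreasing, as needed. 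Combining the three cases yields the second consistency, and hence---via Theorem~\ref{thm:ordering-path} and the umbrella criterion, or equivalently by matching the reversal of characterization~\eqref{eq:ui-ordering} on every non-true-twin pair and invoking Proposition~\ref{prop:ui-ordering-reversal}---that $\sigma^{+}$ is an umbrella ordering of $G$.
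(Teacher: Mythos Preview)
Your proof is correct and rests on the same mechanism as the paper's: apply Lemma~\ref{lem:uig-snapshot}(ii) once to $\sigma$ and once to $\sigma^{+}$ (each with the appropriate orientation of the unique clique path), and resolve the tied case via the \textsc{lbfs}$^{+}$ selection rule on the clique block of vertices sharing an endpoint. The paper's write-up is slightly shorter only because it orients the path so that $\sigma^{+}$ starts at $K_1$ and verifies characterization~\eqref{eq:ui-ordering} for $\sigma^{+}$ directly, splitting on whether $\lp{u}=\lp{v}$; in that framing your case $\lp{u}<\lp{v}$ with $\rp{u}>\rp{v}$ never arises, so the claw-freeness detour is unnecessary.
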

\begin{proof}
  Let $\sigma^{+} = \textsc{lbfs}^{+}(G, \sigma)$.
  We may renumber the vertices such that $\sigma^{+}(v_i) = i$.  Then by the procedure \textsc{lbfs}$^{+}$, $\sigma(v_1)  = n$.
  It suffices to show that $p < q$ for every pair of vertices $v_p$ and $v_q$ in $G$ satisfying \eqref{eq:ui-ordering}.
  If $\lp{u} < \lp{v}$, then $p < q$ follows from Lemma~\ref{lem:uig-snapshot}(ii).  Now that $\lp{u} = \lp{v}$ and $\rp{u} < \rp{v}$; note that $v_p$ and $v_q$ are adjacent in $G$.
  When $\sigma^{+}$ visits the first of $v_p$ and $v_q$, the other is also in the snapshot.  By Lemma~\ref{lem:uig-snapshot}(ii), applied to the reversal of the clique path, $v_q <_{\sigma} v_p$, and thus  $\sigma^{+}$ should choose $v_p$.  Therefore, we always have $v_p <_{\sigma^{+}} v_q$, and this concludes the proof.
\end{proof}

Described in Figure~\ref{fig:alg-uig-3}  is the algorithm from Corneil~\cite{corneil-04-recognize-uig}.  By procedure \textsc{lbfs}$^{+}$, the first vertex of $\sigma$ is an end vertex.  If $G$ is a unit interval graph, then by Theorem~\ref{thm:uig}, $\sigma^{+}$ is an interval ordering of $G$. On the other hand, any ordering is incorrect if $G$ is not a unit interval graph.

\begin{figure}[h!]
  \centering 
  \begin{tikzpicture}
    \path (0,0) node[text width=.65\textwidth, inner xsep=20pt, inner ysep=10pt] (a) {
      \begin{minipage}[t!]{\textwidth}
        \begin{tabbing}
          Output: \= \kill
          Input: \> A connected graph $G$.
          \\
          Output: Whether $G$ is a unit interval graph.
        \end{tabbing}        

        \begin{tabbing}
          AAA\=Aaa\=aaa\=Aaa\=MMMMMMAAAAAAAAAAAAA\=A \kill
          1.\> $\tau\leftarrow$ an \textsc{lbfs} ordering of $G$;
          \\
          2.\> $\sigma\leftarrow \textsc{lbfs}^{+}(G, \tau)$;
          \\
          3.\> $\sigma^{+}\leftarrow \textsc{lbfs}^{+}(G, \sigma)$;
          \\
          4.\> \textbf{if} $\sigma^{+}$ is an umbrella ordering of $G$ \textbf{then return} ``yes'';
          \\
          5.\> \textbf{else return} ``no.''
        \end{tabbing}
      \end{minipage}
    };
    \draw[draw=gray!60] (a.north west) -- (a.north east) (a.south west) -- (a.south east);
  \end{tikzpicture}
  \caption{The three-sweep recognition algorithm for unit interval graphs~\cite{corneil-04-recognize-uig}.}
  \label{fig:alg-uig-3}
\end{figure}

Some remarks on Theorem~\ref{thm:uig}.  Our statement is slightly more general than the one made by Corneil~\cite{corneil-04-recognize-uig}; in particular, we only require $\sigma$ to start from an end vertex, and it does not need to be an \textsc{lbfs}$^{+}$ ordering.  As a result, $\tau$ does not need to be an \textsc{lbfs} ordering either.  Since the only purpose of the first sweep is to find an end vertex, it can be replaced by \textsc{bfs}.  The following lemma from an earlier recognition algorithm of Corneil et al.~\cite{corneil-95-recognition-unit-interval} can help us to find an end vertex with \textsc{bfs}.  Indeed, they developed a recognition algorithm for unit interval graphs using only \textsc{bfs}.  Note that this statement does not apply to interval graphs.
\begin{lemma}[\cite{corneil-95-recognition-unit-interval}]
  Let $G$ be unit interval graph.  Let $T$ be a \textsc{bfs} tree of $G$.  A vertex in the last level with the minimum degree is an end vertex.
\end{lemma}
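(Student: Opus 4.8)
The plan is to abandon the clique path entirely and argue inside an umbrella ordering, where closed neighborhoods become index intervals and degrees can be read off directly. Fix an umbrella ordering $\sigma=\langle v_1,\dots,v_n\rangle$ of the connected unit interval graph $G$, which exists since $G$ is a unit interval graph. First I would record two elementary facts that drive everything. By the definition of umbrella ordering, each closed neighborhood is a set of \emph{consecutive} vertices $N[v_i]=\{v_{\alpha_i},\dots,v_{\beta_i}\}$ (apply the umbrella property to the triples $v_{\alpha_i}<v_c<v_i$ and $v_i<v_c<v_{\beta_i}$ to fill in the gaps), and moreover $\alpha_i$ and $\beta_i$ are both nondecreasing in $i$ (if $i<j$ but $\beta_j<\beta_i$, the triple $v_i<v_j<v_{\beta_i}$ violates the umbrella property). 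Consequently $\deg(v_i)=\beta_i-\alpha_i$.

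Next I would translate the breadth-first search into this coordinate. Let $v_s$ be the root of the \textsc{bfs} tree $T$. Because $\beta$ is nondecreasing, the set of vertices reached to the right of $v_s$ within distance $t$ is a prefix $\{v_s,\dots,v_{P_t}\}$, governed by $P_0=s$ and $P_{t+1}=\beta_{P_t}$, and symmetrically on the left; since the left-neighbors of any reached vertex have smaller $\beta$, moving left never helps reach a vertex on the right, so for $j>s$ the level of $v_j$ in $T$ equals the least $t$ with $j\le P_t$. Thus distance from $v_s$ is monotone as one moves away from $v_s$ along $\sigma$, and the last level of $T$ is confined to an initial block $\{v_1,\dots\}$ together with a final block $\{\dots,v_n\}$ of $\sigma$. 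The crucial gain is on the right block: if it is nonempty, say it equals $\{v_j:P_{k-1}<j\le n\}$ where $k$ is the depth of $T$, then $\beta_{P_{k-1}}=P_k=n$, so by monotonicity $\beta_j=n$ for \emph{every} $j$ in this block, whence $\deg(v_j)=n-\alpha_j$ throughout it; the symmetric statement holds on the left block with $\alpha_j=1$.

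It then follows that degree is minimized within the right block exactly where $\alpha_j$ is largest, namely at $v_n$, and within the left block at $v_1$. Any last-level vertex attaining the right-block minimum has $\alpha_j=\alpha_n$ and $\beta_j=n=\beta_n$, hence $N[v_j]=N[v_n]$, so it is a true twin of $v_n$; symmetrically on the left. Therefore every minimum-degree vertex of the last level is $v_1$, $v_n$, or a true twin of one of them. Finally I would note that $v_1$ and $v_n$ are end vertices: $\sigma$ is in particular an interval ordering, hence an \textsc{lbfs} ordering by Proposition~\ref{prop:interval-is-lbfs}, so its last vertex $v_n$ is an end vertex; applying this to the reversal of $\sigma$, which is again an umbrella ordering by Proposition~\ref{prop:ui-ordering-reversal}, shows $v_1$ is an end vertex as well. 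Since true twins of an end vertex have identical closed neighborhoods, they are end vertices too, and the lemma follows.

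The step I expect to require the most care is the last one, ruling out a non-extreme vertex tying for the minimum degree in the last level, especially when the last level meets both ends of $\sigma$ at once: one must argue that a vertex attaining the \emph{global} last-level minimum is forced into an extreme block and then forced to share its neighborhood with $v_1$ or $v_n$. The monotonicity of $\alpha$ and $\beta$ is exactly what collapses this worry, so the bookkeeping is light once those two facts are in place; the identity $\beta_{P_{k-1}}=n$, which flattens $\beta$ across the whole right block, is the real heart of the argument and is worth isolating as a small claim.
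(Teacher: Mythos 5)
Your proof is correct, but note that the paper offers no proof of this lemma to compare against: it is stated as a citation to Corneil et al.\ (1995) and used as a black box. Your argument is therefore a genuinely self-contained addition. The route you take---working entirely inside an umbrella ordering, where $N[v_i]=\{v_{\alpha_i},\dots,v_{\beta_i}\}$ with $\alpha$ and $\beta$ nondecreasing, so that $\deg(v_i)=\beta_i-\alpha_i$ and BFS levels are governed by the iterated map $P_{t+1}=\beta_{P_t}$---is clean and checks out: the identity $P_k=n$ (hence $\beta_j=n$ across the whole right block of the last level) is indeed the crux, and it correctly forces any within-block degree minimizer to share its closed neighborhood with $v_n$ (symmetrically $v_1$), disposing of the case where the last level straddles both ends. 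Two small points deserve a sentence each if this were to be written up: the claim that a true twin of an end vertex is itself an end vertex needs the (standard but not stated in the paper) observation that swapping true twins in an \textsc{lbfs} ordering yields another \textsc{lbfs} ordering, since their labels coincide until one of them is visited; and the degenerate case where the BFS depth is $0$ (so the ``last level'' is the root itself) should be dismissed explicitly. Neither affects correctness. Compared with the original reference, which argues via straight enumeration properties of \textsc{bfs} on indifference graphs, your coordinate-based argument has the advantage of fitting naturally into the paper's own machinery (Propositions~\ref{prop:interval-is-lbfs} and~\ref{prop:ui-ordering-reversal}) and of isolating exactly why minimum degree in the last level certifies extremality in the umbrella ordering.
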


Recall that setting $I(v) = [\lp{v}, \rp{v}]$ for every vertex gives an interval representation.  Lemma~\ref{lem:uig-snapshot}(ii) has the following implication.  

\begin{corollary}
  Let $G$ be a unit interval graph.  If we start \textsc{lbfs} from an end vertex of $G$, then the result is an interval ordering of $G$.
\end{corollary}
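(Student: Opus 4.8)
The plan is to reduce the statement, via Theorem~\ref{thm:ordering-path}, to showing that the output ordering $\sigma$ is \emph{consistent with} a suitable clique path of $G$; the clique path is supplied by the end vertex theorem, and the consistency by Lemma~\ref{lem:uig-snapshot}(ii). Throughout I take $G$ to be connected, as assumed in the rest of this section (the recognition algorithm's input is connected); connectivity is in fact essential, as I note at the end.

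Let $v$ be the end vertex we start from, so $v$ is the first vertex of the resulting \textsc{lbfs} ordering $\sigma$. By the end vertex theorem (Theorem~\ref{lem:end-vertex}), there is a clique path $\langle K_1, K_2, \ldots, K_\ell \rangle$ of $G$ in which $v$ is a simplicial vertex of $K_1$; I fix this clique path and the resulting values $\lp{\cdot}$ and $\rp{\cdot}$, which give the interval representation $I(w)=[\lp{w},\rp{w}]$. Since $\sigma$ starts from the end vertex $v\in K_1$, Lemma~\ref{lem:uig-snapshot}(ii) applies and tells us that for every $i\in\{1,\ldots,\ell-1\}$, all vertices of $K_i$ precede all vertices of $K_{i+1}\setminus K_i$ in $\sigma$.

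Writing $L_j := \{w : \lp{w}=j\} = K_j\setminus K_{j-1}$ (with $K_0=\emptyset$) for $j=1,\ldots,\ell$, I first note that each $L_j$ is nonempty, since distinct maximal cliques are incomparable and so $K_j\not\subseteq K_{j-1}$. As $L_i\subseteq K_i$ and $L_{i+1}=K_{i+1}\setminus K_i$, Lemma~\ref{lem:uig-snapshot}(ii) gives that every vertex of $L_i$ precedes every vertex of $L_{i+1}$ in $\sigma$, for each $i$. To obtain consistency I must upgrade these consecutive comparisons to the global statement that $\lp{u}<\lp{w}$ implies $u<_\sigma w$; I do this by chaining, choosing any representative of each intermediate $L_j$ (possible precisely because no $L_j$ is empty) to link $u$ to $w$ through $L_{\lp{u}+1},\ldots,L_{\lp{w}-1}$. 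This shows $\sigma$ is consistent with the fixed clique path, and Theorem~\ref{thm:ordering-path} concludes that $\sigma$ is an interval ordering.

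The only nonroutine step is this transitivity across non-adjacent clique indices. A naive one-step application of Lemma~\ref{lem:uig-snapshot}(ii) fails: given $\lp{u}=a<b=\lp{w}$, one would like $u\in K_{b-1}$ in order to place $u$ before $w$ directly, but $u$ may satisfy $\rp{u}<b-1$ and hence lie in none of the intermediate cliques. The chaining argument circumvents this, and the observation that every $K_{i+1}\setminus K_i$ is nonempty (by maximality of the cliques) is exactly what makes the chain connect. Finally, connectivity cannot be dropped: Lemma~\ref{lem:uig-snapshot} requires it, and indeed for a disconnected graph the \textsc{lbfs} may enter a component other than the first at a non-end vertex, producing an ordering that is not an interval ordering of that component.
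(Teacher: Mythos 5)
Your proof is correct and follows exactly the route the paper intends: the paper states this corollary as an immediate implication of Lemma~\ref{lem:uig-snapshot}(ii), and your argument---fixing the clique path via the end vertex theorem (Theorem~\ref{lem:end-vertex}), upgrading the consecutive comparisons of Lemma~\ref{lem:uig-snapshot}(ii) to full consistency with that clique path by chaining through the nonempty sets $K_j\setminus K_{j-1}$, and concluding with Theorem~\ref{thm:ordering-path}---is precisely the verification the paper leaves implicit. Your closing remark that connectivity is genuinely needed is also well taken, since Lemma~\ref{lem:uig-snapshot} assumes it while the corollary's statement does not repeat it.
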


For the purpose of making an interval ordering, an \textsc{lbfs} from a simplicial vertex in $K_1$ does not need to distinguish vertices in $K_{p}\setminus K_{p-1}$ for $p = 1, \ldots, \ell$.  In other words, we only need to consider the first row of \eqref{eq:ui-ordering}.
For two vertices $u$ and $v$ with $\lp{u} = \lp{v}$, we have $\rp{u} < \rp{v}$ if and only if the degree of $u$ is strictly smaller than $v$.  Therefore, to distinguish such a pair of vertices, we do not really do two sweeps, and it suffices to use the vertex degrees.
We are thus motivated to define another variation of \textsc{lbfs}, which always chooses a vertex of the minimum degree from the current proper snapshot.
Apart from the graph $G$, the procedure \textsc{lbfs}$^{\delta}(G, u)$ takes an end vertex $u$ of $G$ as input.   It replaces step~2.2 of procedure \textsc{lbfs} (Figure~\ref{fig:alg-lbfs}) by
\begin{itemize}
\item[2.2.] \quad \textbf{if} $i=1$ \textbf{then} $v\leftarrow u$;
  \vspace*{-8pt}
\item[] \quad\textbf{else} $v\leftarrow$ a vertex with the minimum degree in $S$;
\end{itemize}
We remark that a variation of \textsc{lbfs} that chooses a largest-degree vertex has been used by Hsu and Ma~\cite{hsu-99-recognizing-interval-graphs} for finding modules of a chordal graph.  We have thus a two-sweep recognition algorithm for unit interval graphs, as described in Figure~\ref{fig:alg-uig-2}.

\begin{figure}[h!]
  \centering 
  \begin{tikzpicture}
    \path (0,0) node[text width=.65\textwidth, inner xsep=20pt, inner ysep=10pt] (a) {
      \begin{minipage}[t!]{\textwidth}
        \begin{tabbing}
          Output: \= \kill
          Input: \> A connected graph $G$.
          \\
          Output: Whether $G$ is a unit interval graph.
        \end{tabbing}        

        \begin{tabbing}
          AAA\=Aaa\=aaa\=Aaa\=MMMMMMAAAAAAAAAAAAA\=A \kill
          1.\> $u\leftarrow$ an end vertex of $G$;
          \\
          2.\> $\sigma\leftarrow \textsc{lbfs}^{\delta}(G, u)$;
          \\
          4.\> \textbf{if} $\sigma$ is an umbrella ordering of $G$ \textbf{then return} ``yes'';
          \\
          5.\> \textbf{else return} ``no.''
        \end{tabbing}
      \end{minipage}
    };
    \draw[draw=gray!60] (a.north west) -- (a.north east) (a.south west) -- (a.south east);
  \end{tikzpicture}
  \caption{A two-sweep recognition algorithm for unit interval graphs.}
  \label{fig:alg-uig-2}
\end{figure}

\begin{theorem}
    Let $G$ be a unit interval graph.  If $u$ is an end vertex of $G$, then $\textsc{lbfs}^{\delta}(G, u)$ is an umbrella ordering of $G$.
\end{theorem}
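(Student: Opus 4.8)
The plan is to verify that $\sigma := \textsc{lbfs}^{\delta}(G, u)$ satisfies the characterization \eqref{eq:ui-ordering} for every pair of vertices that are not true twins; since (by the discussion around \eqref{eq:ui-ordering} together with Corollary~\ref{cor:ui-ordering}) an umbrella ordering is precisely an ordering that obeys \eqref{eq:ui-ordering} on non-twins and is arbitrary on true twins, this will establish the theorem. As the lemmas of this section assume connectedness, as does the input of the algorithm, I take $G$ to be connected. The first thing to observe is that $\textsc{lbfs}^{\delta}(G,u)$ is a genuine \textsc{lbfs} ordering starting at $u$: choosing a minimum-degree vertex of the snapshot $S$ in step~2.2 is merely one legal way of breaking ties inside the lexicographically largest label class. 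Hence Lemma~\ref{lem:uig-snapshot} applies to $\sigma$. Fixing, via Theorem~\ref{thm:unique-clique-path}, the orientation of the (unique) clique path $\langle K_1, \ldots, K_\ell\rangle$ so that the end vertex $u$ lies in $K_1$ (possible by Lemma~\ref{lem:uig-snapshot}(i)), Lemma~\ref{lem:uig-snapshot}(ii) delivers the first row of \eqref{eq:ui-ordering}: $\lp{x} < \lp{y}$ implies $x <_\sigma y$, i.e.\ $\sigma$ is consistent with the clique path.

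It remains to handle the second row: for a non-twin pair with $\lp{x} = \lp{y} =: p$ and $\rp{x} < \rp{y}$, I must show $x <_\sigma y$. Here I would argue by contradiction, assuming $y <_\sigma x$ and inspecting the iteration in which $y$ is selected. The key structural input is that, in the representation $I(v) = [\lp{v}, \rp{v}]$, the two intervals share their left endpoint, so $N[x] \subseteq N[y]$, and every vertex in $N(y) \setminus N(x)$ is a vertex $w$ with $\rp{x} < \lp{w}$, hence with $\lp{w} > \lp{y}$. By the first row already established, all such $w$ satisfy $y <_\sigma w$, so none of them has been visited at the moment $y$ is picked. Consequently the visited neighborhoods of $x$ and of $y$ coincide at that moment, so $x$ and $y$ carry identical labels; since $y$'s label is lexicographically largest among the unvisited vertices, $x$ lies in the same snapshot $S = S_\sigma(y)$. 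But the degree remark (that $\lp{x} = \lp{y}$ together with $\rp{x} < \rp{y}$ force $\deg x < \deg y$) exhibits $x$ as a strictly smaller-degree member of $S$, so step~2.2 of $\textsc{lbfs}^{\delta}$ could not have selected $y$ --- a contradiction. Therefore $x <_\sigma y$, and together with the first row this shows that $\sigma$ obeys \eqref{eq:ui-ordering} on all non-twin pairs.

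The main obstacle is the argument of the second paragraph: proving that $x$ stays in $y$'s snapshot right up to the decision point, so that the minimum-degree tie-break is in fact the rule that separates them. This hinges on the observation that the only neighbors distinguishing $y$ from $x$ have strictly larger $\lp$ and are therefore visited later --- which is exactly why a single degree comparison recovers the information that Theorem~\ref{thm:uig} extracted from a second $\textsc{lbfs}^{+}$ sweep. Everything else (the reduction to \eqref{eq:ui-ordering}, the validity of $\textsc{lbfs}^{\delta}$ as an \textsc{lbfs}, and the first-row inequality) is routine given the earlier lemmas; the degree remark itself I would justify in one line from $N[x] \subsetneq N[y]$, noting that the containment is strict precisely because $x$ and $y$ are not true twins.
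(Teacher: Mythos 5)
Your proof is correct and takes essentially the approach the paper intends: the paper states this theorem without a formal proof, and the argument it sketches in the surrounding discussion is exactly yours --- the first row of \eqref{eq:ui-ordering} from Lemma~\ref{lem:uig-snapshot}(ii), and the second row by showing that the two vertices still share a snapshot when the first of them is reached (mirroring the corresponding step in the proof of Theorem~\ref{thm:uig}), at which point the degree tie-break replaces the second $\textsc{lbfs}^{+}$ sweep. Your explicit verification that every vertex of $N(y)\setminus N(x)$ has larger left endpoint and is therefore visited after $y$ supplies the one step the paper leaves implicit.
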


Tarjan and Yannakakis~\cite{tarjan-84-chordal-recognition} proposed another graph search algorithm for the purpose of recognizing chordal graphs, among others.
Also based on adjacencies with visited vertices, an
\emph{maximum cardinality search} (\textsc{mcs}) chooses an unvisited vertex that has the maximum number of visited neighbors.  Lemma~\ref{lem:uig-snapshot}(ii) can be re-interpreted as: An \textsc{lbfs} of a unit interval graph from an end vertex visits the maximal cliques one by one.  Therefore, it is also an \textsc{mcs} \cite{cao-19-end-vertices}, and we have the following corollary, where \textsc{mcs}$^{\delta}$ is defined in a similar spirit as \textsc{lbfs}$^{\delta}$.

\begin{corollary}
  Let $G$ be a unit interval graph.  If $u$ is an end vertex of $G$, then $\textsc{mcs}^{\delta}(G, u)$ is an umbrella ordering of $G$.
\end{corollary}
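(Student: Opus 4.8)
The plan is to prove for an \textsc{mcs} started at an end vertex the exact analogue of Lemma~\ref{lem:uig-snapshot}(ii)--(iii), and then to read off the umbrella ordering from \eqref{eq:ui-ordering} precisely as in the \textsc{lbfs}$^{\delta}$ case. As there, I may assume $G$ is connected; by Lemma~\ref{lem:uig-snapshot}(i) together with the reflection symmetry of the clique path (Theorem~\ref{thm:unique-clique-path}) I may further assume that $u$ is a simplicial vertex of $K_1$, so that $\lp{u}=\rp{u}=1$. Throughout, call $K_p\setminus K_{p-1}$ the $p$-th \emph{layer} (with $K_0=\emptyset$); then $\lp{v}=p$ for every vertex $v$ in layer $p$.

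The core will be a single invariant, maintained by induction on the number of visited vertices: at each moment the visited set equals $\{w:\lp{w}<p\}\cup T$ for some index $p$ and some $T\subseteq$ layer $p$, and the current snapshot is exactly the set of unvisited vertices of layer $p$. The first half of the inductive step is that all unvisited layer-$p$ vertices are tied for the maximum number of visited neighbours: each such $v$ has visited neighbourhood $(K_{p-1}\cap K_p)\cup T$, because its visited neighbours in earlier layers are exactly $K_{p-1}\cap K_p$, and, all of layer $p$ lying inside the clique $K_p$, it is adjacent to every vertex of $T$.

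The second half is that every such $v$ strictly out-scores every unvisited vertex $v'$ of a later layer. For this I will exhibit a visited vertex $s$ that is adjacent to all unvisited layer-$p$ vertices but to no vertex $x$ with $\lp{x}>p$: when $p\ge 2$, take the vertex with $\lp{s}<\rp{s}=p$ furnished by Proposition~\ref{prop:claw-free}(ii), which lies in $K_p$, is already visited as $\lp{s}<p$, and reaches no farther than $K_p$; when $p=1$, take $s=u$, using $\rp{u}=1$. Every other visited neighbour $w'$ of $v'$ satisfies $\rp{w'}\ge\lp{v'}>p$ and is therefore adjacent to $v$ as well, so the visited neighbourhood of $v'$ is contained in that of $v$ yet omits $s$; the count for $v'$ is thus strictly smaller. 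This shows the snapshot is exactly the unvisited part of layer $p$, and in particular that \textsc{mcs} leaves a layer only after exhausting it, i.e. it visits the maximal cliques one by one.

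The conclusion is then immediate from \eqref{eq:ui-ordering}. Vertices of earlier layers precede those of later layers, which is the first line of \eqref{eq:ui-ordering}. Within a single layer all vertices $x$ have $\lp{x}=p$, so $\rp{u}<\rp{v}$ is equivalent to $\deg(u)<\deg(v)$; since the snapshot is always the full unvisited part of the current layer and \textsc{mcs}$^{\delta}$ extracts a minimum-degree vertex, the layer is emptied so that if $u$ precedes $v$ inside it then $\rp{u}\le\rp{v}$, which is the second line of \eqref{eq:ui-ordering}. Hence the order of any two vertices that are not true twins agrees with \eqref{eq:ui-ordering}, which is exactly the statement that the output is an umbrella ordering. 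I expect the clique-by-clique claim to be the only real difficulty: the counting argument must replace the lexicographic reasoning behind Lemma~\ref{lem:uig-snapshot}, since \textsc{mcs} selects by a different rule and the \textsc{lbfs} lemma cannot be quoted directly, and the separating vertex $s$ has to be located for every layer, with the boundary layer $p=1$ (where $K_0=\emptyset$ makes the earlier-layer contribution vanish) handled by the start vertex rather than by Proposition~\ref{prop:claw-free}(ii).
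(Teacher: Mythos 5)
Your proof is correct, but it takes a more self-contained route than the paper, which states this corollary essentially without proof: the paper re-interprets Lemma~\ref{lem:uig-snapshot}(ii) as saying that an \textsc{lbfs} of a connected unit interval graph from an end vertex visits the maximal cliques one by one, observes that such an ordering is therefore also an \textsc{mcs} ordering (delegating that step to the cited reference \cite{cao-19-end-vertices}), and lets the corollary ride on the analogy with the unproved \textsc{lbfs}$^{\delta}$ theorem. Note that, strictly speaking, knowing that every \textsc{lbfs} from an end vertex is a valid \textsc{mcs} does not by itself constrain an arbitrary run of \textsc{mcs}$^{\delta}$; what the corollary actually needs is that \emph{every} \textsc{mcs} started at an end vertex sweeps the clique path layer by layer, and that is precisely what your invariant establishes from scratch. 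Your counting argument is sound at each step: the invariant form $\{w:\lp{w}<p\}\cup T$ of the visited set, the tie among unvisited layer-$p$ vertices with common visited neighbourhood $(K_{p-1}\cap K_p)\cup T$, and the separating vertex $s$ with $\lp{s}<\rp{s}=p$ supplied by Proposition~\ref{prop:claw-free}(ii) (with $s=u$ for the boundary layer $p=1$, using that an end vertex may be assumed simplicial in $K_1$ by Lemma~\ref{lem:uig-snapshot}(i)) correctly yield strict domination over every later-layer vertex, since each visited neighbour $w'$ of such a vertex satisfies $\lp{w'}\le p\le \rp{w'}$, hence lies in $K_p$ and is also a neighbour of $v$. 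Your endgame---within a layer, minimum degree orders vertices by $\rp{\cdot}$, and pairwise agreement with \eqref{eq:ui-ordering} for non-twins gives an umbrella ordering---is exactly how the paper argues for \textsc{lbfs}$^{\delta}$, and the degree equivalence you use is stated explicitly in the paper. In short, you replace the paper's citation by a direct elementary proof of the clique-by-clique property for \textsc{mcs}; this buys self-containment and closes the directional gap in the paper's two-sentence justification, at the cost of redoing for the cardinality-counting rule what Lemma~\ref{lem:uig-snapshot} did lexicographically.
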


\section{Interval graphs}\label{sec:algorithm}

Compared to unit interval graphs, an \textsc{lbfs} of an interval graph can jump in an unpredictable and sometimes arbitrary way because of the existence of ``long intervals.''
The graph in Figure~\ref{fig:the-graph} was devised by Corneil et al.~\cite{corneil-09-lbfs-strucuture-and-interval-recognition}.
Since it is very handy for our explanation, we will use it as the main example for this section, and we henceforth referred to it as $G^{\star}$.
As indicated by the integral points in the figure, $G^{\star}$ has $16$ maximal clique.  Since $G^{\star}$ is prime, the arrangement of maximal cliques is unique by Theorem~\ref{thm:prime}, though the graph admits many different valid interval orderings.
After visiting vertices $1$ and $2$ in $G^{\star}$, an \textsc{lbfs} may visit any vertex between $3$ and $20$ as the third vertex.
In general, an \textsc{lbfs} ordering of an interval graph can start from an end vertex, and then quickly jump to another vertex in an arbitrary position in the interval representation, even when the graph is prime.
Therefore, one should be extremely careful when talking about the ``left'' and the ``right'' for an interval graph $G$.
  
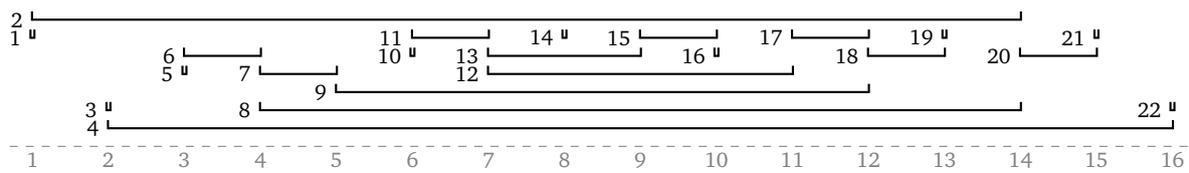
\begin{figure}[h]
  \centering\scriptsize
    \begin{tikzpicture}[yscale=1.2]
      \begin{scope}[every path/.style={{|[right]}-{|[left]}}]
        \foreach \x in {1, ..., 16} \node[gray] at (\x-1, -.15) {$\x$};
        \foreach[count=\i] \l/\r/\y/\c in {-.02/.02/6/, 0/13/7/,
          0.98/1.02/2/, 1/15/1/, 
          1.98/2.02/4/, 2/3/5/, 3/4/4/,
          3/13/2/, 4/11/3/, 4.98/5.02/5/,
           5/6/6/,
          6/10/4/, 6/8/5/, 6.98/7.02/6/,
          8/9/6/, 8.98/9.02/5/, 10/11/6/, 11/12/5/, 11.98/12.02/6/, 13/14/5/, 
          13.98/14.02/6/, 14.98/15.02/2/}{
          \draw[\c, thick] (\l-.02, \y/5) node[left] {$\i$} to (\r+.02, \y/5);
        }
      \end{scope}
      \draw[dashed, gray] (-.3, 0) -- (15.3, 0);
    \end{tikzpicture}
    \caption{An interval graph $G^{\star}$, presented as an interval representation.}
  \label{fig:the-graph}
\end{figure}

Listed in Figure~\ref{fig:example-orderings} are six \textsc{lbfs} orderings of $G^{\star}$.  In particular, for $i = 1, 2, 3$, $\sigma_i^{+} = \textsc{lbfs}^{+}(G^{\star}, \sigma_i)$.  The reader unfamiliar with $\textsc{lbfs}^{+}$ is suggested to go through these orderings before proceeding.
\begin{figure}[h]
  \centering\small
  \begin{align*}
    \sigma_1&: 1, 2, 20, 8, 4, [19, 18, 17, 9, 12, 16, 15, 13, 11, 14, 10, 7, 6], 5, 3, 21, 22.
    \\
    \sigma_1^{+}&: 22, 4, 21, 20, 8, 2, [6, 7, 9, 10, 11, 13, 12, 14, 15, 16, 17, 18, 19], 5, 3, 1.
    \\[3mm]
    \sigma_2&: 1, 2, 3, 4, 8, [20, [15, 16, 12, 9, 13, 11, 14, 17, 10, 18, 7, 19, 6]], 5, 21, 22.
    \\
    \sigma_2^{+}&: 22, 4, 21, 20, 8, 2, [6, 7, 9, 18, 17, 12, 14, 13, 11, 15, 16, 10, 19], 5, 3, 1.
    \\[3mm]
    \sigma_3&: 1, 2, [4, 20, 8, [6, 7, 9, 18, 17, 12, [11, 13, 15, 14, 16], 10, 19], 5, 3], 21, 22.
    \\
    \sigma_3^{+}&: 22, 4, 21, 20, 8, 2, [19, 18, 17, 9, 12, [16, 15, 13, 14, 11], 10, 7, 6], 5, 3, 1.
  \end{align*}
  \caption{For $i = 1, 2, 3$, $\sigma_i$ is an \textsc{lbfs} ordering of $G^{\star}$, and  $\sigma_i^{+} = \textsc{lbfs}^{+}(G^{\star}, \sigma_i)$.}
  \label{fig:example-orderings}
\end{figure}

Modules play a very similar role in interval graphs as true twins have played in unit interval graphs.
Recall that a set of true twins is a module.  Since an interval graph is chordal, if a module $M$ of an interval graph is not a clique, then $N(M)$ has to be a clique: Two nonadjacent vertices from $M$ and two nonadjacent vertices from $N(M)$ would induce  a $4$-cycle.
Let $M$ be a module of a graph $G$, and let $G'$ be the graph $G - (M\setminus\{v\})$ for some vertex $v\in M$.  Since we can always use the same intervals for true twins, if $M$ is a set of true twins, then $G$ is an interval graph if and only if $G'$ is.  In the general case, $M$ is not a clique.
We have mentioned that $N(M)$ is a clique; moreover, both $G'$ and $G[M]$ are induced subgraphs of $G$, hence also interval graphs.
It is known that these conditions are also sufficient \cite{hsu-99-recognizing-interval-graphs}.  Given any interval representation $\mathcal{I}_M$ for $G[M]$ and interval representation $\mathcal{I}'$ for $G'$, we can always project  $\mathcal{I}_M$ onto the interval for $s$ in $\mathcal{I}'$.  (Note that we can always modify an interval representation such that every interval has a positive length.)
If we take $G$ to be the graph in Figure~\ref{fig:bfs-example}, with $M = \{3, 4, \ldots, 7\}$ and $v = 5$, then the subgraph $G'$ and $G[M]$ are shown in Figure~\ref{fig:module-models}(a, b), and Figure~\ref{fig:module-models}(c) illustrates the projection of $\mathcal{I}_M$ onto $I(v)$ in $\mathcal{I}'$.

\begin{figure}[h]
  \centering\small
  \begin{subfigure}[b]{.19\linewidth}
    \centering
    \begin{tikzpicture}[every node/.style={filled vertex}, scale=1]
      \node ["$2$"] (v0) at (2, 2) {};
      \foreach[count=\i] \j in {1, 5, 8} {
        \node ["$\j$" below] (v\j) at (\i, 1) {};
        \draw (v\j)-- (v0);
      }
    \end{tikzpicture}
    \caption{}
  \end{subfigure}  
  \quad
  \begin{subfigure}[b]{.25\linewidth}
    \centering
    \begin{tikzpicture}[every node/.style={filled vertex}, scale=1]
      \foreach[count=\i from 2] \j in {3, 4, 6, 7} {
        \node ["$\j$" below] (v\j) at (\i, 1) {};
      }
      \node["$5$" below] (w) at (3.5, 0.7) {};
      \draw (v3)-- (v4)-- (v6)-- (v7) (v6)-- (w)--(v4);
    \end{tikzpicture}
    \caption{}
  \end{subfigure}  
  \begin{subfigure}[b]{.5\linewidth}
    \centering
    \begin{tikzpicture}[scale=1.2]
      \begin{scope}[every path/.style={{|[right]}-{|[left]}}]
        \foreach \l/\r/\y/\i/\c in {-.02/.02/1/1/, 0/3/2/2/,
          1/2/1/5/violet, 2.98/3.02/1/8/}{
          \draw[\c, thick] (\l-.02, \y/5) node[left, blue] {$\i$} to (\r+.02, \y/5);
        }
      \end{scope}
      \begin{scope}[every path/.style={{|[right]}-{|[left]}}, shift = {(1, -1)}]
        \foreach[count=\i from 3] \l/\r/\y/\c in {
          0.98/1.02/2/, 1/2/1/, 1.98/2.02/3/, 2/3/2/,
          2.98/3.02/1/}{
          \draw[\c, thick] (\l-.02, \y/5) node[left, blue] {$\i$} to (\r+.02, \y/5);
        }
        \draw[dashed, gray] (.7, 0) -- (3.2, 0);
      \end{scope}
      \draw[dashed, gray] (-.3, 0) -- (3.3, 0);

      \draw[dashed, thick, cyan] (1, 0.2) -- (2.2, -1) (2, 0.2) -- (4.6, -.8);
    \end{tikzpicture}
    \caption{}
  \end{subfigure}  
  \caption{(a) $G'$; (b) $G[M]$; and (c) interval representations for $G'$ (above) and $G[M]$ (below).}
  \label{fig:module-models}
\end{figure}
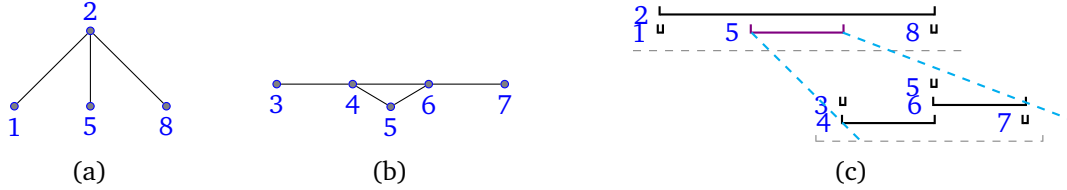

Modules and \textsc{lbfs} are closely related,\footnote{To find modules, both the algorithm of Hsu and Ma~\cite{hsu-99-recognizing-interval-graphs} for chordal graphs and the latest algorithm of Tedder et al.~\cite{tedder-08} for general graphs use \textsc{lbfs} as the workhorse.} and the aforementioned fact can be explained using \textsc{lbfs}.
Let $M$ be a module of a graph $G$.
Since \textsc{lbfs} selects vertices by adjacencies, vertices in $V(G)\setminus M$ have no impact on the ordering of vertices in $M$.
Before the first vertex of $\sigma|_{M}$ is visited, a snapshot either contains $M$ as a subset, or is disjoint from $M$.
The difference of labels of two vertices in $M$, if any, are completely in $M$.
Therefore, $\sigma|_{M}$ is an \textsc{lbfs} ordering of $G[M]$.
(We remark that vertices in $M$ are not necessarily consecutive in $\sigma$.)
On the other hand, only the first vertex $v$ of $\sigma|_{M}$ has impact on the ordering of other vertices in $M$: After the $(\sigma(v))$th iteration, if the smallest element in the difference of two labels is in $M$, then it has to be $v$.  Therefore, $\sigma|_{V(G')}$ is an \textsc{lbfs} ordering of $G'$.  If they are both interval orderings, then $\sigma$ is an interval ordering of $G$.

Most snapshots in an \textsc{lbfs} ordering are not modules.  Since a snapshot $S$ cannot be distinguished by vertices visited before $S$, we may view $S$ as a ``one-sided pseudo-module.''
A very nontrivial and crucial observation of Corneil et al.~\cite{corneil-09-lbfs-strucuture-and-interval-recognition} is that for a chordal graph, any non-module snapshot $S$ of an \textsc{lbfs} ordering $\sigma$ has the similar autonomous property; i.e., $\sigma|_{S}$ is an \textsc{lbfs} ordering of $G[S]$.
Since this is not completely obvious, let us briefly explain why it is true for interval graphs.  For this purpose, we are only concerned with those snapshots that are neither cliques nor modules: Any ordering of a clique $S$ is an \textsc{lbfs} ordering of $G[S]$, and we have discussed modules above.
We say that a vertex $v\not\in S$ \emph{splits} a set $S$, or that $v$ is a \emph{splitter} of $S$, if $\emptyset \subset N(v)\cap S \subset S$.
A set of vertices is a module if and only if it does not have any splitter.
Let $S$ be a snapshot of an \textsc{lbfs} ordering $\sigma$, and let us fix an interval representation $\mathcal{I}$ for $G$.  If $S$ is neither a clique nor a module, then $N_{\sigma}(S)$ is a clique, and $S$ are precisely intervals intersecting $\bigcap_{v\in N_{\sigma}(S)} I(v)$.  The splitters of $S$ form one or two cliques, and their intervals intersect the two ends of $\bigcap_{v\in N_{\sigma}(S)} I(v)$; e.g., the snapshot $S_{\sigma_2}(20)$, which is $\{6, 7, 9, 10, \ldots, 20\}$ of $\sigma_2$, has two splitters, $5$ and $21$.
  If an \textsc{lbfs} ordering of $G$ visits a splitter of $S$ before $S$ itself, e.g., $\sigma_1^{+}$, $\sigma_2^{+}$, or $\sigma_3^{+}$, then $S$ is visited from one end to the other.  Otherwise, no splitter of $S$ can be visited before the vertices in $S$ have been finished.  In either case, the ordering of $S$ is decided by $G[S]$ itself.

Combining these observations on modules and snapshots, we have the following theorem, which consumes all the modules of $G$ and all the snapshots of all possible \textsc{lbfs} orderings of $G$.

\begin{theorem}[The interval \textsc{lbfs} theorem \cite{corneil-09-lbfs-strucuture-and-interval-recognition}]
  \label{thm:chordal-lbfs}
  Let $G$ be an interval graph and let $\sigma$ be an \textsc{lbfs} orderings of $G$.  For any snapshot $S$ and any module $M$ of $G[S]$, the sub-ordering $\sigma|_{M}$ is an \textsc{lbfs} ordering of $G[M]$.
\end{theorem}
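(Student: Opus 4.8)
The plan is to factor the theorem into two independent ``autonomy'' statements and then compose them. Write $w$ for the first vertex of $\sigma|_{S}$, so that $S=S_{\sigma}(w)$ and $N_{\sigma}(S)=N_{\sigma}(w)$ is the common label shared by every vertex of $S$ at the moment $w$ is visited. Since $M\subseteq S$, the restriction operation is transitive, $\sigma|_{M}=(\sigma|_{S})|_{M}$. Hence it suffices to establish two facts: (a) \emph{snapshot autonomy}, that $\sigma|_{S}$ is an \textsc{lbfs} ordering of $G[S]$; and (b) \emph{module autonomy}, that for any graph $H$, any \textsc{lbfs} ordering $\tau$ of $H$, and any module $M$ of $H$, the sub-ordering $\tau|_{M}$ is an \textsc{lbfs} ordering of $H[M]$. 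Fact (b) has already been argued in the discussion preceding the theorem and is a general property of \textsc{lbfs}; applying it to $H=G[S]$ and $\tau=\sigma|_{S}$ (which is legitimate once (a) is in hand) yields that $(\sigma|_{S})|_{M}=\sigma|_{M}$ is an \textsc{lbfs} ordering of $(G[S])[M]=G[M]$, which is exactly the theorem. So all the work is in proving (a).

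For (a) I would first show that a snapshot is \emph{consecutive} in $\sigma$: once \textsc{lbfs} visits $w$, it visits all of $S$ before any vertex outside $S$. The key is that every unvisited vertex of $S$ lexicographically dominates every unvisited vertex outside $S$, and that this domination is preserved. At the moment $w$ is visited, each $u\notin S$ that is still unvisited has a label lexicographically smaller than $N_{\sigma}(S)$, so the minimum of the symmetric difference of its label and $N_{\sigma}(S)$ is some time $m\in N_{\sigma}(S)$; crucially $m$ is earlier than $\sigma(w)$, hence earlier than every time stamp assigned from $w$ onward. I would argue by induction on the steps after $w$ that, as long as only vertices of $S$ have been visited so far, the label of any unvisited $x\in S$ still contains $m$ while that of any unvisited $u\notin S$ still does not, and every time stamp on which the two labels might otherwise differ is larger than $m$. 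Thus $m$ continues to decide the comparison in favour of $x$, so \textsc{lbfs} must again pick a vertex of $S$; the induction closes and $S$ is visited as one contiguous block.

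Granting consecutiveness, I would finish (a) by comparing the evolution of labels inside the block with a genuine \textsc{lbfs} run on $G[S]$. Every vertex of $S$ carries the identical prefix $N_{\sigma}(S)$, which therefore never influences a comparison between two vertices of $S$; the only label increments that matter for such comparisons are the time stamps contributed by visits to vertices of $S$, and these occur at the consecutive times $\sigma(w),\sigma(w)+1,\dots,\sigma(w)+|S|-1$. Since lexicographic comparison depends only on the relative order of time stamps, the selection rule restricted to $S$ is order-isomorphic to the selection rule of an \textsc{lbfs} of $G[S]$ started from $w$ with all labels initially empty. Hence $\sigma|_{S}$ is an \textsc{lbfs} ordering of $G[S]$, establishing (a).

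The main obstacle, and the step demanding the most care, is the label bookkeeping in the consecutiveness argument: one must track that the deciding earlier time $m$ survives in exactly the right labels and is never overtaken by a later time stamp. I expect this to be routine but easy to get wrong if one is cavalier about the lexicographic order. I note that this route never invokes the interval representation---snapshot autonomy is in fact a general feature of \textsc{lbfs} slices---so it sidesteps the case analysis on splitters (the splitters of $S$ forming one or two cliques, visited before or after $S$) sketched in the text before the theorem; that alternative would be the fallback if the purely combinatorial argument turned out to hide a gap.
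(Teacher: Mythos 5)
Your two component facts are both correct as stated, and the composition $\sigma|_{M}=(\sigma|_{S})|_{M}$ is fine; the problem is that you have proved a strictly weaker statement than the one the theorem asserts and the paper later uses. You read ``any snapshot $S$'' as ``any snapshot of $\sigma$ itself,'' but the sentence introducing the theorem (``all the snapshots of all possible \textsc{lbfs} orderings of $G$'') and, decisively, the later applications quantify $S$ over snapshots of \emph{every} \textsc{lbfs} ordering of $G$, while $\sigma$ is a possibly \emph{different} ordering. In the proof of Lemma~\ref{lem:lbfs-star} the theorem is invoked to conclude that $\tau^{+}|_{S}$ is an \textsc{lbfs} ordering of $G[S]$ for a snapshot $S$ of $\pi$, and in Lemma~\ref{lem:correctness-prime} it yields that $v_p$ is an \emph{end} vertex of $G[S]$ (i.e., the last vertex of $\pi|_{S}$) for a snapshot $S$ of $\pi^{+}$; your version, in which $S$ is a snapshot of the very ordering being restricted, can deliver neither. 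Your own closing remark is the red flag: an argument that ``never invokes the interval representation'' cannot be proving the intended statement, because the cross-ordering version is what needs the interval hypothesis.

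Concretely, the consecutiveness step is where the argument dies in the general setting. A snapshot of one ordering need not be a contiguous block of another: in Figure~\ref{fig:example-orderings}, $S=\{3,4,\ldots,20\}$ is a snapshot of $\sigma_3$ (the tie set after vertices $1$ and $2$), yet in $\sigma_3^{+}$ its vertices are interleaved with $21$ and $2$. Once $S$ is not consecutive in $\sigma$, the order-isomorphism between label evolutions has nothing to latch onto, and the deciding element $m$ in your bookkeeping need not exist, since $N_{\sigma}(S)$ is no longer the common label of $S$ in $\sigma$. What rescues the statement for interval graphs is exactly the splitter analysis you set aside as a fallback: the splitters of $S$ form at most two cliques whose intervals touch the two ends of $\bigcap_{v\in N(S)}I(v)$, so if $\sigma$ visits a splitter before exhausting $S$, it is forced to sweep $S$ from the corresponding end, and such a sweep is still an \textsc{lbfs} order of $G[S]$; if no splitter intervenes, your slice argument applies. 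That case analysis is the actual content of the theorem; what you have written is a correct proof only of the routine special case $S=S_{\sigma}(w)$.
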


The execution of \textsc{lbfs} only considers adjacencies in one direction, which are not sufficient to tell whether a snapshot is a module or not.
We want to conduct multiple sweeps of  \textsc{lbfs}, and use information gleaned from the previous runs to decide whether a snapshot $S$ is a module of $G$, and more importantly, find a splitter of $S$ to orient $G[S]$, if $S$ is not a module of $G$.

We use the examples in Figure~\ref{fig:example-orderings} to motivate the main idea of the algorithm.
Suppose that $\sigma$ is an \textsc{lbfs} ordering of an interval graph $G$, and that $S$ is a non-module snapshot of $G$.
In general, the splitters of $S$ can sandwich $S$ from both sides, e.g., $S_{\sigma_2}(20)$.
A nontrivial observation is that if $\sigma$ is an \textsc{lbfs}$^{+}$ ordering,
then all the splitters of $S$ are at the same side of $S$.
In this case, we do not need to worry if $S$ is a clique either.
Now let $v$ be a splitter of non-clique non-module snapshot $S$.
Note that $v$ has to be after $S$ in $\sigma$; otherwise $S$ cannot be a snapshot.
Inspired by the algorithm for unit interval graphs, one may expect that in $\textsc{lbfs}^{+}(G, \sigma)$, vertex $v$ will be visited before $S$, thereby telling $S$ apart.
As shown by $\sigma_1$ and $\sigma_1^{+}$ in Figure~\ref{fig:example-orderings}, this natural idea is not true for interval graphs; note that $\sigma_1$ is $\textsc{lbfs}^{+}(G, \sigma')$ for
\[
  \sigma': 22, 4, 3, 2, 5, 6, 7, 8, 9, 18, 17, 12, 14, 13, 11, 15, 16, 10, 19, 20, 21, 1.
\]
In particular, the highlighted snapshot $\{6, 7, 9, 10, \ldots, 19\}$ of $\sigma_1$ has a unique splitter, vertex $5$, which is after this snapshot in both $\sigma_1$ and $\sigma_1^{+}$.

 Let us put the six orderings in Figure~\ref{fig:example-orderings} under a closer scrutiny.  We use $S^{\star}$ to denote the vertex set $\{6, 7, 9, 10, \ldots, 19\}$, whose only splitter is vertex $5$.
 In all the six orderings, $S^{\star}$ is a snapshot, and its only splitter is after $S^{\star}$.  Note that this is decided by vertices before $S^{\star}$ in $\sigma_i$, and cannot be changed by re-arranging vertices in $S^{\star}$.
 What we need to do is to find a way to 
 use vertex $5$ to orient $S^{\star}$ even if it is after $S^{\star}$.
 Since $G[S^{\star}]$ is prime, vertices $6$ and $19$ are its only two end vertices.
Since $\sigma_i, i = 1, 2, 3$, starts from vertex $1$ and ends at vertex $22$, if $\sigma_i^{+}$, which starts from vertex $22$ and ends at vertex $1$, is a correct interval ordering of $G^{\star}$, then $\sigma_i^{+}$ should be ``from right to left.''
In particular, $\sigma_i^{+}|_{S^{\star}}$ needs to start from $19$, which in turn requires that $\sigma_i|_{S^{\star}}$ start from $6$.
Among the three orderings $\sigma_i$, only $\sigma_3$ satisfies this condition.
We may informally say that $S^{\star}$ is ``anchored from the left'' in $\sigma_3$.
One may also note that the snapshot $\{11, 13,  14,  15, 16\}$, of which the only
splitter is vertex $10$, is also ``anchored from the left'' in $\sigma_3$.
On the other hand, the snapshot $\{3, 4, \ldots, 20\}$ of $\sigma_3$ is  ``anchored from the right'' by vertices $21$ and $22$, which does not make a problem because $\sigma_3^{+}$ starts from vertex $22$, and $\{3, 4, \ldots, 20\}$ is not a snapshot of $\sigma_3^{+}$.  As the reader may easily check, $\sigma_3^{+}$ is actually an interval ordering of $G^{\star}$.

These observations motivated the key concept of Corneil et al.~\cite{corneil-09-lbfs-strucuture-and-interval-recognition} and the main observation of Li and Wu~\cite{li-14-lbfs-interval-recognition}.
Let $\sigma$ be an \textsc{lbfs} ordering of an interval graph $G$, and let $S$ be a snapshot of $\sigma$.
We say that a vertex $v\in S$ is \emph{exposed} (from $S$) in $\sigma$ if $N(v)\setminus (N_{\sigma}(S)\cup S)\ne \emptyset$; i.e., if some neighbor of $v$ is after $S$ in $\sigma$.
By the definition of \textsc{lbfs}, if $v$ is adjacent to some splitter of $S$, then it must be exposed; on the other hand, if $S$ is not a clique, then by the perfect elimination theorem (Theorem~\ref{thm:peo}), every neighbor after $S$ in $\sigma$ is a splitter of $S$.
In the example above, vertex $6$ is the only exposed vertex from the snapshot $S^{\star}$ in $\sigma_i, i= 1, 2, 3$, and vertex $11$ is the only exposed vertex from the snapshot $\{11, 13,  14,  15, 16\}$ in $\sigma_1$ and $\sigma_3$.
We say that an \textsc{lbfs} ordering $\sigma$ of an interval graph is \emph{well-anchored} if for any snapshot $S$ of $\sigma$, the sub-ordering $\sigma|_S$ starts from
\begin{enumerate}[({A}1)]
\item an exposed vertex from $S$ if one exists; or
\item an end vertex of $G[S]$ otherwise.
\end{enumerate}
Note that since $V(G)$ itself is a snapshot and a trivial module, a well-anchored ordering always starts from an end vertex of $G$.
We urge the reader to verify that $\sigma_3$ is indeed well-anchored.
As demonstrated above, 
if a snapshot $S$ of a well-anchored ordering $\sigma$ is again a snapshot in \textsc{lbfs}$^{+}(G, \sigma)$, then the first vertex of $S$ can help us to orient $S$.  As we will show below, if $\sigma$ is well-anchored, then $\textsc{lbfs}^{+}(G, \sigma)$ must be an interval ordering of $G$.

However, it is not immediate clear how to produce in linear time an \textsc{lbfs} ordering that is well-anchored.
Indeed, it is already very challenging to decide whether an exposed vertex exists in a snapshot, which is equivalent to testing whether a set is a module.  This difficulty is manifested by the fact that an \textsc{lbfs} checks adjacencies with visited vertices, while a splitter of $S$ and an exposed vertex from $S$, if they exist, are both unvisited at the moment choosing the first vertex of $S$.  
The crucial observation is that we do \emph{not} really need to know whether an exposed vertex exists, and it suffices to make sure that the first vertex is exposed if any one is.
Let $G$ be an interval graph, and let $\tau^{+}$ be an \textsc{lbfs}$^{+}$ ordering of $G$.  We conduct a new \textsc{lbfs} $\pi$ from the last vertex of $\tau^{+}$.  Whenever a proper snapshot $S$ with more than one vertex is met, we proceed as follows.
In the first case, some vertex $x\in N(S)$ was before $S$ in $\tau^{+}$ but after $S$ in $\pi$.
If $S\subseteq N(x)$, then every vertex in $S$ is exposed.  Otherwise, $x$ is a splitter of $S$, and then by the procedure \textsc{lbfs}, the first vertex of $\sigma|_{S}$ must be adjacent to the first splitter of $S$ in $\tau^{+}$ (not necessarily $x$).  Therefore, the first vertex of $\sigma|_{S}$ is exposed.
In the rest, every vertex $x\in N(S)$ that is after $S$ in $\pi$ is after $S$ in $\tau^{+}$ as well.  (Note that $x$ is necessarily a splitter of $S$ when $S$ is not a clique, but as said we would not bother ourselves with whether this is true.)  It suffices to find a vertex in $S$ that has a neighbor after $S$ in $\tau^{+}$, and in the absences of such a vertex, we take the last vertex of $\tau^{+}|_{S}$.  The procedure is summarized in Figure~\ref{fig:alg-lbfs*}.
It is worth stressing again that the procedure does not calculate the maximal cliques or the modules explicitly.  The reader may verify that $\textsc{lbfs}^{\uparrow}(G, \sigma_1^{+})$ is precisely $\sigma_3$ in Figure~\ref{fig:example-orderings}.

\begin{figure}[h!]
  \centering 
  \begin{tikzpicture}
    \path (0,0) node[text width=.85\textwidth, inner sep=10pt] (a) {
      \begin{minipage}[t!]{\textwidth}
        \begin{tabbing}
        Procedure $\textsc{lbfs}^{\uparrow}(G, \tau^{+})$.
        \\
          Output: \= \kill
        Input: \>A graph $G$, and an \textsc{lbfs}$^{+}$ ordering $\tau^{+}$ of $G$.
        \\
        Output: A well-anchored ordering of $G$ if $G$ is an interval graph.
      \end{tabbing}

      \begin{tabbing}
          AAA\=Aaa\=aaa\=Aaa\=MMMMMAAAAAAAAAAAA\=A \kill
          1.\> renumber the vertices such that $\tau^{+}(v_i) = i$ for all $i = 1, \ldots, n$;
          \\
          2.\> {\bf for} $i = 1, \ldots, n$ {\bf do}
          \\
          2.1.\>\> $S\leftarrow$ unvisited vertices with the lexicographically largest label;
          \\
          2.2.\>\> $v_p\leftarrow$ {the first vertex of  $\tau^{+}|_{S}$};
          \\
          2.3.\>\> $v_q\leftarrow$ {the last vertex of  $\tau^{+}|_{S}$};
          \\
          2.4.\>\> \textbf{if} there exists $\ell < p$ such that $v_\ell \in N(v_p)$ and $\pi(v_{\ell})$ is unset \textbf{then} 
          \\
          \>\>\> $\pi(v_p)\leftarrow i$;
          \\
          2.5.\>\> \textbf{else if} there exist $v_\ell \in S$ and $v_r\in N(v_{\ell})$ such that $r > q$ \textbf{then} 
          \\
          \>\>\> $\pi(v_{\ell})\leftarrow i$;  
          \\
          2.6.\>\> \textbf{else} $\pi(v_q)\leftarrow i$; \>\>\>\codecomment{In this case $S$ is a module of $G$.}
          \\
          2.7.\>\> \textbf{for each} unvisited neighbor of $v$ \textbf{do} 
          \\
          \>\>\> add $i$ to $\mathrm{label}(v)$;
          \\
          3.\> \textbf{return} $\pi$.
        \end{tabbing}
      \end{minipage}
    };
    \draw[draw=gray!60] (a.north west) -- (a.north east) (a.south west) -- (a.south east);
  \end{tikzpicture}
  \caption{The procedure for producing a well-anchored ordering.}
  \label{fig:alg-lbfs*}
\end{figure}

\begin{lemma}[\cite{li-14-lbfs-interval-recognition}]
  \label{lem:lbfs-star}
  Let $\tau^{+}$ be an \textsc{lbfs}$^{+}$ ordering of an interval graph $G$.  Then \textsc{lbfs}$^{\uparrow}(G, \tau^{+})$ is a well-anchored \textsc{lbfs} ordering of $G$.
\end{lemma}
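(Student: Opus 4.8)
The plan is to check the two halves of the statement separately: that $\pi:=\textsc{lbfs}^{\uparrow}(G,\tau^{+})$ is a genuine \textsc{lbfs} ordering, and that it is well-anchored. The first is immediate from the code, since in every iteration the vertex receiving the next number (namely $v_{p}$ in step~2.4, $v_{\ell}$ in step~2.5, or $v_{q}$ in step~2.6) is drawn from the current snapshot $S$, the lexicographically largest label class; thus $\pi$ is a plain \textsc{lbfs} with a particular tie-breaking rule in step~2.2. A useful preliminary observation I would record first is that $\pi$ visits the vertices of any snapshot $S$ consecutively: once $S$ becomes the largest label class, visiting one vertex of $S$ keeps the remaining vertices of $S$ strictly above every vertex outside $S$ in the lexicographic order, by a one-line comparison of labels. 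Hence ``after $S$ in $\pi$'' means exactly ``unvisited and outside $S$''; writing $U$ for this set and noting $N(u)\cap(\{w:w<_{\pi}S\})=N_{\pi}(S)$ for every $u\in S$, a vertex $u\in S$ is exposed in $\pi$ if and only if $N(u)\cap U\neq\emptyset$. Well-anchoredness then reduces to a single statement per snapshot $S$: \emph{the vertex the algorithm selects from $S$ has a neighbor in $U$ whenever some vertex of $S$ does, and is an end vertex of $G[S]$ otherwise.}

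Next I would dispose of the degenerate snapshots. If $S$ is a module of $G$, then by Theorem~\ref{thm:chordal-lbfs} (applied to the trivial snapshot $V(G)$) the sub-ordering $\tau^{+}|_{S}$ is an \textsc{lbfs} ordering of $G[S]$, so its last vertex---the $v_{q}$ returned in step~2.6---is an end vertex of $G[S]$ by definition; and since no vertex outside $S$ splits $S$, any $w\in U\cap N(S)$ is adjacent to \emph{all} of $S$, so either every vertex of $S$ is exposed or none is, and the choice $v_{q}$ is consistent with both clauses. If $S$ is a clique then $G[S]$ is complete, every vertex is an end vertex of $G[S]$, so clause (A2) is automatic and only the ``exposed-if-possible'' clause remains, which is subsumed by the analysis below. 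The substantive case is a snapshot $S$ that is neither a clique nor a module.

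For such an $S$ I would fix an interval representation. By the perfect elimination theorem (Theorem~\ref{thm:peo}), every neighbor of $S$ lying in $U$ is a splitter of $S$, and the splitters form one or two cliques meeting the two ends of $\bigcap_{w\in N_{\pi}(S)}I(w)$. The three branches are designed to locate an exposing splitter through $\tau^{+}$. Step~2.4 fires exactly when the $\tau^{+}$-first vertex $v_{p}$ of $S$ has a $\tau^{+}$-earlier unvisited neighbor; such a neighbor lies outside $S$ (as $v_{p}$ is $\tau^{+}$-minimal in $S$) and is unvisited, hence lies in $U$, so $v_{p}$ is exposed and the choice is correct. When step~2.4 fails I would show that every exposing splitter has $\tau^{+}$-index exceeding $q$; then step~2.5 picks a vertex $v_{\ell}\in S$ with a neighbor $v_{r}$, $r>q$, this $v_{r}$ being an unvisited splitter in $U$, so $v_{\ell}$ is exposed; and if step~2.5 also fails there is no exposing splitter, $S$ is a module of $G$, and step~2.6 returns $v_{q}$, the last vertex of the \textsc{lbfs} ordering $\tau^{+}|_{S}$ of $G[S]$, an end vertex of $G[S]$.

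The heart of the argument---and the step I expect to be the main obstacle---is the claim used in the previous paragraph: \emph{once step~2.4 fails, every $\tau^{+}$-later ($>q$) neighbor of a vertex of $S$ is an as-yet-unvisited splitter of $S$, and, conversely, every exposing splitter is $\tau^{+}$-later than $q$.} This is exactly where the special structure of $\tau^{+}$ must enter. The one-sided-splitter property of \textsc{lbfs}$^{+}$ orderings forces the splitters of $S$ to sit, in the representation, on a single side; and since $\pi$ is launched from the last vertex of $\tau^{+}$, which is an end vertex of $G$ by the end vertex theorem (Theorem~\ref{lem:end-vertex}), $\pi$ sweeps the two ends of $S$ in opposite $\tau^{+}$-orders, so that the splitters already consumed by $\pi$ are precisely those on the $\tau^{+}$-early side and those remaining in $U$ are precisely those on the $\tau^{+}$-late side. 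Making this rigorous requires showing that $\tau^{+}$-position separates the left and right splitter cliques of $S$ (so the two branches cannot both be fed by exposing splitters) and that no common neighbor of $S$ can drift to a $\tau^{+}$-position beyond $q$; both facts rest on combining the perfect elimination theorem with the one-sidedness of \textsc{lbfs}$^{+}$ snapshots, and this bookkeeping is the delicate technical core of the proof.
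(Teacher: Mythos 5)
Your overall architecture matches the paper's: $\pi$ is an \textsc{lbfs} ordering because every branch selects from $S$; step~2.4 firing certifies an exposed vertex; and the remaining work is to align the branch conditions of steps~2.5 and~2.6 with the existence of an exposed vertex. But there is a genuine gap at exactly the point you yourself flag as ``the delicate technical core'': the claim that, once step~2.4 fails, every splitter of $S$ has $\tau^{+}$-index greater than $q$. You assert this follows from a ``one-sided-splitter property of \textsc{lbfs}$^{+}$ orderings'' combined with the end vertex theorem, but that property is only informal motivation in the paper, is nowhere proved, and you explicitly defer the bookkeeping needed to make it rigorous. The claim is indispensable: if some splitter $v_r$ had $p<r<q$, then step~2.5, which only tests for neighbors with $\tau^{+}$-index exceeding $q$, could fail to fire even though $S$ contains exposed vertices (those adjacent to $v_r$, which is necessarily after $S$ in $\pi$), and the procedure would then return the non-exposed vertex $v_q$ in step~2.6, violating condition (A1). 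The paper closes this gap concretely: from the failure of step~2.4 it derives $N_{\tau^{+}}(v_p)\subseteq N_{\pi}(S)$ and hence $S\subseteq S_{\tau^{+}}(v_p)$, so that $v_p$ is an end vertex of $G[S_{\tau^{+}}(v_p)]$ (because $\tau^{+}$ is an \textsc{lbfs}$^{+}$ ordering) and simplicial there; then, assuming a splitter $v_r$ with $r<q$, it exhibits in the non-clique case an induced bull $\{v_p,v_k,v_j,v_r,v_q\}$ in which $v_p$ has degree two, contradicting Lemma~\ref{lem:bull}, and in the clique case a direct snapshot-containment contradiction. Some version of this argument (or an actually proved one-sidedness lemma) must be supplied; appealing to the intuition does not suffice.

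A secondary, fixable gap: in the module case you assume the procedure reaches step~2.6 and returns $v_q$, but you do not rule out step~2.5 firing first. For a non-clique module $S$ one needs the perfect elimination theorem to show that no vertex adjacent to all of $S$ can have $\tau^{+}$-index beyond $q$, so the condition of step~2.5 indeed fails; for a clique module it is harmless since every vertex of $S$ is an end vertex of $G[S]$. Your preliminary observation that the vertices of a snapshot are visited consecutively is correct, and your treatment of step~2.4 agrees with the paper's.
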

\begin{proof}
  Note that at the beginning of the procedure, we have renumbered the vertices such that $\tau^{+}(v_i) = i$ for all $i = 1, \ldots, n$.
  Let $\pi = $\textsc{lbfs}$^{\uparrow}(G, \tau^{+})$.
  It is clear from the procedure that the $i$th vertex of $\pi$, chosen in step~2.4, 2.5, or 2.6 of the $i$th iteration is in $S$, and thus $\pi$ is an \textsc{lbfs} ordering of $G$.  
  We need to show that for every $i = 1, \ldots, n$, the $i$th snapshot of $\pi$ satisfies the conditions in the definition of well-anchored orderings.  Let us fix an $i$ and suppose that $\pi(v_s) = i$, and let us use  $S$ to denote the $i$th snapshot, i.e., $S_{\pi}(v_s)$.

  In the first case, the procedure enters step~2.4.  There is a vertex $v_{\ell} \in N(v_p)\setminus N_{\pi}(v_p)$ with $\ell < p$.  The selection of $p$ and the fact $\ell < p$ imply $v_{\ell}\not\in S$; i.e., $v_{\ell}$ is after $S$ in $\pi$.
  Thus, $v_p$, the first vertex of $\pi|_S$, is exposed from $S$ in $\pi$.
  Henceforth we may assume that the condition of step~2.4 is not true.
  Then
  \begin{equation}
    \label{eq:1}
    N_{\tau^{+}}(v_p)\subseteq N_{\pi}(v_p)\setminus S = N_{\pi}(S) \subseteq N_{\pi}(v_q),
  \end{equation}
  and hence $v_q$ is also in the snapshot $S_{\tau^{+}}(v_p)$ by the rule of \textsc{lbfs}.
  The selection of $p$ and $q$ then implies
    $S\subseteq S_{\tau^{+}}(v_p)$.
    By the interval \textsc{lbfs} theorem (Theorem~\ref{thm:chordal-lbfs}), $\tau^{+}|_{S}$ is an \textsc{lbfs} ordering of $G[S]$, of which $v_q$ is the last vertex.  Thus, $v_q$ is an end vertex of $G[S]$.
    On the other hand, since $\tau^{+}$ is an \textsc{lbfs}$^{+}$ ordering, $v_p$ is an end vertex of $G[S_{\tau^{+}}(v_p)]$, hence simplicial in $G[S_{\tau^{+}}(v_p)]$.  Since $G[S]$ is an induced subgraph of $G[S_{\tau^{+}}(v_p)]$, the vertex $v_p$ is also simplicial in $G[S]$.

    In the second case, $S$ is a module.  If there exists a vertex in $S$ that is exposed from $S$, then any vertex in $S$ is exposed from $S$; thus
$S$ satisfies condition (A1).
Now that no vertex in $S$ is exposed from $S$, we need to make sure that $S$ satisfies condition (A2); i.e., $v_s$ is an end vertex of $G[S]$.  Again, this is trivial when $S$ is a clique, of which every one is an end vertex.  Now suppose that $S$ is not a clique, then a vertex $v_r\in N(S)$ with $r > q$ would contradict the perfect elimination theorem (Theorem~\ref{thm:peo}).  Thus, the procedure will skip step~2.5 and enter step~2.6, and then $v_s = v_q$.
  
In the rest, $S$ is not a module of $G$.  Let $v_r$ be a splitter of $S$; since $S\subseteq S_{\tau^{+}}(v_p)$, we must have $r > p$ by the rule of \textsc{lbfs}.
  We show by contradiction that $r > q$.  Suppose that $r < q$, then $v_r\in S_{\tau^{+}}(v_p)$, and by the definition of snapshots, $N_{\tau^{+}}(v_p) \subseteq N(v_r)$.
 If $S$ is a clique, then from $S\not\subseteq N[v_r]$ and the fact that $v_p$ is simplicial in $G[S]$ it can be inferred that $v_r$ and $v_p$ are not adjacent.  As a result, the $(p+1)$st snapshot of $\tau^{+}$, which is $S_{\tau^{+}}(v_p)\cap N(v_p)$, contains $v_q$ but not $v_r$, a contradiction to $r < q$.

 Now suppose that $S$ is not a clique.
 Since an \textsc{lbfs} ordering $\tau^{+}|_{S}$ of $G[S]$ starts with $v_p$ and ends with $v_q$, both simplicial vertices in $G[S]$, we can conclude that $v_p$ and $v_q$ are not adjacent.
Since $v_r$ is after $S$ in $\pi$, by the rule of \textsc{lbfs}, there exists a vertex $v_j\in N_{\pi}(S)\setminus N(v_r)$.  By the definition of snapshots, $v_j$ is adjacent to both $v_p$ and $v_q$.
Since $v_j$ is adjacent to a proper and nonempty subset of $S_{\tau^{+}}(v_p)$, it is not in $N_{\tau^{+}}(v_p)$.
By the perfect elimination theorem (Theorem~\ref{thm:peo}), we can infer from $v_p v_q \not\in E(G)$ that (i) $v_r$ is not adjacent to both $v_p$ and $v_q$ because $v_p <_{\pi} v_q <_{\pi} v_r$; and (ii) $j< q$ because $v_j$ is adjacent to both $v_p$ and $v_q$.
Thus, $v_j$ is in $S_{\tau^{+}}(v_p)$ as well.
Since $v_p$ is simplicial in $G[S_{\tau^{+}}(v_p)]$ and $v_j$ is adjacent  to $v_p$ but not $v_r$, we have $v_r\not\in N(v_p)$.  Since $r < q$, there must be some $v_k\in N(v_r)\setminus N(v_q)$ with $k< j$; moreover, $v_k$ is in $S_{\tau^{+}}(v_p)$ because both $v_r$ and $v_q$ are both in $S_{\tau^{+}}(v_p)$.  Thus, $p < k < j$.  As a result, both $v_j$ and $v_k$ are adjacent to $v_p$, which further implies that $v_k$ and $v_j$ are adjacent because $v_p$ is simplicial in $G[S_{\tau^{+}}(v_p)]$.  Hence $\{v_p, v_k, v_j, v_r, v_q\}$ induces a bull, with edges $v_k v_j, v_k v_r, v_p v_j, v_q v_j, v_p v_k$.  But $v_p$, an end vertex of $G[S_{\tau^{+}}(v_p)]$, has degree two in this bull, contradicting Lemma~\ref{lem:bull}.

We have thus concluded that $r > q$.
  Since $\emptyset \subset N(v_r)\cap S \subset S$, it follows that $v_r\not\in N_{\pi}(S)$.  Therefore, the condition of step~2.5 must be true.  (One may note that this means the procedure can reach step~2.6 only when $S$ is a module of $G$.)
  Then $v_{s}$ is exposed from $S$ in $\pi$ because it is adjacent to some splitter of $S$.  This concludes the proof.  
\end{proof}

We are ready to present the main algorithm for recognizing interval graphs in Figure~\ref{fig:alg-lbfs*}, which is very straightforward now.  Note that we use consistent symbols in the procedure \textsc{lbfs}$^{\uparrow}$ and the main algorithm.
In this rest of this section, we always use $\tau$, $\tau^{+}$, $\pi$, and $\pi^{+}$ to denote the \textsc{lbfs} orderings of $G$ produced by the first four steps of this algorithm.

\begin{figure}[h!]
  \centering 
  \begin{tikzpicture}
    \path (0,0) node[text width=.85\textwidth, inner sep=10pt] (a) {
      \begin{minipage}[t!]{\textwidth}
        \begin{tabbing}
          Output: \= \kill
          Input: \> A connected graph $G$.
          \\
          Output: Whether $G$ is an interval graph.
        \end{tabbing}        

        \begin{tabbing}
          AAA\=Aaa\=aaa\=Aaa\=MMMMMMAAAAAAAAAAAAA\=A \kill
          1.\> $\tau\leftarrow$ an \textsc{lbfs} ordering of $G$;
          \\
          2.\> $\tau^{+}\leftarrow \textsc{lbfs}^{+}(G, \tau)$;
          \\
          3.\> $\pi\leftarrow \textsc{lbfs}^{\uparrow}(G, \tau^{+})$;
          \\
          4.\> $\pi^{+}\leftarrow \textsc{lbfs}^{+}(G, \pi)$;
          \\
          5.\> \textbf{if} $\pi^{+}$ is an interval ordering of $G$ \textbf{then return} ``yes'';
          \\
          6.\> \textbf{else return} ``no.''
        \end{tabbing}
      \end{minipage}
    };
    \draw[draw=gray!60] (a.north west) -- (a.north east) (a.south west) -- (a.south east);
  \end{tikzpicture}
  \caption{The recognition algorithm for interval graphs.}
  \label{fig:alg-main}
\end{figure}

Before the formal statement of the implication of well-anchored orderings, let us
again use $\sigma_3$ and $\sigma_3^{+}$ in Figure~\ref{fig:example-orderings} for an illustration.  They are reproduced below, with extra marks.
The purpose of producing a well-anchored ordering $\pi$ is to force the exposed vertex to be visited by $\pi^{+}$ as early as possible to ``anchor the set $S$'' correctly in $\pi^{+}$.
The four proper snapshots $S$ of $\sigma_3^{+}$ that are not cliques start from $21$, $19$, $16$, and $14$, as denoted by brackets, and the first non-universal vertex in $\pi|_{S}$ are, respectively, $2$, $6$, $11$, and $11$, as shown in parentheses.

\begin{align*}
  \sigma_3&: 1, (2), 4, 20, 8, (6), 7, 9, 18, 17, 12, (11), 13, 15, 14, 16, 10, 19, 5, 3, 21, 22.
  \\
  \sigma_3^{+}&: 22, 4, [21, 20, 8, 2, [19, 18, 17, 9, 12, [16, 15, 13, [14, 11]], 10, 7, 6], 5, 3], 1.
\end{align*}

\begin{lemma}[\cite{li-14-lbfs-interval-recognition}]
  \label{lem:final-snapshot}
  Let $\pi$ be a well-anchored \textsc{lbfs} ordering of an interval graph $G$, and let $S$ be a non-clique snapshot of $\pi^{+}$.
  If the first vertex of $\pi|_{S}$ is not exposed from $S$ in $\pi^{+}$, then $S$ is a module of $G$.
\end{lemma}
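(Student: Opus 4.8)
The plan is to prove the contrapositive: assuming $S$ is \emph{not} a module of $G$, I will show that the first vertex $w$ of $\pi|_S$ is exposed from $S$ in $\pi^{+}$. Throughout I would fix an interval representation $\mathcal{I}$ of $G$. First I localize the splitters. Because all vertices of $S$ carry the same label $N_{\pi^{+}}(S)$ when the first vertex of $\pi^{+}|_S$ is chosen, every vertex preceding $S$ in $\pi^{+}$ is adjacent either to all of $S$ or to none of it; hence every splitter of $S$ lies after $S$ in $\pi^{+}$. Since $S$ is not a clique, the perfect elimination theorem (Theorem~\ref{thm:peo}) gives the converse, that every neighbour of $S$ occurring after $S$ in $\pi^{+}$ is a splitter. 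Writing $R=\bigcap_{y\in N_{\pi^{+}}(S)}I(y)$, Helly's property for intervals identifies $S$ with the still-unvisited vertices whose interval meets $R$, and each splitter's interval lies strictly to one side of $R$, meeting only those members of $S$ that protrude far enough past the corresponding end of $R$. Using the earlier observation that for an \textsc{lbfs}$^{+}$ ordering all splitters of a snapshot lie on one side, I may assume they are all on the right, so that a vertex $u\in S$ is exposed from $S$ in $\pi^{+}$ precisely when it is adjacent to a splitter.

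The heart of the argument is to transfer the well-anchoredness of $\pi$---a property of snapshots of $\pi$---onto the set $S$. I would first show that $S$ is itself a snapshot of $\pi$ (or, failing that, a module of a snapshot $T$ of $\pi$, to which the interval \textsc{lbfs} theorem (Theorem~\ref{thm:chordal-lbfs}) still applies to make $\pi|_S$ an \textsc{lbfs} ordering of $G[S]$). Granting this, the key point is that the common neighbourhood coincides: since $S$ is not a clique, Theorem~\ref{thm:peo} forbids any common neighbour of $S$ from occurring after $S$, so $N_{\pi}(S)=N_{\pi^{+}}(S)$ equals the full common neighbourhood of $S$, and the splitters---adjacent to some but not all of $S$---must occur after $S$ in both $\pi$ and $\pi^{+}$. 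Consequently a vertex of $S$ is exposed from $S$ in $\pi$ if and only if it is exposed from $S$ in $\pi^{+}$. Now, since $S$ is not a module, it has an exposed vertex; the well-anchored condition (A1) then forces $w$, the first vertex of $\pi|_S$, to be one of them. Therefore $w$ is exposed from $S$ in $\pi^{+}$, which is exactly what the contrapositive demands.

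I expect the main obstacle to be the very first claim of the preceding paragraph: locating $S$ inside the snapshot structure of $\pi$. The set $S$ is handed to us only as a snapshot of $\pi^{+}$, and snapshots of an \textsc{lbfs}$^{+}$ ordering need not be snapshots of its seed, so genuine work is required to certify that $S$ is a snapshot of $\pi$, or a module of one, so that conditions (A1)/(A2) govern $w$. I anticipate that this is where one must invoke the one-sided splitter property of \textsc{lbfs}$^{+}$ together with the bull lemma (Lemma~\ref{lem:bull}) and the simpliciality of end vertices, in the same spirit as the proof of Lemma~\ref{lem:lbfs-star}, to rule out the configurations in which $S$ would be split from both sides or otherwise fail to be autonomous in $\pi$. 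Once $S$ is correctly situated within $\pi$, the equivalence of exposure in $\pi$ and in $\pi^{+}$ and the appeal to well-anchoredness are routine.
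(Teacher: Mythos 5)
Your plan stalls exactly where you predict it will, and that obstacle is not a technicality to be smoothed over later---it is the entire content of the lemma. You want to bring $S$ under the jurisdiction of the well-anchored condition by showing that $S$ is a snapshot of $\pi$, or failing that a module of a snapshot $T$ of $\pi$. The first disjunct is false in general: in the paper's running example, $\{14,11\}$ is a non-clique snapshot of $\sigma_3^{+}$ but not a snapshot of $\sigma_3$ (whose relevant snapshots are $\{11,13,14,15,16\}$, $\{13\}$, $\{15,14\}$, $\ldots$). The fallback disjunct, even when it holds, buys you nothing: well-anchoredness constrains $\pi|_{T}$ only for sets $T$ that are snapshots of $\pi$, so knowing that $S$ is a module of such a $T$ does not let you invoke (A1)/(A2) for $S$ itself, and the analogue of Lemma~\ref{lem:interval-modues} for modules of snapshots that are \emph{not} modules of $G$ is neither available nor obviously true. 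Compounding this, by passing to the contrapositive you discard the one hypothesis that makes $S$ locatable inside the snapshot structure of $\pi$.

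The paper argues in the direct direction precisely to exploit that hypothesis. Writing $v_p$ for the first vertex of $\pi|_{S}$, the assumption that $v_p$ is not exposed from $S$ in $\pi^{+}$ yields $N_{\pi}(v_p)\subseteq N(v_p)\setminus S=N_{\pi^{+}}(S)$, which places all of $S$ inside the genuine snapshot $S_{\pi}(v_p)$ of $\pi$ and shows that $v_p$ is not exposed from $S_{\pi}(v_p)$ in $\pi$. Well-anchoredness is then applied to $S_{\pi}(v_p)$---which may properly contain $S$---forcing $S_{\pi}(v_p)$ to be a module of $G$ with $v_p$ an end vertex of $G[S_{\pi}(v_p)]$. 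The flipping lemma (Lemma~\ref{lem:flipping}), which your sketch never uses, then makes $v_p$ the last vertex of $\pi^{+}|_{S_{\pi}(v_p)}$; hence any vertex after $S$ in $\pi^{+}$ adjacent to $S$ would be adjacent to the whole non-clique module $S_{\pi}(v_p)$, contradicting Theorem~\ref{thm:peo}. That is what finally certifies $N(S)=N_{\pi^{+}}(S)$ and makes $S$ a module. Your remaining steps (splitters lie after a non-clique snapshot; exposure detects splitters) are fine, but without a replacement for this chain the proposal has a genuine gap.
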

\begin{proof}
  For the proof we may renumber the vertices in a way that $\pi(v_i) = i$ for all $i = 1, \ldots, n$.
  Suppose that $v_p$ the first vertex of $\pi|_{S}$.  
  Since $v_p$ is not exposed from $S$ in $\pi^{+}$, we have $N(v_p)\setminus S = N_{\pi^{+}}(S)$.
  By the selection of $p$,
  \begin{equation}
    \label{eq:6}
    N_{\pi}(v_p) \subseteq N(v_p)\setminus S = N_{\pi^{+}}(S),
  \end{equation}
    and thus every vertex in $N_{\pi}(v_p)$ is adjacent to all vertices in $S$ because $S$ is a snapshot of $\pi^{+}$.
    From \eqref{eq:6} we can also conclude $N_{\pi}(v_p)\subseteq N(v_j)$ for every $v_j\in S\setminus \{v_p\}$.
  The selection of $p$ implies $j > p$, and thus $v_j\in S_{\pi}(v_p)$ by the rule of \textsc{lbfs}. 
In other words, $S\subseteq S_{\pi}(v_p)$, and then $S_{\pi}(v_p)$ is not a clique because $S$ is not a clique,.

  By the definition of snapshots, every vertex $v_i\in N(v_p)\setminus S=N_{\pi^{+}}(S)$ is adjacent to all vertices in $S$, and then by the perfect elimination theorem (Theorem~\ref{thm:peo}), $v_i$ cannot be after $S$ in $\pi$.
  Thus, $v_p$ is not exposed from $S_{\pi}(v_p)$ in $\pi$.
  Since $\pi$ is a well-anchored ordering, $S_{\pi}(v_p)$ has to be a module of $G$, and $v_p$ is an end vertex of $G[S_{\pi}(v_p)]$.
    By the flipping lemma (Lemma~\ref{lem:flipping}), $\pi^{+}|_{S_{\pi}(v_p)}$ ends with $v_p$.
    Thus, $v_{i}\not\in S_{\pi}(v_p)$ for every vertex $v_{i}$ with $v_p <_{\pi^{+}} v_i$; since $S_{\pi}(v_p)$ is a module of $G$, if $v_i$ is adjacent to $S\subseteq S_{\pi}(v_p)$, then $v_i$ is adjacent to all the vertices in $S_{\pi}(v_p)$, which contradicts the perfect elimination theorem (Theorem~\ref{thm:peo}).
  Therefore, $N(S)= N_{\pi^{+}}(v_p)$, which means that $S$ is a module of $G$.
\end{proof}

Before the main theorem of this section, we need two more simple properties on modules.  The first property extends the similar statement on general \textsc{lbfs} ordering, and is quite natural.
Recall that \textsc{lbfs}$^{+}$ is deterministic: $\textsc{lbfs}^{+}(G, \sigma)$ is unique for any graph $G$ and any \textsc{lbfs} ordering $\sigma$ of $G$.
\begin{lemma}\label{lem:interval-modues}
  Let $\pi$ be a well-anchored \textsc{lbfs} ordering of an interval graph $G$, and let $\pi^{+} = \textsc{lbfs}^{+}(G, \pi)$.  For any module $M$ of $G$, the sub-ordering $\pi|_{M}$ is a well-anchored ordering of $G[M]$, and $\pi^{+}|_{M}= \textsc{lbfs}^{+}(G, \pi|_{M})$.
\end{lemma}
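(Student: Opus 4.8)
The plan is to base the entire argument on one structural fact, the \emph{autonomy} of $M$ inside an \textsc{lbfs}. Since $M$ is a module, every vertex outside $M$ is adjacent either to all of $M$ or to none of it; hence for two vertices $u,u'\in M$ that are simultaneously unvisited, the labels $\mathrm{label}(u)$ and $\mathrm{label}(u')$ differ only in positions occupied by already-visited vertices of $M$. Consequently the lexicographic comparison of $u$ and $u'$, and in particular the relation ``$u$ and $u'$ lie in a common snapshot,'' is decided entirely inside $M$ and is preserved when we pass to $\pi|_M$. I would record this first, and then read off the basic correspondence: if $v$ is the first vertex of $\pi|_{S'}$ for a snapshot $S'$ of $\pi|_M$ and $S=S_\pi(v)$ is the snapshot of $\pi$ at the step when $v$ is visited, then $S'=S\cap M$, both sides being the set of unvisited vertices of $M$ whose label equals that of $v$. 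The very same computation applies verbatim to $\pi^{+}$.

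For the first assertion I would proceed snapshot by snapshot. By the interval \textsc{lbfs} theorem (Theorem~\ref{thm:chordal-lbfs}, applied with the snapshot $V(G)$) the restriction $\pi|_M$ is already an \textsc{lbfs} ordering of $G[M]$, so only conditions (A1) and (A2) need checking for each of its snapshots $S'$. Fixing $S'$, its first vertex $v$, and the corresponding snapshot $S=S_\pi(v)\supseteq S'$, the point is that $v$ is also the first vertex of $\pi|_S$, so well-anchoredness of $\pi$ controls $v$ relative to $S$; it remains to transfer ``exposed'' and ``end vertex'' between $G[S]$ and $G[M]$. One checks that for vertices of $M$ the phrases ``after $S'$ in $\pi|_M$'' and ``after $S$ in $\pi$'' coincide, so that a vertex $u\in S'$ is exposed from $S'$ in $G[M]$ exactly when it has an $M$-neighbour after $S$; the only discrepancy with being exposed from $S$ in $G$ is an external neighbour, necessarily in $N(M)$, lying after $S$. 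I would then split on whether $M$ is a clique. If $M$ is not a clique, $N(M)$ is a clique (the earlier observation), and every unvisited vertex of $N(M)$ is adjacent to all of $N_\pi(S)\subseteq N[M]$ and hence lies in $S$ rather than after it; thus no external neighbour is after $S$, the two notions of ``exposed'' coincide on $S'$, and (A1) transfers at once. For (A2) I would use that $S'=S\cap M$ is a module of $G[S]$: if no vertex of $S'$ is exposed then no vertex of $S$ is (otherwise (A1) for $\pi$ would expose $v$, contradiction), so $v$ is an end vertex of $G[S]$, and applying Theorem~\ref{thm:chordal-lbfs} to $G[S]$ and its module $S'$ shows that a witnessing ordering restricts to leave $v$ last, i.e.\ an end vertex of $G[S']$. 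If $M$ is a clique, then $G[M]$ and every $G[S']$ are complete, no vertex is ever exposed, and (A2) holds trivially.

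For the second assertion, since \textsc{lbfs}$^{+}$ is deterministic it suffices to show that $\pi^{+}|_M$, which is an \textsc{lbfs} ordering of $G[M]$ by Theorem~\ref{thm:chordal-lbfs}, obeys the \textsc{lbfs}$^{+}$ tie-break with respect to $\pi|_M$: at each of its snapshots it selects the vertex that is last in $\pi|_M$. Arguing inductively, the visited vertices of $M$ agree in the two runs, so by the correspondence the current $G[M]$-snapshot $S'$ equals $\widehat{S}\cap M$, where $\widehat{S}$ is the snapshot of $\pi^{+}$ at the step it picks the relevant vertex $w\in M$. Because $\pi^{+}$ chooses $w$ as the last vertex of $\pi|_{\widehat{S}}$ and $w\in M$, the vertex $w$ is also the last vertex of $\pi|_{\widehat{S}\cap M}=\pi|_{S'}$, which is precisely the \textsc{lbfs}$^{+}$ choice; hence $\pi^{+}|_M$ coincides with $\textsc{lbfs}^{+}(G[M],\pi|_M)$.

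The main obstacle is the transfer of the two anchoring conditions, and specifically ruling out that an external neighbour in $N(M)$ is scheduled after a snapshot; this is exactly what forces the case distinction on whether $M$ is a clique and relies on $N(M)$ being a clique in the non-clique case. The second delicate point is the ``end vertex'' half of (A2), where one must pass from $v$ being an end vertex of the larger graph $G[S]$ to an end vertex of $G[S']$. I expect this to be handled cleanly by the interval \textsc{lbfs} theorem applied to $G[S]$, using that $S'=S\cap M$ is a module of $G[S]$ and that a vertex last in an ordering remains last after restriction to any subset containing it.
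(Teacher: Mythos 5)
Your proof is correct and follows essentially the same route as the paper's: both rest on the snapshot correspondence $S_{\pi|_{M}}(v)=S_{\pi}(v)\cap M$, the fact that $N(M)$ is a clique when $M$ is not one (so that unvisited vertices of $N(M)$ lie inside the current snapshot rather than after it), the well-anchoredness of $\pi$ applied to $S_{\pi}(v)$, and Theorem~\ref{thm:chordal-lbfs}, with the second assertion argued identically via the determinism of \textsc{lbfs}$^{+}$ and the observation that the $\pi$-last vertex of a snapshot, being in $M$, is also the $\pi$-last vertex of its intersection with $M$. The only difference is organizational: you case on whether $M$ is a clique and spell out the transfer of the end-vertex property from $G[S]$ to $G[S\cap M]$ explicitly, whereas the paper cases on whether the restricted snapshot is a module and leaves that last step implicit.
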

\begin{proof}
  By the interval \textsc{lbfs} theorem (Theorem~\ref{thm:chordal-lbfs}), $\pi|_{M}$ is an \textsc{lbfs} ordering of $G[M]$.
  Moreover, $N_{\pi|_{M}}(v) = N_{\pi}(v)\cap M$ and $S_{\pi|_{M}}(v) = S_{\pi}(v)\cap M$ for every vertex $v$ in $M$.
  It is easy to use definition to verify that $S_{\pi|_{M}}(v)$ is a module of $G$ if and only if  $S_{\pi|_{M}}(v)$ is a module of $G[M]$ as well.
  If $S_{\pi|_{M}}(v)$ is a module of $G[M]$, then the fact that the first vertex of $S_{\pi|_{M}}(v)$ is an end vertex of $G[S_{\pi|_{M}}(v)]$ follows from the definition of well-anchored orderings itself.  In the rest, $S_{\pi|_{M}}(v)$ is not a module of $G[M]$.  Note that then $M$ is not a clique.
  By the perfect elimination theorem (Theorem~\ref{thm:peo}), there is no vertex in $N(M)$ that is after $M$ in $\pi$.
  Since $M$ is a module, a splitter of $S_{\pi|_{M}}(v)$ is in $M$, and thus cannot be in $S_{\pi}(v)$.
  Thus, $S_{\pi}(v)$ is not a module of $G$, and by the definition of well-anchored orderings, $v$ is exposed from $S_{\pi}(v)$ in $\pi$.
  Now that $v$ is adjacent to some vertex in $M$ that is after $S_{\pi|_{M}}(v)$ in $\pi$, it is exposed from $S_{\pi|_{M}}(v)$ in $\pi|_{M}$.
  Thus, $\pi|_{M}$ is well-anchored.
  
  For the proof of $\pi^{+}|_{M}= \textsc{lbfs}^{+}(G, \pi|_{M})$, we may renumber the vertices in a way that $\textsc{lbfs}^{+}(G, \pi|_{M}) = \langle v_1, v_2, \ldots, v_{|M|}\rangle$, with vertices in $V(G)\setminus M$ arbitrarily from $v_{|M|+1}$ to $v_{n}$.  
  Suppose for contradiction $\pi^{+}|_{M}\ne \textsc{lbfs}^{+}(G, \pi|_{M})$, and let $i$ be the smallest number such that $\pi^{+}|_{M}(v_j) = i$, with $i \ne j$.  Let $S = S_{\pi^{+}}(v_j)$.  By the selection of $i$ and the definition of modules, the set $S\cap M$ is precisely the vertices in the $i$th snapshot of $\textsc{lbfs}^{+}(G, \pi|_{M})$.  By the rule of \textsc{lbfs}$^{+}$, the first vertex of $\pi^{+}|_{S}$ should be the last vertex of $\pi|_{S}$, which is $v_i$.  We have a contradiction.
\end{proof}

We need a constructive version of Theorem~\ref{thm:prime}.  The following lemma implies Theorem~\ref{thm:prime}.
\begin{lemma}\label{lem:modules}
  Let $G$ be an interval graph with $\ell$ maximal cliques.  Let $K_1$, $K_{2}$, $\ldots$, $K_\ell$ and
  $K_{b(1)}$, $K_{b(2)}$, $\ldots$, $K_{b(\ell)}$
  be two different clique paths of $G$. 
  If there are $p$ and $q$ with $1\le p < q<\ell$ such that $b(p) =1$ and $b(q) =\ell$, 
  then
  
  \[
    \bigcup_{j = b(\ell)}^{\ell} K_j
    \cup \bigcup_{i=q}^{\ell} K_{b(i)}
    \setminus (K_{b(\ell)}\cap K_\ell)
  \]
  is a nontrivial module of $G$.
\end{lemma}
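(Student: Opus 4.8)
The plan is to show that the displayed set, call it $M$, is a nontrivial module by a direct adjacency check that reads each vertex's position in \emph{both} clique paths at once. Write $\mathcal K = \langle K_1,\dots,K_\ell\rangle$ and $\mathcal K' = \langle K_{b(1)},\dots,K_{b(\ell)}\rangle$, and set $m = b(\ell)$, so $K_m$ is the last clique of $\mathcal K'$ while $K_\ell$ is the last clique of $\mathcal K$. For the endpoints I write $\lp{v},\rp{v}$ for the $\mathcal K$-interval of $v$ and $\mathtt{lp}'(v),\mathtt{rp}'(v)$ for its $\mathcal K'$-interval. Since $b$ is a bijection with $b(p)=1$, $b(q)=\ell$ and $p<q<\ell$, I first record $m\notin\{1,\ell\}$ and $\ell\ge 3$. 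I then rewrite the three pieces: $S_1:=\bigcup_{j=m}^{\ell}K_j=\{v:\rp{v}\ge m\}$, $S_2:=\bigcup_{i=q}^{\ell}K_{b(i)}=\{v:\mathtt{rp}'(v)\ge q\}$, and $D:=K_{b(\ell)}\cap K_\ell=K_m\cap K_\ell$. Because $D\subseteq K_m\subseteq S_1$, the complement of $M=(S_1\cup S_2)\setminus D$ splits cleanly into $D$ and $V(G)\setminus(S_1\cup S_2)$, giving exactly two cases to verify.

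In the first case I show that every $x\in D$ is adjacent to all of $M$. Such an $x$ lies in both $K_m$ and $K_\ell$, so its $\mathcal K$-interval covers $[m,\ell]$ and its $\mathcal K'$-interval covers $[q,\ell]$; hence $x$ belongs to every clique indexed in $\mathcal K$ by $S_1$ and in $\mathcal K'$ by $S_2$, and is therefore adjacent to every vertex of $S_1\cup S_2\supseteq M$.

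The heart of the argument is the second case: every $x$ with $\rp{x}<m$ and $\mathtt{rp}'(x)<q$ is adjacent to \emph{no} vertex of $M$. I argue by contradiction, using that any adjacency $x\sim u$ comes from a common maximal clique, which occupies a definite position in each of $\mathcal K$ and $\mathcal K'$; this transfers the adjacency into both coordinate systems simultaneously. First, if $u\in K_\ell$ then the common $\mathcal K$-position is $<m$, so $\lp{u}<m\le\rp{u}=\ell$ forces $u\in K_m$ and hence $u\in D$, contradicting $u\in M$; thus $u\notin K_\ell$. Now if $u\in S_1$, the common $\mathcal K$-position $<m$ together with $\rp{u}\ge m$ puts $u\in K_m=K_{b(\ell)}$, so $\mathtt{rp}'(u)=\ell$, and since $u\notin K_\ell=K_{b(q)}$ we get $\mathtt{lp}'(u)>q$, which is incompatible with the common $\mathcal K'$-position being $\le\mathtt{rp}'(x)<q$. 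If instead $u\in S_2\setminus S_1$, the common $\mathcal K'$-position $<q$ together with $\mathtt{rp}'(u)\ge q$ forces $q\in[\mathtt{lp}'(u),\mathtt{rp}'(u)]$, i.e.\ $u\in K_{b(q)}=K_\ell$, again a contradiction. Either way $x$ is non-adjacent to all of $M$, which completes the module property.

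I expect this two-paths bookkeeping to be the main obstacle: arguing inside a single clique path is genuinely insufficient, since a vertex just left of $K_m$ in $\mathcal K$ can be adjacent to some but not all of $M$, and the contradictions surface only once the position of $u$ is read off in \emph{both} paths and played against the fact that $x$ sits early in both. Finally, for nontriviality I exhibit explicit witnesses. Any vertex of $K_1\setminus K_2$ (nonempty by maximality of $K_1$) is simplicial, sitting at $\mathcal K$-position $1<m$ and $\mathcal K'$-position $p<q$, hence lies outside $S_1\cup S_2$ and certifies $M\ne V(G)$. Taking one vertex from each of $K_m\setminus K_\ell$ and $K_\ell\setminus K_m$, both nonempty because $m\ne\ell$ and these cliques are maximal, yields two distinct vertices of $M$, so $1<|M|<|V(G)|$ and $M$ is a nontrivial module.
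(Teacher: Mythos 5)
Your proof is correct and follows essentially the same route as the paper's: both verify the module property directly by locating, for each relevant pair of vertices, a common maximal clique and reading off its position in both clique paths, with convexity of the clique intervals doing all the work. The only real differences are organizational --- you fix the outside vertex and show it sees all or none of $M$, whereas the paper fixes the inside vertex $v$ and shows $N(v)\setminus M = K_{b(\ell)}\cap K_\ell$ --- and that you supply the nontriviality witnesses explicitly, which the paper leaves implicit.
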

\begin{proof}
  Let
  \[
    J= \{b(\ell), b(\ell)+1, \ldots, \ell\} \cup\{b(q), b(q+1), \ldots, b(\ell)\}
  \]
  and
  $U = \bigcup_{j\in J} K_j \setminus (K_{b(\ell)}\cap K_\ell)$.  We show that $N(v)\setminus U =  K_{b(\ell)}\cap K_\ell$ for every $v\in U$.  Since $v$ is in a clique that is between $K_{b(\ell)}$ and $K_\ell$, in at least one of the two clique paths.  It follows from the definition of clique paths that $K_{b(\ell)}\cap K_\ell\subseteq N(v)$.  It remains to show that $N(v)\setminus U\subseteq  K_{b(\ell)}\cap K_\ell$.
  If $j\in J$ for every maximal clique $K_j$ containing $v$, then
  \[
    N[v] \subseteq \bigcup_{j\in J} K_j = U\cup (K_{b(\ell)}\cap K_\ell),
  \]
  and thus
  $N(v)\setminus U\subseteq  K_{b(\ell)}\cap K_\ell$.  Now suppose that there exists $j\not\in J$ such that $v\in K_j$, then $j < {b(\ell)}$ and there is $k<q$ such that $b(k) =j$.  Then since $v$ can be found in both sides of $K_{b(\ell)}$ in the first clique path, it has to be in $K_{b(\ell)}$ as well.  For the same reason, $v\in K_\ell$.  But then $v$ is in $K_{b(\ell)}\cap K_\ell$, and should not be in $U$, a contradiction.  This concludes the proof.
\end{proof}

To prove the main lemma, we show that $\pi^{+}$ is consistent with some clique path $\mathcal{K}$ of $G$.
The main strategy is that if this is not true, then we can use Lemma~\ref{lem:modules} to identify a nontrivial module of $G$.  This cannot happen for a prime graph.
In general, however, $G$ might have modules.  If a module that is maximal in a certain sense is a counterexample, we work on this module only.\footnote{The reader who is familiar with modular decomposition may notice that we can assume that all the nontrivial modules are consistent with $\sigma$.  For our purpose, we do not need the full power of modular decomposition.}
Otherwise, by virtue of Lemma~\ref{lem:final-snapshot}, we can assume that for each $M$ of these modules, $\sigma|_{M}$ is consistent with the sub-path of $\mathcal{K}$ for $M$.  Again, if $\sigma$ is not consistent with $\mathcal{K}$, then we use Lemma~\ref{lem:modules} to identify a module of $G$, and we end with a similar contradiction.  
Note that for any non-clique module $M$ of $G$, the set $M\setminus U$, while $U$ is the set of universal vertices of $G[M]$, is also a module of $G$.

\begin{lemma}[\cite{li-14-lbfs-interval-recognition}]
  \label{lem:correctness-prime}
  Let $\pi$ be an \textsc{lbfs} ordering of an interval graph $G$.  If $\pi$ is well-anchored, then $\textsc{lbfs}^{+}(G, \pi)$ is an interval ordering of $G$.
\end{lemma}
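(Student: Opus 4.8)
The plan is to prove, via Theorem~\ref{thm:ordering-path}, the equivalent statement that $\pi^{+}=\textsc{lbfs}^{+}(G,\pi)$ is consistent with some clique path of $G$; once this is shown, $\pi^{+}$ is automatically an interval ordering. I would argue by induction on $n=|V(G)|$, taking $G$ to be a minimal counterexample, so that the conclusion already holds for every interval graph on fewer vertices. Since $\pi$ is well-anchored it starts from an end vertex $z$ of $G$, and by the flipping lemma (Lemma~\ref{lem:flipping}) $\pi^{+}$ ends with $z$; using the end vertex theorem (Theorem~\ref{lem:end-vertex}) I would orient every clique path under construction so that $z$ sits simplicially in the last clique, and then aim to show that $\pi^{+}$ enters the maximal cliques in the reverse of that order.

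First I would dispose of modules. For any proper nontrivial module $M$ of $G$, Lemma~\ref{lem:interval-modues} gives that $\pi|_{M}$ is a well-anchored ordering of $G[M]$ and that $\pi^{+}|_{M}=\textsc{lbfs}^{+}(G,\pi|_{M})$; as $|M|<n$, the induction hypothesis yields a clique path of $G[M]$ consistent with $\pi^{+}|_{M}$. Thus every module is internally correct, and the only remaining danger is how $\pi^{+}$ arranges the maximal cliques across the top-level modular structure. The role of Lemma~\ref{lem:final-snapshot} is precisely to confine any misbehaviour to a module: whenever a non-clique snapshot $S$ of $\pi^{+}$ has its first $\pi$-vertex not exposed, $S$ must be a module of $G$ and is therefore already handled by the induction. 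Peeling off the universal vertices of $G$---which lie in every maximal clique and hence do not constrain the arrangement, and whose removal again leaves a module by the remark preceding this lemma (applied to $M=V(G)$)---reduces the problem to a prime skeleton.

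For the prime skeleton I would invoke Corollary~\ref{cor:prime} (equivalently Theorem~\ref{thm:prime} after contracting true twins): $G$ then has a unique clique path $\mathcal{K}=\langle K_{1},\dots,K_{\ell}\rangle$ up to reversal, oriented so that $z$ is simplicial in $K_{\ell}$. Suppose for contradiction that $\pi^{+}$ is not consistent with $\mathcal{K}$. Reading off the order in which $\pi^{+}$ first enters the maximal cliques gives a second arrangement $\langle K_{b(1)},\dots,K_{b(\ell)}\rangle$ of the same cliques; using that $\pi^{+}$ is an \textsc{lbfs}$^{+}$ ordering---so that the splitters of each non-module snapshot all lie on one side---together with the interval \textsc{lbfs} theorem (Theorem~\ref{thm:chordal-lbfs}) and Lemma~\ref{lem:final-snapshot}, I would verify that this second arrangement is itself a clique path of $G$. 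Inconsistency forces it to differ from $\mathcal{K}$ and its reversal, which in turn forces indices $p<q$ with $b(p)=1$ and $b(q)=\ell$. Lemma~\ref{lem:modules} then exhibits a nontrivial module of $G$, contradicting primeness and completing the induction.

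The hard part will be the verification in the prime core that the $\pi^{+}$-order of the maximal cliques really is a clique path and that a consistency violation produces the crossing pattern $b(p)=1$, $b(q)=\ell$ demanded by Lemma~\ref{lem:modules}. This hinges on showing that for a non-clique non-module snapshot $S$ of $\pi^{+}$ the vertices of $S$ are scanned monotonically from the end dictated by the well-anchoring of $\pi$: the well-anchored choice of the first vertex of $\pi|_{S}$ and the single-sidedness of splitters must combine to rule out a ``bull'' configuration, which is exactly where the end-vertex obstruction of Lemma~\ref{lem:bull} and the perfect elimination theorem (Theorem~\ref{thm:peo}) re-enter, as in the proof of Lemma~\ref{lem:lbfs-star}. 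Threading this monotonicity through the module induction, so that failures inside a module and failures of the prime skeleton never interfere, is the main bookkeeping obstacle.
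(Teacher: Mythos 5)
Your proposal assembles the right ingredients (Lemmas~\ref{lem:interval-modues}, \ref{lem:final-snapshot}, \ref{lem:modules}, Theorem~\ref{thm:chordal-lbfs}) and the right global strategy (a violation of consistency must manufacture a forbidden module), but the two steps that carry the actual weight of the proof are missing, and one of them rests on an unjustified reduction. First, the ``prime skeleton'' does not exist in the form you need it. After you handle every proper nontrivial module by induction and peel off universal vertices, $G$ itself is still not prime in general, so you cannot invoke Theorem~\ref{thm:prime} or Corollary~\ref{cor:prime} on $G$; to apply them to a quotient you would have to show that $\pi$ and $\pi^{+}$ descend to a well-anchored \textsc{lbfs} ordering and its \textsc{lbfs}$^{+}$ on the quotient graph, and that consistency of the quotient ordering plus internal consistency of each module reassembles into consistency for $G$ (this last point is exactly the unproven Remark following the lemma in the paper, which the paper deliberately avoids using). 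The paper's proof never takes a quotient: it keeps the (disjoint, consecutively-placed) \emph{major} modules inside $G$, fixes a clique path whose sub-paths on those modules agree with $\pi^{+}$, and the modules eventually produced by Lemma~\ref{lem:modules} contradict that inductive set-up rather than primeness.

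Second, the step you defer as ``the hard part'' is essentially the whole proof, and the route you sketch for it is the wrong one. Showing that the order in which $\pi^{+}$ first enters the maximal cliques is a clique path of $G$ is more or less equivalent to showing $\pi^{+}$ is an interval ordering, so you cannot use it as an intermediate target; moreover, if $\pi^{+}$ is inconsistent, that visiting order need not be a clique path at all, and then Lemma~\ref{lem:modules} (which needs two genuine clique paths) is inapplicable. What the paper does instead is local: it takes a minimal violating pair $v_p <_{\pi^{+}} v_q$ with $\lp{v_p} > \lp{v_q}$, works inside $S = S_{\pi^{+}}(v_p)$ with $X = N_{\pi^{+}}(v_p)$, shows $v_q$ is simplicial and $S$ is not a module, uses Lemma~\ref{lem:final-snapshot} to obtain an exposed first vertex $v_z$ of $\pi|_{S}$, uses the end-vertex property of $v_p$ in $G[S]$ to build a \emph{second} clique path of the subgraph $G[S\cup X]$ with $K_t$ at an end, and only then applies Lemma~\ref{lem:modules} to the two clique paths of $G[S\cup X]$ (in two cases, each yielding a module of $G$ that contradicts the assumption on major modules). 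None of this localization appears in your outline, so as written the proposal is a plan with the decisive argument still to be supplied.
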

\begin{proof}
  We may assume that $G$ is connected and has no universal vertices.    By Lemma~\ref{lem:interval-modues}, if $G$ is not connected, then we may work on its components one by one; if $G$ has a set $U$ of universal vertices, we may consider $G - U$.  We say that a non-clique module $M\subset V(G)$ is \emph{major} if $G[M]$ has no universal vertex and $M$ is maximal in this sense; i.e., the only module of $G$ that does not have universal vertices and properly contains $M$ is $V(G)$ itself.
  Since a major module $M$ is not a clique, $N(M)$ is a clique.
  
  We argue that two major modules are disjoint.  Suppose for contradiction that the intersection of two major modules $M_1$ and $M_2$ is not empty.  By definition neither of $M_1$ and $M_2$ is a subset of the other.  If there is no edge between $M_1\cap M_2$ and other vertices in  $M_1\cup M_2$, then by the definition of modules, there is no edge between $M_1\setminus M_2$ and $M_2\setminus M_1$ either.  In other words, the three parts, $M_1\setminus M_2$ and $M_2\setminus M_1$ and $M_1\cap M_2$, all comprise components of $G[M_1\cup M_2]$.  On the other hand, if a vertex in $V(G)\setminus (M_1\cup M_2)$ is adjacent to any vertex in $M_1$, then it is adjacent to all vertices in $M_1\cap M_2$, hence also to all vertices in $M_2$.  Therefore, $M_1\cup M_2$ is a module of $G$, and $M_1\cup M_2\ne V(G)$ because $G$ is connected.  We have thus a contradiction to that $M_1$ and $M_2$ are major modules.  Now that $M_1\cap M_2$ has at least one neighbor $x$ in $M_1\setminus M_2$, the vertex $x$ is adjacent to all vertices in $M_2$.  This further implies that every vertex in $M_2\setminus M_1$ is adjacent to all vertices in $M_1$.  But then at least one of $M_2\cap M_1$ and $M_2\setminus M_1$ is a clique, hence consisting of universal vertices in $G[M_2]$.  This contradicts that $M_2$ is a major module.  We end with the same contradiction if $x\in M_2\setminus M_1$ is adjacent to $M_1\cap M_2$.

  We argue that in any clique path of $G$, (a) maximal cliques containing vertices in a major module $M$ are consecutive, and (b) they can be replaced by any clique path of the subgraph induced by $N[M]$.  Suppose that $K_{1}$, $K_{2}$, $\ldots$, $K_{\ell}$ is a clique path of $G$, where ${p}$ and ${q}$ are the smallest and, respectively, largest indices such that $K_{p}, K_{q}\subseteq N[M]$.
  By the selection of $p$ and $q$, if $i < p$ or $i > q$, then $K_i\cap M=\emptyset$; otherwise $K_i\setminus M\subseteq N(M)$, and then $K_i\subseteq N[M]$.
  As a result,
  \begin{equation}
    \label{eq:5}
    K_{p}\cap K_{p-1}\subseteq K_{p}\cap K_{q} \text{ and } K_{q}\cap K_{q+1}\subseteq K_{p}\cap K_{q}
  \end{equation}
  when $p >1$ and when $q < \ell$ respectively.
  Now that $M\subseteq \bigcup_{i=p}^{q} K_{i} \setminus (K_{p}\cap K_{q})$, we have $q > p$ because $M$ is not a clique.
  We then show that $\bigcup_{i=p}^{q} K_{i} \setminus (K_{p}\cap K_{q})\subseteq M$; i.e., they are equivalent.  
  It suffices to show that $\bigcup_{i=p}^{q} K_{i} \setminus (K_{p}\cap K_{q})$ is actually a module of $G$ and no vertex is universal in it.
  The first follows from that $K_{p}\cap K_{q}\subseteq N(v) \subseteq \bigcup_{i=p}^{q} K_{i}$ for every vertex $v\in K_{i} \setminus (K_{p}\cap K_{q})$ with $p \le i \le q$.  Any vertex in $\bigcup_{i=p}^{q} K_{i} \setminus (K_{p}\cap K_{q})$ is absent from at least one of $K_{p}$ and $K_{q}$, and hence cannot be universal.
  The fact (b) follows from (a), \eqref{eq:5}, and the definition of clique paths.
  
  Let $\pi^{+} = \textsc{lbfs}^{+}(G, \pi)$.  
  For the rest of the proof we renumber the vertices in $G$ in a way that $\pi^{+}(v_i) = i$ for $i = 1, \ldots, n$.
  We prove by contradiction that there is a clique path of $G$ that is consistent with $\pi^{+}$.
  We may assume that for each major module $M$ of $G$, there exists a clique path $\mathcal{K}_{M}$ that is consistent with $\pi^{+}|_{M}$; otherwise, by Lemma~\ref{lem:interval-modues}, we may focus on $G[M]$ and its orderings $\pi|_{M}$ and $\pi^{+}|_{M}$.
  We fix a clique path $\mathcal{K}$ for $G$ such that for every major module $M$, the sub-path of $\mathcal{K}$ for $G[M]$ is consistent with $\pi^{+}|_{M}$; in particular, the first vertex in $\pi^{+}|_{M}$ is in the first maximal clique of $G[N[M]]$.
  Note that it exists because the two properties on major modules we proved above.  Starting from an arbitrary clique path of $G$, for each major module $M$, we can replace the sub-path for $N[M]$ by one consistent with $\pi^{+}|_{M}$.  

  Suppose for contradiction that $\pi^{+}$ is not consistent with $\mathcal{K}$.
  There exists a pair of vertices $v_p$ and $v_q$ in $G$ such that $p < q$ but $\lp{v_p} > \lp{v_q}$.  Let them be chosen such that $p$ is the minimum and $\rp{v_q}$ is minimum with respect to this fixed $p$.
  We denote by
  \[
    r=\lp{v_q}\text{ and }t = \lp{v_p},
  \]
  and let
  \[
    S = S_{\pi^{+}}(v_p)\text{ and }X=N_{\pi^{+}}(v_p).
  \]
  By the selection of $p$, for all $i = 1, \ldots, p - 1$, we have $\lp{v_i} \le \lp{v_q}$.
  Therefore, $X \subseteq N(v_q)$, and $v_q$ is in the snapshot $S$.
  We argue that $v_q$ is a simplicial vertex of $G$.  Otherwise, by the definition of clique paths, there exists a vertex $v_j$ with $\rp{v_j} = \lp{v_q} < \rp{v_q}$, and then $j < p$ by the selection of $v_p$.  But since $v_j\in N(v_q)\setminus N(v_p)$, \textsc{lbfs} should visit $v_q$ before $v_p$.
  Now that $v_q$ is simplicial,
  from $t > r$ we can conclude that $v_p$ and $v_q$ are not adjacent.  As a consequence, $X$ is a clique.  We also argue that $S$ is not a module of $G$.  If $S$ is a module, then there is a major module $M$ such that (a) all the non-universal vertex of $G[S]$ are in $M$; and (b) a universal vertex of $G[S]$ is either adjacent to all vertices in $M$ or in $M$.  But then the existence of $v_p$ and $v_q$ would contradict the assumption that $\pi^{+}|_{M}$ is consistent with maximal cliques in $N[M]$.

  Let $K_{r'}$ and $K_{t'}$ be the first and, respectively, the last maximal cliques in $\mathcal{K}$ that is a subset of $S\cup X$.
  For each vertex $v_i$ that is adjacent to all vertices in $S$, if $i \ge p$, then $v_i\in S$; if $i < p$, then $v_i\in X$.
  Therefore, no vertex in $V(G)\setminus (S\cup X)$ can be adjacent to all vertices in $S$ and $X$.  
  As a result, every maximal clique $K$ of $G[S\cup X]$, which contains $X$ as a subset, is a maximal clique of $G$.  Moreover, every maximal clique $K$ of $G$ wih $X\subseteq K$ and $K\cap S\ne \emptyset$ is a subset of $S\cup X$.  In particular, $K_r, K_t\subseteq S\cup X$.  Thus, $r' \le r < t \le t'$, and by assumption, every vertex in $K_1$, $\ldots$, $K_{r'-1}$ are before $v_p$ in $\pi^{+}$.
  Another consequence is that $K_{j}\setminus X \subseteq S$ for all $j$ with $r'\le j \le t'$.  Therefore, any vertex after $S$ in $\pi^{+}$ is in a clique $K_{j}$ with $j > t'$.  By the definition of clique paths, we can conclude that all the exposed vertices from $S$ in $\pi^{+}$ belong to $K_{t'}$.  

By the interval \textsc{lbfs} theorem (Theorem~\ref{thm:chordal-lbfs}), $v_{p}$ is an end vertex of $G[S]$.
 Therefore, there exists a clique path $\mathcal{K}'$ for $G[S\cup X]$ in which $K_{t}$ is at one end.  We may assume without loss of generality that $K_{t}$ is the first of $\mathcal{K}'$.
 On the other hand, let $v_z$ be the first vertex of $\pi|_{S}$ that is not universal in $G[{S}]$.
We have seen that $S$ is not a module of $G$, and thus $v_z$ is exposed from $S$ in $\pi^{+}$ by Lemma~\ref{lem:final-snapshot}.
By the interval \textsc{lbfs} theorem (Theorem~\ref{thm:chordal-lbfs}), $\pi|_{S}$ is an \textsc{lbfs} ordering of $G[S]$, and thus the first non-universal vertex $v_z$ and the last vertex $v_p$ cannot be adjacent.  Further, from $v_z\in K_{t'}$ we can conclude that $v_p\not\in K_{t'}$ and $t < t'$.
Thus, in the clique path $K_{r'}$, $K_{r'+1}$, $\ldots$, $K_{r}$, $K_{r+1}$, $\ldots$, $K_{t}$, $\ldots$, $K_{t'}$ for $G[S\cup X]$, the clique $K_{t}$ is not an end.

If $K_{r'}$ is before $K_{t'}$ in $\mathcal{K}'$, then by Lemma~\ref{lem:modules}, there is a module $M$ of $G[S\cup X]$ that contains all vertices in $\bigcup_{j=r'}^{t} K_{j}\setminus (K_{r'}\cap K_{t})$.  From the definition of clique paths it can be inferred that $M$ is disjoint from $K_{t'}$, and thus $M$ does not contain any exposed vertex of $S$.  On the one hand, no splitter of $S$ is adjacent to $M$; on the other hand, there is no splitter of $M$ in $S$.  Thus, $M$ is a module of $G$, contradicting the assumption.
   Therefore,  $K_{r'}$ is after $K_{t'}$ in $\mathcal{K}'$.  Again, by Lemma~\ref{lem:modules}, there is a module $M$ of $G[S\cup X]$ that contains all vertices in $\bigcup_{j=t}^{t'} K_{j}\setminus (K_{t}\cap K_{t'})$.  Note that both $v_p$ and $v_z$ are in $M$, and $X\subseteq K_{t}\cap K_{t'}$.
  If $G[M]$ is connected or if $X\subset K_{t}\cap K_{t'}$, then an \textsc{lbfs} ordering of $G[S]$ from $v_{z}$ cannot end at $v_{p}$: Before vertices in $M$ are exhausted, there is a vertex in $M$ whose label is a proper superset of $K_{t}\cap K_{t'}$, while the  label of any vertex in $(S\cup X)\setminus M$ is a subset of $K_{t}\cap K_{t'}$.  Therefore, $K_{t}\setminus K_{t'}= K_{t}\setminus X$ and $K_{t'}\setminus K_{t} = K_{t'}\setminus X$, and they belong to different components of $G[S]$.  Moreover, $K_{r'}\setminus X$ is in another component of $G[S]$.  Then the components of $G[S]$ containing $K_{r'}\setminus X$ and $K_{t}\setminus X$ form a module $M'$ of $G$, and $M'$ is disjoint from $K_{t'}$.  Thus, $M'$ is a module of $G$.  But this contradicts our assumption.  The proof is now complete.
\end{proof}

It is not difficult to prove the following result.  Since we are not using it, we omit the proof.
\begin{remark}
  Let $\sigma$ be an \textsc{lbfs}$^{+}$ ordering of an interval graph $G$, and let $M$ be a major module of $G$.  If $M$ does not contain the first vertex of $\sigma$, then vertices in $M$ appear consecutive in $\sigma$.
\end{remark}

As a final remark, Theorem~\ref{thm:uig} also implies Theorem 4.30 of \cite{li-14-lbfs-interval-recognition}, namely, the algorithm for interval recognition always returns a correct umbrella ordering for a unit interval graph.

\section{Implementation and concluding remarks}\label{sec:conclusion}

All the procedures are implemented using the idea of partition refinement \cite{habib-00-LBFS-and-partition-refinement}.  We sketch here the steps very briefly.  Similar as \textsc{lbfs}$^{+}$, we can start \textsc{lbfs}$^{\delta}$ with the vertices ordered by their degrees.  The procedure \textsc{lbfs}$^{\uparrow}$ is more complicated.  Recall that at the beginning we renumber the vertices according to $\tau^{+}$.  We maintain an array $d$ that is initialized as $d[i] = |\{v_j\in N(v_i)\mid j < i\}|$; i.e., $d[i]$ is the number of neighbors of $v_i$ that are before $v_i$ in $\tau^{+}$.  We start the partition procedure with the vertices sorted by $\max\{j\mid v_j\in N[v_i]\}$, and for vertices with the same value, sort them in the reversal of their indices.  When a vertex $v_i$ is visited, we decrease $d[j]$ for each unvisited neighbor $v_j\in N(v_i)$.  Then condition of step~2.5 of \textsc{lbfs}$^{\uparrow}$ is satisfied if and only if $d(p) > 0$, and for both steps~2.6 and 2.7, it suffices to take the last vertex in the list for the current snapshot.

Among the known recognition algorithms for interval graphs, the ones by Hsu and Ma~\cite{hsu-99-recognizing-interval-graphs} and Li and Wu~\cite{li-14-lbfs-interval-recognition} are arguably the simplest.  However, they are significantly more complicated than the algorithms of Rose et al.~\cite{rose-76-vertex-elimination} for chordal graphs, not to mention the simpler one in Tarjan and Yannakakis~\cite{tarjan-84-chordal-recognition}.  Since interval graphs are conceptually simpler than chordal graphs, it may not be safe to call either of them the ultimate algorithm for the recognition of interval graphs.  On the other hand, we believe that they are close to the ultimate algorithm, if such an algorithm does exist.  Toward this direction, one step might be better understanding the well-anchored orderings of an interval graph.  In particular, can we produce one with only one or two sweeps of graph searches?

\paragraph{Acknowledgment.}
This work was based on the lecture notes for the \emph{CCF (China Computer Federation) Summer School on Algorithmic Graph Theory}, conducted in August 2020.  I thank Xiaoming Sun for inviting me to teach the summer school, and North Minzu University (through Xiaofeng Wang) for the finance support for the summer school.

\end{document}